\newtheorem{theorem}{Theorem}[section]
\newtheorem{corollary}[theorem]{Corollary}
\newtheorem{proposition}[theorem]{Proposition}
\newtheorem{lemma}[theorem]{Lemma}
\theoremstyle{definition}
\newtheorem{definition}{Definition}[section]
\theoremstyle{remark}
\newtheorem{remark}{Remark}[section]
\newtheorem{myclaim}[theorem]{Claim}
\newcommand{\displaypunct}[1]{\,\,\text{#1}}
\newcommand{\Neci}{Ne\v{c}iporuk\xspace}
\newcommand{\ISA}{{\ensuremath{\text{\sc ISA}}}}
\newcommand{\ED}{{\ensuremath{\text{\sc ED}}}}
\newcommand{\isakl}{{\ensuremath{\ISA_{k,\ell}}}}
\newcommand{\isakk}{{\ensuremath{\ISA_{k,k}}}}
\newcommand{\isan}{{\ensuremath{\ISA_n}}}
\newcommand{\Nsem}{N_{sem}}
\newcommand{\ceiling}[1]{\left\lceil #1 \right\rceil}
\newcommand{\floor}[1]{\left\lfloor #1 \right\rfloor}
\newcommand{\cointerval}[2]{\left[#1, #2\right[}
\newcommand{\ccinterval}[2]{\left[#1, #2\right]}
\newcommand{\card}[1]{\left| #1 \right|}
\newcommand{\function}[5]
{
    #1\colon
    \begin{array}[t]{@{}r@{\,\,}l@{\,\,}l@{}}
    #2 & \to & #3\\
    #4 & \mapsto & #5
    \end{array}
}
\DeclareMathOperator{\N}{\mathbb{N}}
\DeclareMathOperator{\R}{\mathbb{R}}
\newcommand{\M}{\mathbf{M}}
\DeclareMathOperator{\Xix}{\Gamma}
\DeclareMathOperator{\BF}{\mathbf{L}}
\DeclareMathOperator{\BP}{\mathbf{BP}}
\DeclareMathOperator{\NBP}{\mathbf{NBP}}
\DeclareMathOperator{\PBP}{\oplus\mathbf{BP}}
\DeclareMathOperator{\LNBP}{\mathbf{LNBP}}
\DeclareMathOperator{\bin}{bin}
\DeclareMathOperator{\NeciporukSet}{\mathcal{N}}
\DeclareMathOperator{\NeciporukLB}{N}
\DeclareMathOperator{\LNBF}{\mathbf{LL}}
\DeclareMathOperator{\NL}{\mathsf{NL}}
\DeclareMathOperator{\PL}{\mathsf{\oplus{}L}}
\DeclareMathOperator{\NP}{\mathsf{NP}}
\DeclareMathOperator{\Omicron}{O}
\newcommand\set[1]{\{#1\}}
\newcommand\tuple[1]{\overline{#1}}
\definecolor{ngcolor}{rgb}{0,0.6,0.2}
\definecolor{lscolor}{rgb}{0,0.2,0.6}
\definecolor{donecolor}{rgb}{0,0.9,0.3}
\begin{document}

\title
{Nondeterminism and an abstract formulation of Ne\v{c}iporuk's lower
  bound method}

\author{Paul \textsc{Beame}
\thanks{U. Washington, beame@cs.washington.edu. 
Research supported by NSF grants CCF-1217099 and CCF-1524246.}
\and
Nathan \textsc{Grosshans}
\thanks{U. Montr{\'e}al and ENS Cachan, U. Paris-Saclay,
	nathan.grosshans@lsv.ens-cachan.fr.
	This work was supported by grants from Digiteo France.}
\and
Pierre \textsc{McKenzie}
\thanks{U. Montr{\'e}al and ENS Cachan, U. Paris-Saclay,
	mckenzie@iro.umontreal.ca.
Research supported by NSERC of Canada 
and by the ``Chaire DIGITEO, ENS Cachan - \'{E}cole Polytechnique''.}
\and
Luc \textsc{Segoufin}
\thanks{INRIA and ENS Cachan, U. Paris-Saclay, luc.segoufin@inria.fr.}
}

\date{\today}

\maketitle

\thispagestyle{plain} 

\begin{abstract}
  A formulation of \emph{\Neci's lower bound method} slightly more
  inclusive than the usual complexity-measure-specific formulation is
  presented.  Using this general formulation, limitations to lower
  bounds achievable by the method are obtained for several computation
  models, such as branching programs and Boolean formulas having
  access to a sublinear number of nondeterministic bits.  In
  particular, it is shown that any lower bound achievable by the
  method of \Neci\ for the size of nondeterministic 
  and parity branching programs
  is at most $O(n^{3/2}/\log n)$. 
\end{abstract}

\section{Introduction}
\label{sec:introduction}
Relatively few methods exist to prove complexity lower bounds 
in general non-uniform models of computation.
Fifty years ago, \Neci\ wrote his famous two-page note entitled ``A
boolean function''~\cite{ne66}.  That note contained the first super-linear
lower bounds on the size of Boolean formulas over arbitrary bases 
and the size of contact schemes needed to compute some explicit Boolean
function. 

\Neci's~\cite{ne66} method still yields
the best lower bounds known today for explicit functions in a number of
complexity measures.
In particular, there are explicit functions for which \Neci's method yields
lower bounds of $\Omega(n^2/\log n)$ on formula size over an arbitrary basis, 
$\Omega(n^2/\log^2 n)$ on deterministic branching program size and on
contact scheme size, and $\Omega(n^{3/2}/\log n)$ on
nondeterministic branching program size, 
switching-and-rectifier network size, parity branching program size
and span program size. 
All of these are the best known lower bounds for these complexity measures
for any explicit function.
The first two of these lower bounds are contained in \Neci's original
paper~\cite{ne66}.
Pudlak~\cite{pu87} points out that \Neci's method yields the third 
lower bound for {\em nondeterministic} branching program size,
as well as for switching-and-rectifier network size 
program size~\cite{ra91}; 
Karchmer and Wigderson~\cite{kawi93} point out that the Pudlak's
observation extends to parity branching program size and hence also
applies to span program size.

Two simple explicit functions that yield the lower bounds
mentioned above are the Element Distinctness function and the Indirect
Storage Access function.

\Neci's method relies on counting subfunctions induced on blocks in a partition
of the input variables.
It is natural to try to optimize the use of the method, both in terms of how the
bound depends on the numbers of subfunctions for each block in the
partition and whether there are functions other than Element Distinctness and
Indirect Storage Access for which one can prove stronger lower bounds. 

For formula size over arbitrary bases, it is well known~(see, e.g.,~\cite{we87,sav76})
that $\Theta(n^2/\log n)$ is indeed the best lower bound obtainable by
\Neci's method.
Savage~\cite{sav76} also cites Paterson (unpublished) as
improving the constant factor in the bound.
Similarly, $\Theta(n^2/\log^2 n)$ is the best lower bound obtainable by
\Neci's method for deterministic branching program size,
as noted by Wegener~\cite[p. 422]{we87}, who states the claim with a hint
at its proof.  
Moreover, Alon and Zwick~\cite{alzw89} derived the optimal multiplicative
constant in this lower bound as a function of the number of
subfunctions of $f$ in each block.

Since the first two bounds using \Neci's method are asymptotically the
best possible, it is natural to ask whether the third lower bound also
uses \Neci's method in an optimal way.
Jukna seems to be the only one who has explicitly addressed this question.  
In his discussion~\cite[p. 207]{juk01} of the $\Omega(n^{3/2}/\log n)$
lower bound on span program size due to Karchmer and Wigderson~\cite{kawi93} 
and based on \Neci's method, he states that the method ``cannot lead to much
larger lower bounds'' but does not give more details.

In this paper we give a more precise result, namely that the best bound on
nondeterministic and parity branching program
size obtainable by \Neci's method is indeed $\Theta(n^{3/2}/\log n)$.
This automatically applies to span program size and switching-and-rectifier
network size since these measures are upper-bounded by parity and
nondeterministic branching program size, respectively.

In deriving lower bounds using \Neci's method as discussed in the literature,
the major difference between measures is the estimate of the number of functions
of low complexity with respect to each measure.   In most presentations of
\Neci's method, such as those
in~\cite{we87,bosi90,we00}, this bound is determined {\em syntactically},
for example, via a count of the
number of syntactically distinct formulas of a given size
or of syntactically distinct branching programs with a given number of nodes.
However, this is an overcount of the number of different functions since,
for example, many
syntactically distinct branching programs may compute the same function.
If only {\em semantically} distinct objects are counted, one may, in
principle, obtain stronger lower bounds using \Neci's method.   
In the case of deterministic branching programs,
Alon and Zwick~\cite{alzw89} considered the stronger semantic version of
\Neci's bound and showed nonetheless that both semantic and syntactic versions
reach the same asymptotic limit. 
Our formulation of the method will subsume such considerations.

We also use \Neci's method to derive lower bounds on models with limited
nondeterminism, including branching programs and formulas with limited
nondeterminism as well as to prove limitations on the application of \Neci's
method.
In fact, we show that the Indirect Storage Access function with suitable
parameters yields asymptotically best-possible lower bounds in each of these
models.
To do this, though, we first need to define precisely \emph{what} the \Neci
method actually is, independently from any specific complexity measure, so as to
provide a unifying framework in which it makes sense to speak about \emph{the}
\Neci method. Indeed, although \Neci published his original result 50 years ago
and his ``technique'' has been treated in several classical references
(see~\cite{sav76,we87,we00,ju12}), to the best of our knowledge, there did
not exist any abstract, measure-independent, unifying definition of the
``method'' that would encompass all previous applications of the ``method'' and
allow new ones to be carried out easily, or at least in a clear way. The
definition we suggest is in fact an abstract version of the general definition
that was considered by Alon and Zwick in \cite{alzw89} for the case of
deterministic branching programs.
In this abstract framework we can then show, in a generic way, that for any
complexity measure, an upper bound on the complexity, for this measure, of
computing the Indirect Storage Access function with suitable parameters yields
an upper bound on the best lower bound obtainable using \Neci's method (defined
that way). We then deduce some well-known lower bounds and limitations results,
as well as new ones, in this framework.

\paragraph{Summary of results}
\begin{itemize}
    \item
	We prove that \Neci's method as usually interpreted in the literature
	yields no better nondeterministic or parity branching program size lower bounds
	than $\Theta\Bigr(\frac{n^{3/2}}{\log_2 n}\Bigl)$.
    \item
	We provide a formulation of \Neci's lower bound method upstream from any
	specific complexity measure and link the limitations of this method (in
	terms of the best lower bound obtainable) to upper bounds on the
	complexity for computing one specific family of functions (the Indirect
	Storage Access functions).
    \item
	We apply the method to two classical concrete complexity measures and
	some variants: the size of (deterministic) branching programs (BPs) and
	their nondeterministic (NBPs) and limited nondeterministic (LNBPs)
	and parity ($\oplus$BPs) variants, as well as the size of binary formulas (BFs) and their limited
	nondeterministic variant (LNBFs). See Subsection~\ref{sec:all_models}
	for the formal definitions of these models.
	\begin{table}[h]
	\begin{center}
	\begin{tabular}{|l|c|}
	\hline
	\textbf{Complexity measure} &
	\textbf{Best lower bound obtainable}\\
	\hline
	Size of NBPs & 
	$\Theta\Bigl(\frac{n^{3 / 2}}{\log_2 n}\Bigr)$\\
	\hline
	Size of $\oplus$BPs & 
	$\Theta\Bigl(\frac{n^{3 / 2}}{\log_2 n}\Bigr)$\\
	\hline
	Size of LNBPs using $\Delta(n)$ nondeterministic bits &
	$\Theta\Bigl(\frac{n^2}
	 {2^{\Delta(n)} (\log_2(n) - \Delta(n)) \log_2 n }\Bigr) \,\,
	 (\star)$\\
	\hline
	Size of BPs & 
	$\Theta\Bigl(\frac{n^2}{\log^2_2 n}\Bigr)$\\
	\hline
	Size of LNBFs using $\Delta(n)$ nondeterministic bits &
	$\Theta\Bigl(\frac{n^2}{2^{\Delta(n)} \log_2 n}\Bigr) \,\,
	 (\star)$\\
	\hline
	Size of BFs & 
	$\Theta\bigl(\frac{n^2}{\log_2 n}\bigr)$\\
	\hline
	\end{tabular}
	\end{center}
	\caption{Bounds for the Indirect Storage Access function, which this
		 paper shows to be the best lower bounds obtainable by \Neci's
		 method for any function.
		 The star indicates that the true function is more complicated;
		 however, this current formulation holds for all $\Delta$ not
		 too big with respect to $n \mapsto n$.}
	\label{tbl:Summary_lower_bounds}
	\end{table}
\end{itemize}


\section{Preliminaries}
\label{sec:preliminaries}
For $n \in \N$, we denote by $[n]$ the set of integers from $1$ to $n$, using
the convention that $[0] = \emptyset$.  For $k \in \N_{>0}$ and $\tuple{a} \in
\set{0, 1}^k$ we denote by $\bin_k(\tuple{a})$ its associated natural number
with big-endian representation, i.e. $\sum_{i = 1}^k a_i 2^{k - i}$.
Throughout the paper, the binary representation of a
natural number will refer to its big-endian representation.

Let $V$ be a subset of $\N$. We view $\tuple{a}\in\set{0,1}^V$ as a tuple of
bits of length $\card{V}$ and for $i\in V$ we denote by $a_i$ the corresponding
bit of $\tuple{a}$.

\subsection{Boolean functions and subfunctions}

For $n\in\N$, an $n$-ary \emph{Boolean function over $V$} is a function $f\colon
\set{0, 1}^V \to \set{0, 1}$, where $V\subseteq \N$ and $\card{V}=n$.
When $V$ is not explicit nor implicit we assume that $V=[n]$.
A \emph{family of Boolean functions} is an indexed family $F = \set{f_i}_{i \in
  I}$ where $I \subseteq \N$ and such that for all $i \in I$, $f_i$ is a
Boolean function of arity~$i$.

Subfunctions will play a key role in the lower bound method studied in this
paper, the intuition being that the more different subfunctions a given Boolean
function has, the more difficult it is to compute it.

Let $f$ be an $n$-ary Boolean function.
For any $V \subseteq [n]$ and any $\rho \in \set{0, 1}^{[n] \setminus V}$, we
will denote by $f|_\rho$ the \emph{subfunction of $f$ on $V$ induced by the
partial assignment $\rho$ on $[n] \setminus V$}, that is the function
$f|_\rho\colon \set{0, 1}^{{V}} \to \set{0, 1}$ such that for all
$\tuple{y} \in \set{0, 1}^{{V}}$, we have
$f|_\rho(\tuple{y}) = f(\tuple{x})$ where $x_i = y_i$
for all $i \in V$ and $x_i = \rho(i)$ for all $i \in [n] \setminus V$.
We will also denote by $r_V(f)$ the \emph{total number of subfunctions of $f$
on $V$}, i.e. the cardinality of the set
$s_V(f)=\{f|_\rho \mid \rho \in \{0, 1\}^{[n] \setminus V}\}$.

The following easy lemma gives an upper bound on the total number of
subfunctions of a given Boolean function on a certain subset of input variable
indices.

\begin{lemma}
\label{lem:Number_of_subfunctions_upper_bound}
Let $f\colon \set{0, 1}^n \to \set{0, 1}$ be a Boolean function.
For any $V \subseteq~[n]$,
$r_V(f) \leq \min\set{2^{2^{\card{V}}}, 2^{n - \card{V}}}$.
\end{lemma}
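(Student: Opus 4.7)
The plan is to observe that $r_V(f)$ is the size of a set that can be bounded in two independent ways, one from the ``domain side'' (the set of partial assignments inducing the subfunctions) and one from the ``codomain side'' (the set of all Boolean functions on $V$), and then take the minimum.

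First, by the very definition of $s_V(f)$, the map sending a partial assignment $\rho \in \{0,1\}^{[n]\setminus V}$ to the subfunction $f|_\rho$ is a surjection from $\{0,1\}^{[n]\setminus V}$ onto $s_V(f)$. Hence $r_V(f) = \card{s_V(f)} \leq \card{\{0,1\}^{[n]\setminus V}} = 2^{n-\card{V}}$.

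Second, each $f|_\rho$ is an element of $s_V(f)$, which by definition is a subset of the set of all Boolean functions of the form $g\colon \{0,1\}^V \to \{0,1\}$. Since the set of such functions has cardinality $2^{\card{\{0,1\}^V}} = 2^{2^{\card{V}}}$, we get $r_V(f) \leq 2^{2^{\card{V}}}$.

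Combining the two inequalities yields the claimed bound. There is no real obstacle here; the statement is essentially a pair of counting observations and the proof should be only a few lines.
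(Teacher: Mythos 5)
Your proof is correct and follows exactly the same route as the paper's: bounding $r_V(f)$ from above both by the number of partial assignments $2^{n-\card{V}}$ (since $\rho \mapsto f|_\rho$ surjects onto $s_V(f)$) and by the number of Boolean functions on $V$, namely $2^{2^{\card{V}}}$, then taking the minimum. No issues.
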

\begin{proof}
Since $r_V(f)$ counts the total number of subfunctions
$f|_\rho\colon \set{0, 1}^{{V}} \to \set{0, 1}$ of $f$ on $V$ induced by a
partial assignment $\rho \in \set{0, 1}^{[n] \setminus V}$, it is at most the
total number of Boolean functions on $\card{V}$ variables (i.e.
$2^{2^{\card{V}}}$) and the total number of assignments to $n - \card{V}$
variables (i.e. $2^{n - \card{V}}$).
\end{proof}

Let $f\colon \set{0, 1}^V \to \set{0, 1}$ be a Boolean function and $i \in
V$. We say that \emph{$f$ depends on its $i$\textsuperscript{th} variable} if
there exist $\tuple{a}, \tuple{a}' \in \set{0, 1}^V$ that differ only in the
bit corresponding to $i$ such that $f(\tuple{a}) \neq f(\tuple{a}')$
(definition based on \cite{ju12}). In particular, if $f$ does not depend on a
set $W$ of variables and $\tuple{a}, \tuple{a}'\in \set{0, 1}^V$ differ only on
bits whose positions are in $W$, then $f(\tuple{a})=f(\tuple{a}')$. The
following proposition shows that variables on which $f$ does not depend do
not affect its number of subfunctions.

\begin{proposition}
\label{lem:Number_of_subfunctions_with_variable_independence}
Let $f\colon \set{0, 1}^n \to \set{0, 1}$ be a Boolean function.
Let $V \subseteq [n]$ and $W \subseteq V$ such that for all
$i \in W$, $f$ does not depend on $x_i$.
Then for $V'=V\setminus W$, $r_V(f) = r_{V'}(f)$.
\end{proposition}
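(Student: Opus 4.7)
The plan is to build an explicit bijection $\Phi\colon s_V(f)\to s_{V'}(f)$ and conclude $r_V(f)=r_{V'}(f)$ by comparing cardinalities. The setup comes from the disjoint decomposition $[n]\setminus V'=([n]\setminus V)\cup W$, which writes each $\rho'\in\set{0,1}^{[n]\setminus V'}$ uniquely as $\rho'=(\rho,\sigma)$ with $\rho\in\set{0,1}^{[n]\setminus V}$ and $\sigma\in\set{0,1}^{W}$, and for which $f|_{\rho'}=(f|_\rho)|_\sigma$.

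The first step would be to lift the hypothesis that $f$ does not depend on any variable in $W$ to each subfunction $g=f|_\rho\in s_V(f)$. Concretely, if two tuples $\tuple{y},\tuple{y}'\in\set{0,1}^V$ that differ only on $W$-coordinates satisfied $g(\tuple{y})\neq g(\tuple{y}')$, then the full inputs obtained by completing them with $\rho$ would also differ only on $W$-coordinates while disagreeing on $f$, contradicting the hypothesis recalled right before the statement. Hence every $g\in s_V(f)$ is independent of the $W$-coordinates as well.

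Next, fix any $\sigma_0\in\set{0,1}^{W}$ and define $\Phi(g)=g|_{\sigma_0}$. This lands in $s_{V'}(f)$ because $(f|_\rho)|_{\sigma_0}=f|_{(\rho,\sigma_0)}$, and by the previous paragraph the value of $\Phi(g)$ is insensitive to the choice of $\sigma_0$. Surjectivity: any $h=f|_{(\rho,\sigma)}\in s_{V'}(f)$ equals $(f|_\rho)|_\sigma=(f|_\rho)|_{\sigma_0}=\Phi(f|_\rho)$, again using the $W$-independence of $f|_\rho$. Injectivity: if $\Phi(g_1)=\Phi(g_2)$ then $g_1$ and $g_2$ agree on every input of the form $(\tuple{y}',\sigma_0)$; since neither depends on the $W$-coordinates, they agree on all of $\set{0,1}^V$, so $g_1=g_2$.

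The only delicate point is the first step—transferring the semantic non-dependence from $f$ to its arbitrary subfunctions on $V$—but this is immediate from the semantic definition of independence given just before the statement. Everything else is a routine unfolding of the notation.
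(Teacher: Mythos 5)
Your proof is correct and follows essentially the same route as the paper's: both construct a bijection between $s_V(f)$ and $s_{V'}(f)$ by further restricting each subfunction with a fixed assignment to $W$ (the paper uses $0^W$ where you use an arbitrary $\sigma_0$). Your preliminary observation that every $g\in s_V(f)$ inherits the $W$-independence of $f$ makes well-definedness, surjectivity and injectivity all immediate, which is a slightly cleaner packaging of the same argument.
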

\begin{proof}
We a give bijection from $s_V(f)$ to $s_{V'}(f)$.
For $g\in s_{V}(f)$, say $g=f|_\rho$ for some $\rho\in \{0,1\}^{[n]\setminus V}$,
define $\psi(g)\in s_{V'}(f)$ by
$\psi(g)=g|_\zeta=f|_{\rho\zeta}$ where $\zeta=0^W$ assigns $0$ to all
elements of $W$.
By assumption, for all $i \in W$, $f$ does not depend on $x_i$ so
for all $\zeta'\in\{0,1\}^W$, $f|_{\rho\zeta'}=f|_{\rho\zeta}$; moreover, it is
also easy to see that $f|_{\rho\zeta} = f|_{\rho'\zeta}$ for any
$\rho' \in \{0, 1\}^{[n] \setminus V}$ such that $g = f|_{\rho'}$.
Now let $h'\in s_{V'}(f)$. By definition, there is some
$\rho'\in \{0,1\}^{[n]\setminus V}$ and $\zeta'\in \{0,1\}^W$ such that
$h'=f|_{\rho'\zeta'}$ and by assumption, the latter equals
$f|_{\rho'\zeta}=\psi(g')$ for $g'=f|_{\rho'}$ and hence
$\psi$ is surjective.  

Similarly, for $g, g' \in s_V(f)$ such that $g \neq g'$, we have that there
exists $\tuple{a} \in \{0, 1\}^V$ verifying $g(\tuple{a}) \neq g'(\tuple{a})$.
Let $\rho, \rho' \in \{0, 1\}^V$ such that $g = f|_\rho$ and $g' = f|_{\rho'}$,
and let $\zeta' \in \{0, 1\}^W$ such that $\zeta'(i) = a_i$ for all $i \in W$.
Then
$\psi(g) = f|_{\rho\zeta} = f|_{\rho\zeta'} \neq
 f|_{\rho'\zeta'} = f|_{\rho'\zeta} = \psi(g')$,
so $\psi$ is 1-1 and hence $r_V(f)=\card{s_V(f)}=\card{s_{V'}(f)}=r_{V'}(f)$.
\end{proof}

It will often also be useful to enlarge the size of the domain of a
Boolean function by adding additional input variables on which the function
does not depend in order to obtain complete families of Boolean functions even
if we cannot build a specific Boolean function for each possible input
size. 

\begin{lemma}
\label{lem:Boolean_function_enlargement}
Let $f\colon \set{0, 1}^n \to \set{0, 1}$ be a Boolean function.
Let $n' \in \N$ such that $n' > n$ and let
$f'\colon \set{0, 1}^{n'} \to \set{0, 1}$ be the Boolean function defined by
$f'(\tuple{a}) = f(a_1, \ldots, a_n)$ for all $\tuple{a} \in \{0, 1\}^{n'}$.
Then, for any $V \subseteq [n]$, $r_V(f') = r_V(f)$.
\end{lemma}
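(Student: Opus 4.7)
The plan is to show that the sets of subfunctions $s_V(f')$ and $s_V(f)$ are identical as subsets of the collection of functions $\set{0,1}^V \to \set{0,1}$, from which $r_V(f') = \card{s_V(f')} = \card{s_V(f)} = r_V(f)$ is immediate. Since $V \subseteq [n] \subseteq [n']$, the index set $[n'] \setminus V$ splits naturally into $[n] \setminus V$ on one side and the extra coordinates $[n'] \setminus [n]$ on the other, and only the first part can influence the value of $f'$ by definition.

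For the inclusion $s_V(f') \subseteq s_V(f)$, I would take an arbitrary partial assignment $\rho' \in \set{0,1}^{[n'] \setminus V}$, let $\rho \in \set{0,1}^{[n] \setminus V}$ be its restriction to $[n] \setminus V$, and check directly from the definition of $f'$ that $f'|_{\rho'}(\tuple{y}) = f|_\rho(\tuple{y})$ for every $\tuple{y} \in \set{0,1}^V$ (both sides equal $f(\tuple{x})$ with $x_i = y_i$ for $i \in V$ and $x_i = \rho(i)$ for $i \in [n] \setminus V$). For the reverse inclusion, I would take any $\rho \in \set{0,1}^{[n] \setminus V}$, extend it arbitrarily (for instance with zeros on $[n'] \setminus [n]$) to some $\rho' \in \set{0,1}^{[n'] \setminus V}$, and apply the same computation to conclude $f|_\rho = f'|_{\rho'} \in s_V(f')$.

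No real obstacle is anticipated; the content of the lemma is essentially that dummy variables outside $V$ only enlarge the set of candidate partial assignments $\rho'$ without generating any new restriction of $f'$ to $\set{0,1}^V$. The only care needed is the index-set bookkeeping ensuring that the partial assignment $\rho'$ and its restriction $\rho$ really do induce the same function on $V$. One could alternatively try to invoke Proposition~\ref{lem:Number_of_subfunctions_with_variable_independence}, but since that proposition removes dummy variables from inside the block $V$ while here the dummy variables lie outside $V$, the direct argument above is cleaner.
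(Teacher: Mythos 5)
Your proof is correct and is essentially the paper's own argument: the paper simply asserts that $s_V(f') = s_V(f)$ is immediate from the definitions, and your two inclusions (restricting $\rho'$ to $[n]\setminus V$, and extending $\rho$ arbitrarily to $[n']\setminus V$) are exactly the verification behind that assertion.
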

\begin{proof}
It is immediate by definition that for $V\subseteq [n]$, $s_V(f')=s_V(f)$.
\end{proof}

\subsection{Hard functions: Indirect Storage Access functions and Element Distinctness}

In this section we define two natural families of functions for which
\Neci's method is known to produce asymptotically optimal lower bounds for
some complexity measures.
The first is the Element Distinctness function.

\begin{definition}
The \emph{Element Distinctness} function $\ED_{N,m}$ for $m \ge N$
is the Boolean function that takes as input $n=N\cdot \lceil \log_2 m\rceil$
bits representing $N$ integers in $[m]$
(outputting $0$ on illegal inputs)
and outputs 1 iff all the $N$ integers
have distinct values.
When $n=2k\cdot 2^k$, we write $\ED_n$ for the function
$\ED_{N,N^2}$ where $N=2^k$.
\end{definition}

The second is the family of Indirect Storage Access functions.
These will turn out to be useful in a broader range of applications
than the Element Distinctness function
and we will see that we can characterize \Neci's method in terms of the bounds
it achieves for these functions.

Indirect Storage Access functions seem to have been originally defined by Paul
in~\cite{pau78} to give an example of a family of Boolean functions for which we
have a trade-off between the minimum sizes of Boolean binary formulas computing
them and the minimum sizes of Boolean binary circuits computing them.  

\begin{definition}
The \emph{Indirect Storage Access} function for $k, \ell \in \N_{>0}$,
denoted $$\isakl\colon \{0, 1\}^{k + 2^k \ell + 2^\ell} \to \{0, 1\}$$ is such that for all $\tuple{a}
\in \{0, 1\}^{k + 2^k \ell + 2^\ell}$, $\isakl(\tuple{a}) = a_{\gamma(\tuple{a})}$ where
$\gamma$ is computed from $\tuple{a}$ as follows:

Let $\alpha(\tuple{a})$ be the number represented in binary by the first $k$
bits of $\tuple{a}$. Let $\beta(\tuple{a})$ be the number represented in binary
by the sequence of $\ell$ bits of $\tuple{a}$ starting at position
$k+ 1 + \ell \alpha(\tuple{a})$. Then $\gamma(\tuple{a})$ is the bit of
$\tuple{a}$ at position $k+ 1 + \ell 2^k + \beta(\tuple{a})$.
Informally speaking, $\isakl$ is just a function reading a bit using two levels
of addressing: a $k$-bit pointer selects an $\ell$-bit pointer (among $2^k$ such
pointers) that picks one bit from a $2^\ell$-bit data string.
\end{definition}

It is known that both these families of Boolean functions yield the
asymptotically strongest lower
bounds obtainable using \Neci's method
for Boolean formula size over arbitrary binary bases, and
deterministic branching program complexity~\cite{we87,bosi90,we00}.
The essence of the argument in each case is the existence of a good partition
with a large count of the number of subfunctions on the variables of the 
partition.   

\begin{lemma}
\label{ED-subfunctions-lemma}
Let $n=2k\cdot 2^k>0$ for $k\in \N$. There is a partition of $[n]$ into blocks
$V_1,\ldots, V_N$ for $N=2^k$  such that for all $i\in [N]$, $\card{V_i}=2k$ and
$r_{V_i}(\ED_n)=\binom{N^2}{N-1}+1\ge N^{N-1}=2^{k(2^k-1)}$.
\end{lemma}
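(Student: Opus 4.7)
The plan is to use the natural ``per-integer'' partition and then carry out a careful subfunction count.

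First, I would set $N=2^k$ and view the input of $\ED_n$ as $N$ integers $z_1,\dots,z_N\in[N^2]$, each encoded by $2k=\log_2(N^2)$ bits; I then define the partition by letting $V_i$ be the $2k$ bit positions that encode $z_i$, so $|V_i|=2k$ as required. Fix $i\in[N]$. A partial assignment $\rho$ on $[n]\setminus V_i$ is exactly a choice of values $v_1,\dots,v_{i-1},v_{i+1},\dots,v_N\in[N^2]$, and the subfunction $\ED_n|_\rho$ is a function of $z_i$ alone, outputting $1$ iff the $v_j$'s are pairwise distinct \emph{and} $z_i\notin S_\rho:=\{v_j:j\neq i\}$.

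The next step is to classify these subfunctions. If some $v_j=v_{j'}$ for $j\neq j'$, then $\ED_n|_\rho$ is identically $0$; every such $\rho$ yields the same (constant-$0$) subfunction. Otherwise the $v_j$'s form a set $S_\rho\subseteq[N^2]$ of size exactly $N-1$, and $\ED_n|_\rho$ is the indicator of $z_i\notin S_\rho$. Two such subfunctions, for sets $S\neq S'$, differ on any $z_i$ in the symmetric difference of $S$ and $S'$, so they are distinct; moreover, since $N-1<N^2$, none of these indicators is the constant-$0$ function. Conversely every $(N-1)$-subset $S\subseteq[N^2]$ is realized by picking the $v_j$'s to enumerate $S$, and the constant-$0$ subfunction is realized by setting $v_1=\dots=v_{i-1}$. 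This establishes the exact identity
\[
r_{V_i}(\ED_n)=\binom{N^2}{N-1}+1.
\]

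For the inequality, I would use the elementary bound $\binom{a}{b}=\prod_{j=0}^{b-1}\frac{a-j}{b-j}\geq(a/b)^b$, valid whenever $a\geq b$, which follows term-by-term from $\frac{a-j}{b-j}\geq\frac{a}{b}$ for $a\geq b$ and $0\leq j<b$. Applying it with $a=N^2$, $b=N-1$ gives
\[
\binom{N^2}{N-1}\geq\left(\frac{N^2}{N-1}\right)^{N-1}\geq N^{N-1}=2^{k(2^k-1)},
\]
using $N^2/(N-1)\geq N$. Combined with the count, this yields the claimed lower bound.

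I do not expect any serious obstacle: the partition is forced by the structure of the encoding, and the only place where one must be slightly careful is verifying that the constant-$0$ subfunction is genuinely new (not already among the $\binom{N^2}{N-1}$ indicator functions), which is immediate from $N-1<N^2$. The binomial lower bound is then a one-line estimate.
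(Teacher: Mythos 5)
Your proof is correct and follows essentially the same route as the paper's: the same per-integer partition, the same classification of subfunctions into the $\binom{N^2}{N-1}$ indicators of $(N-1)$-element subsets plus the constant-$0$ function, with your write-up merely spelling out the pairwise distinctness of the indicators and the estimate $\binom{N^2}{N-1}\ge\bigl(N^2/(N-1)\bigr)^{N-1}\ge N^{N-1}$ that the paper leaves implicit. (The only caveat, shared with the paper's own proof, is the degenerate case $k=1$, $N=2$: there the constant-$0$ subfunction cannot be realized because there is only one other block and hence no colliding assignment, so the exact count is $\binom{N^2}{N-1}$ rather than $\binom{N^2}{N-1}+1$; the inequality, which is all that is used downstream, is unaffected.)
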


\begin{proof}
Each block in the partition $V_1,\ldots, V_N$ corresponds to the bits of one of
the $N$ numbers for the $\ED_{N,N^2}$ problem.  Observe that for each
assignment of distinct values to the $N-1$ other blocks, the subfunction
induced on the $i$-th block must be different, since precisely those
$N-1$ values must be avoided for the function to have value 1.  There are
$\binom{N^2}{N-1}$ possible choices of those $N-1$ distinct values; for other
assignments, we get the constant $0$ function.
\end{proof}

We now see that for different choices of $k$ and $\ell$, the function
family $\isakl$ provides similar bounds but a more flexible range of parameters
to obtain partitions of different sizes.

\begin{definition}
In the definition of $\isakl$, we will refer to $\alpha(\tuple{a})$ and $\beta(\tuple{a})$ as to the
\emph{primary} and \emph{secondary} pointers of the \isakl\ instance
$\tuple{a}$. The bits of the secondary pointer will be denoted
$\sec_1,\ldots,\sec_\ell$, and more generally the bits of the $p$-th
secondary pointer among the $2^k$ such pointers in the instance at hand
will be denoted $\sec[p]_1,\ldots,\sec[p]_\ell$ for $p\in [2^k]$.
The $2^\ell$ data bits will be referred to as
\emph{Data} and \emph{Data}[$b_1,\ldots,b_\ell$] will stand
for the data bit at position $\bin_\ell(b_1,\ldots,b_\ell) + 1$.
When the context is clear, bits of $\tuple{a}$ will also be viewed as
input variable indices.
\end{definition}

We now see that we can partition the set of input variables of $\isakl$ in such
a way that the number of induced subfunctions is identical and maximal for all
elements of the partition but one: this is formalized in the following lemma.

\begin{lemma}
\label{lem:ISA_function_partitioning}
For every $k, \ell \in \N_{>0}$, there exists a partition $V_1, \ldots, V_{2^k}, U$
of $[k + 2^k \ell + 2^\ell]$ such that $\card{V_i} = \ell$ and
$r_{V_i}(\isakl) = 2^{2^\ell}$ for all $i \in [2^k]$.
\end{lemma}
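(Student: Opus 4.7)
The plan is to exhibit an explicit natural partition and show that for each block we can already realize all $2^{2^\ell}$ Boolean functions on $\ell$ variables as subfunctions (which by Lemma~\ref{lem:Number_of_subfunctions_upper_bound} is the maximum possible, so the equality follows).

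First I would define the partition by taking, for each $i \in [2^k]$, the block $V_i$ to be the set of $\ell$ variable indices corresponding to the $i$-th secondary pointer, namely the positions $k + 1 + \ell(i-1), \ldots, k + \ell i$. The remaining block $U$ then collects the $k$ primary-pointer bits and the $2^\ell$ data bits, so that $V_1, \ldots, V_{2^k}, U$ is indeed a partition of $[k + 2^k \ell + 2^\ell]$ and each $V_i$ has size exactly $\ell$ as required.

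Next I would fix an arbitrary $i \in [2^k]$ and describe a family of partial assignments $\rho$ on $[n] \setminus V_i$ that realize every Boolean function on $\ell$ variables as some $\isakl|_\rho$. The recipe is: set the primary pointer $\alpha$ to the binary representation of $i - 1$, so that the active secondary pointer is precisely the one sitting on $V_i$; set all other secondary pointer bits arbitrarily (say to $0$); and, for any chosen target function $g\colon \{0,1\}^\ell \to \{0,1\}$, set the $2^\ell$ data bits so that \emph{Data}$[b_1,\ldots,b_\ell] = g(b_1,\ldots,b_\ell)$. With this $\rho$, for any $\tuple{y} \in \{0,1\}^{V_i}$, the secondary pointer $\beta$ computed on the full input equals $\tuple{y}$, and $\isakl|_\rho(\tuple{y})$ equals the data bit at position determined by $\tuple{y}$, which is $g(\tuple{y})$ by construction. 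Hence $\isakl|_\rho = g$, and since $g$ was arbitrary every Boolean function on $\ell$ variables appears in $s_{V_i}(\isakl)$.

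Finally I would combine this with the trivial upper bound $r_{V_i}(\isakl) \le 2^{2^{|V_i|}} = 2^{2^\ell}$ from Lemma~\ref{lem:Number_of_subfunctions_upper_bound} to conclude that $r_{V_i}(\isakl) = 2^{2^\ell}$ for every $i \in [2^k]$. There is no real obstacle here: the construction is by design, the only thing to double-check is the bookkeeping of positions so that the block labelled $V_i$ really is the one selected when $\alpha(\tuple{a}) = i-1$, which is a direct consequence of the indexing convention in the definition of $\isakl$.
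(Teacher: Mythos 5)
Your proposal is correct and follows essentially the same route as the paper: the same partition into the $2^k$ secondary-pointer blocks plus a remainder block, with the primary pointer fixed to select block $V_i$ and the data string used as the truth table of an arbitrary target function. Your write-up is simply more explicit than the paper's about why distinct data strings yield distinct subfunctions and about invoking the trivial upper bound $2^{2^{\ell}}$ to turn the lower bound into an equality.
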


\begin{proof}
Let $k, \ell \in \N_{>0}$. Consider the partition
$[k + 2^k \ell + 2^\ell]=V_1 \uplus \cdots \uplus V_{2^k}\uplus U$
where $V_i$
is the set $\{\sec[i]_1,\ldots,\sec[i]_\ell\}$ of indices of the $\ell$ variables
forming the $i$\textsuperscript{th} secondary
pointer in the \isakl\ instance $\tuple{a}$.
Then for each setting of the first $k$ variables $a_1,\ldots,a_k$ of
$\tuple{a}$, i.e, for each value $i=\bin_k(a_1,\ldots,a_k)$ of the
primary pointer, every
possible fixing of the $2^\ell$-bit data string induces a different
subfunction on $V_{i + 1}$, hence $r_{V_{i + 1}}(\isakl) = 2^{2^\ell}$.
\end{proof}

From $\isakl$ we define the Indirect Storage Access functions family $\ISA =
\{\ISA_n\}_{n \in \N}$, such that for all $n \in \N$
\begin{itemize}
\item if $n < 5$, $\ISA_n(\tuple{a}) = 0$ for all $\tuple{a} \in \{0, 1\}^n$ ;
\item if there exists $k \in \N_{>0}$ such that $n = h_{\ISA}(k)$, then
      $\ISA_n = \ISA_{k, k + \ceiling{\log_2 k}}$;
\item otherwise,
      $\ISA_n(\tuple{a}) =
       \ISA_{k', k' + \ceiling{\log_2 k'}}(a_1, \ldots, a_{n'})$
      for all $\tuple{a} \in \{0, 1\}^{n}$ where
      $k' = \max\{k \in \N_{>0} \mid h_{\ISA}(k) < n\}$ and $n' = h_{\ISA}(k')$.
\end{itemize}
where 
\[
\function{h_{\ISA}}
	 {\N_{>0}}{\N_{>0}}
	 {m}{m + 2^m (m + \ceiling{\log_2 m}) + 2^{m + \ceiling{\log_2 m}}
	     \displaypunct{.}}
\]

$\ISA$ will be used to give, for each complexity measure we study in this paper
(this notion will be precisely defined in the next subsection), an actual family
of Boolean functions that achieves the best lower bound obtainable using \Neci's
lower bound method (to be defined later). The setting of $k$ and $\ell$ in its
definition is crucial, because if we would for example set $\isan = \isakk$ for
all $n \in \N$ such that there exists $k \in \N_{>0}$ verifying
$n = k + 2^k k + 2^k$, we would not reach the desired bounds.

The next lemma is a simple useful adaptation of
Lemma~\ref{lem:ISA_function_partitioning}.

\begin{lemma}
\label{lem:ISA_partitioning}
For all $n \in \N, n \geq 5$, there exist $p, q \in \N_{>0}$ verifying
$p \geq \frac{1}{32} \cdot \frac{n}{\log_2 n}$ and $q \geq \frac{n}{16}$ such
that there exists a partition $V_1, \ldots, V_p, U$ of $[n]$ such that
$r_{V_i}(\ISA_n) = 2^q$ for all $i \in [p]$.
\end{lemma}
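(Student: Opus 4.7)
The plan is to lift the partition given by Lemma~\ref{lem:ISA_function_partitioning} for $\isakl$ to the function $\ISA_n$ via Lemma~\ref{lem:Boolean_function_enlargement}, and then to verify that the resulting $p$ and $q$ meet the claimed bounds through elementary estimates on $h_{\ISA}$.

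First, I will extract from the definition of $\ISA_n$ an appropriate pair $(k^\star,\ell^\star)$: if $n=h_{\ISA}(k)$ for some $k\in\N_{>0}$, take $k^\star=k$; otherwise, take $k^\star=\max\{k\in\N_{>0}:h_{\ISA}(k)<n\}$; in both cases set $\ell^\star=k^\star+\lceil\log_2 k^\star\rceil$. By the definition of $\ISA_n$, the function depends only on its first $n^\star:=h_{\ISA}(k^\star)\leq n$ input bits and coincides there with $\ISA_{k^\star,\ell^\star}$. Lemma~\ref{lem:ISA_function_partitioning} produces a partition $V_1,\ldots,V_{2^{k^\star}},U^\star$ of $[n^\star]$ with $\card{V_i}=\ell^\star$ and $r_{V_i}(\ISA_{k^\star,\ell^\star})=2^{2^{\ell^\star}}$. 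Adjoining the leftover indices $\{n^\star+1,\ldots,n\}$ to $U^\star$ yields a partition of $[n]$, and Lemma~\ref{lem:Boolean_function_enlargement} guarantees that $r_{V_i}(\ISA_n)=r_{V_i}(\ISA_{k^\star,\ell^\star})=2^{2^{\ell^\star}}$. I then set $p=2^{k^\star}$ and $q=2^{\ell^\star}$.

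What remains is to verify that $q\geq n/16$ and $p\geq n/(32\log_2 n)$. I will use two elementary inequalities: (a) $2^{k^\star}\ell^\star\leq 2\cdot 2^{\ell^\star}$, obtained by noting that $\ell^\star/2^{\lceil\log_2 k^\star\rceil}=(k^\star+\lceil\log_2 k^\star\rceil)/2^{\lceil\log_2 k^\star\rceil}\leq 2$; and (b) $k^\star\leq\log_2 n$, which follows from $2^{k^\star}\leq 2^{\ell^\star}\leq h_{\ISA}(k^\star)\leq n$. For $q$, I will expand $h_{\ISA}(k^\star+1)$ with $\ell_1=(k^\star+1)+\lceil\log_2(k^\star+1)\rceil\leq \ell^\star+2$ and bound each of its three summands by a constant multiple of $2^{\ell^\star}$ using (a) and (b), which gives $h_{\ISA}(k^\star+1)\leq 16\cdot 2^{\ell^\star}$; since by maximality $n\leq h_{\ISA}(k^\star+1)$ in the second case (and $n=h_{\ISA}(k^\star)\leq 4\cdot 2^{\ell^\star}$ in the first case, directly from (a)\&(b)), we get $q=2^{\ell^\star}\geq n/16$. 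For $p$, rewriting $2^{\ell^\star}=2^{k^\star}\cdot 2^{\lceil\log_2 k^\star\rceil}\leq 2k^\star\cdot 2^{k^\star}$ gives $2^{k^\star}\geq 2^{\ell^\star}/(2k^\star)\geq n/(32 k^\star)\geq n/(32\log_2 n)$ by (b).

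The main obstacle is the bookkeeping around the two branches of the definition of $\ISA_n$ and the fact that the ceiling $\lceil\log_2\cdot\rceil$ can jump by $1$ when $k^\star$ is incremented; since the asymptotics are clear but the target constants $1/16$ and $1/32$ leave only a modest amount of slack, each inequality must be pushed through with care. The boundary case $n=h_{\ISA}(k^\star)$ is actually easier: (a) and (b) directly yield $n\leq 4\cdot 2^{\ell^\star}$, bypassing the need to estimate $h_{\ISA}(k^\star+1)$.
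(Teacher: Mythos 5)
Your proposal is correct and follows essentially the same route as the paper's own proof: choose the largest $k$ with $h_{\ISA}(k)\leq n$, lift the partition of Lemma~\ref{lem:ISA_function_partitioning} via Lemma~\ref{lem:Boolean_function_enlargement}, set $p=2^k$ and $q=2^{k+\ceiling{\log_2 k}}$, and derive $n\leq h_{\ISA}(k+1)\leq 16q\leq 32kp$ together with $k\leq\log_2 n$. Your write-up merely fills in the "bit of elementary algebra" the paper leaves implicit (and handles the exact-value case $n=h_{\ISA}(k)$ separately, which is harmless but unnecessary since $n<h_{\ISA}(k+1)$ holds there too).
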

\begin{proof}
  Let $n \in \N, n \geq 5$. Let $k \in \N_{>0}$ be the unique positive integer
  verifying $h_{\ISA}(k) \leq n < h_{\ISA}(k + 1)$ . Set $n'=h_{\ISA}(k)$. By
  definition we have $\ISA_{n'}=\ISA_{k,k+\ceiling{\log_2 k}}$.  Let $V_1,
  \ldots, V_{2^k}, U$ be a partition of $[n']$ such that $r_{V_i}(\ISA_{n'}) = 2^{2^{k
      + \ceiling{\log_2 k}}}$ for all $i \in [2^k]$ as given by Lemma
  \ref{lem:ISA_function_partitioning}.  Moreover, by definition of $\ISA_n$ and
  by Lemma \ref{lem:Boolean_function_enlargement} (for the case in which $n' <
  n$), we have that $r_{V_i}(\ISA_n) = r_{V_i}(\ISA_{n'}) = 2^{2^{k +
  \ceiling{\log_2 k}}}$ for all $i \in [2^k]$.

  Set $q=2^{k + \ceiling{\log_2 k}}$ and $p=2^k$.

  A bit of elementary algebra shows:
 \begin{align*}
n & \leq h_{\ISA}(k+1)
    \leq 16 \cdot 2^{k + \ceiling{\log_2 k}}
   = 16 q
   \leq 16 \cdot 2^{k+\log_2 k + 1}
   \leq 32 k p
\end{align*}
Hence $q \geq \frac{n}{16}$ and $p \geq  \frac{1}{32} \cdot \frac{n}{k} \geq
 \frac{1}{32} \cdot \frac{n}{\log_2 n}$ (as $\log_2 n \geq \log_2 2^k = k$).
\end{proof}

\subsection{Computational models}
\label{sec:all_models}

In this subsection we define the three concrete models of computation
considered extensively in this paper. But first, in view of defining a
model-independent notion of \Neci's method, we define
a complexity measure merely as
a function that associates a non-negative integer
to each Boolean function, as follows.

\begin{definition}
A \emph{complexity measure on Boolean functions} is a function
\begin{equation}
\M\colon \bigcup_{n \in \N} \set{0, 1}^{\set{0, 1}^n} \to \N.
\end{equation}
\end{definition}

Note that the models of computation we consider here are \emph{non-uniform} in
the sense that each computing device only processes inputs of a fixed
length. These models are the following:
\begin{itemize}
\item the nondeterministic branching program (NBP),
\item the parity branching program ($\oplus$BP),
\item the $\delta$-limited nondeterministic branching program ($\delta$-LNBP)
  and
\item the $\delta$-limited nondeterministic Boolean formula ($\delta$-LNBF).
\end{itemize}
The nondeterministic branching program is well known to capture
nondeterministic
logspace $\NL$ when restricted to polynomial size~\cite{pu87}; similarly, when
restricted to polynomial size parity branching programs capture $\PL$.
The two other
models are motivated by the well-known observation that unrestricted
nondeterministic Boolean formulas capture $\NP$~(see
\cite{golemu96}) and further by Klauck's analysis of
restricted nondeterministic fomulas~\cite{kl07}.
Both branching program models extend, albeit in different ways, the
deterministic branching program model known to 
capture deterministic logspace~\cite{co66,ma76}.

\begin{definition}
\label{def:BP}
A \emph{(nondeterministic) branching program (NBP)} on $\{0, 1\}^V$, for a set
$V$ of variables, is a tuple $P = (X, s, t_0, t_1, A_0, A_1, var)$ where
\begin{itemize}
\item $X$ is a finite set of vertices (or states);
\item $s \in X$ is the start (or source) vertex;
\item $t_0, t_1 \in X$, $t_0 \neq t_1$ are two distinct sink vertices;
\item $A_0 \subseteq X \setminus \{t_0, t_1\} \times X \setminus \{s\}$
is the
      set of arcs labelled $0$;
\item $A_1 \subseteq X \setminus \{t_0, t_1\} \times X \setminus
  \{s\}$
is the set of arcs labelled $1$;
\item $var\colon X \setminus \{t_0, t_1\} \to V$ labels
  each non-sink vertex.
\end{itemize}
\end{definition}

\begin{definition}
For a nondeterministic branching program $P = (X, s, t_0, t_1, A_0, A_1, var)$
on $\{0, 1\}^V$, each assignment $\tuple{a}\in\{0,1\}^{V}$ defines a set of arcs
$A[\tuple{a}] = \{(u, v) \in A_0 \mid a_{var(u)} = 0\} \cup
		\{(u, v) \in A_1 \mid a_{var(u)} = 1\}$
and thus a graph $P[\tuple{a}] = (X, A[\tuple{a}])$.
$P$ computes a Boolean function $f\colon\{0,1\}^V\to \{0,1\}$ given by
$f(\tuple{a})=1$ if and only if there exists a path (\emph{computation}) in
$P[\tuple{a}]$ from state $s$ to state $t_1$.

A branching program $P$ defined as above can also be interpreted as a
\emph{parity branching program} that computes a Boolean function
$f^\oplus\colon \{0,1\}^V\to \{0,1\}$ where $f^\oplus(\tuple{a})=1$ if and
only if there is an odd number of paths in $P[\tuple{a}]$ from state $s$ to
state $t_1$.
\end{definition}

\begin{definition}
Branching program $P$ is \emph{deterministic} if and only if
$(X,A_0\cup A_1)$ is acyclic and $A_0$ and $A_1$ each contain precisely one
out-arc from each non-sink vertex of $X$.
$P$ is a \emph{$\delta$-limited nondeterministic branching program} for 
$f\colon\{0,1\}^V\to \{0,1\}$ if and only if $P$ is a deterministic branching
program computing a function $f'\colon\{0,1\}^{V'}\to \{0,1\}$ with
$V\subseteq V'$ and $\card{V'\setminus V}=\delta$ such that
$f(\tuple{a})=
 \bigvee_{\tuple{b}\in \{0,1\}^{V'\setminus V}} f'(\tuple{a},\tuple{b})$.
\end{definition}

\begin{definition}
The complexity measure $\M(f)$ is denoted $\NBP(f)$, $\PBP(f)$, $\LNBP_\delta(f)$ and
$\BP(f)$ for nondeterministic, parity, $\delta$-limited nondeterministic and
deterministic branching programs respectively. $\M(f)$ is
defined in each case 
as the minimum, over every BP of the appropriate type computing $f$,
of the number of non-sink states in (a.k.a.\ \emph{the size of}) the BP.
\end{definition}

Modulo cosmetic details, the above are the standard definitions of
deterministic, nondeterministic and parity Boolean BPs~\cite{we00}.
The definition of
$\delta$-limited nondeterministic BPs, which does not appear to have been
studied previously, is inspired by notions of limited nondeterminism for
other models~\cite{golemu96,hs03:limited-advice,kl07}.

\begin{remark}
The limited nondeterminism of the $\delta$-LNBP model is formulated in
a framework of verification of explicitly represented guesses that is typical
for time-bounded nondeterminism.
In contrast, the NBP model only represents nondeterministic guesses implicitly,
which allows them to be used without being stored, as is typical for 
space-bounded computation.
In particular, even if $\delta$ is unbounded (say $\delta=\infty$), the
smallest $\infty$-LNBP could be somewhat larger than that of an equivalent NBP
and vice-versa.
It is not difficult to see that an NBP of size $s$ can be simulated by such an
$\infty$-LNBP of size at most $2 s^2$. Indeed, simulating the $k$-way branch at
a given state in this NBP in an $\infty$-LNBP can be made by accessing
$\ceiling{\log_2 k}$ fresh nondeterministic bits in a decision tree of size at
most $k$; so since each of the $s$ states of our original NBP branches to at
most $s$ different states for each of the possible values $0$ or $1$, we get
that we can simulate it by an $\infty$-LNBP of size at most $2 s^2$.
Conversely, however, it is unclear by how much the size would increase when
simulating an $\infty$-LNBP by an NBP, but it is widely conjectured to grow
exponentially, since one can prove that polynomial size $\infty$-LNBPs capture
(non-uniform) $\NP$, while polynomial size NBPs capture (non-uniform) $\NL$.
\end{remark}

\begin{remark}
Two other models comparable to the NBP are \emph{contact schemes},
and the more general \emph{switching-and-rectifier
(RS) networks} (see~\cite{ra91,ju12}).
The graph of an RS is undirected and edges either have labels that are
literals or are unlabelled, with
the acceptance condition that of the NBP.  (Contact schemes are a special case
of RS networks that do not have unlabelled edges.)
The size of an RS network is the number of its labelled edges.
One can simulate NBPs by RS networks of at most twice the size
-- each
NBP node becomes an RS node with two labelled children which have unlabelled
edges to the corresponding destination nodes in the NBP.
RS networks and even contact schemes may be smaller than NBPs.
Span programs~\cite{kawi93} can be simulated by parity branching programs
of at most twice the size -- their size is also at most polynomial in 
parity branching program size.
\end{remark}

\begin{definition}
[Deterministic and $\delta$-limited nondeterministic formulas, following
\cite{kl07} and \cite{ju12}]
\label{def:Boolean_LNBFs}
A (deterministic) \emph{Boolean binary formula (BF)} $\varphi$ on $\{0,1\}^n$
($n \in \N$) is a binary tree with
\begin{itemize}
\item a single root,
\item every internal node of arity $2$,
\item every internal node labelled by a function
      $g\colon \{0, 1\}^2 \to \{0, 1\}$,
\item every leaf labelled by one of
      $0,1,x_1,\ldots,x_{n},\neg x_1,\ldots,\neg x_{n}$.
\end{itemize}
$\varphi$ computes a Boolean function $f_\varphi$ on $\{0,1\}^n$ in the
natural way by function composition. 
Let $\delta \in \N$.
A \emph{$\delta$-limited nondeterministic binary formula ($\delta$-LNBF)}
on $\{0, 1\}^n$ $\varphi$ is a deterministic binary formula $\varphi'$
on $\{0,1\}^{n+\delta}$.  It computes a Boolean function
$f_\varphi$ such that for $\tuple{a}\in \{0,1\}^n$,
$f_\varphi(\tuple{a})=
 \bigvee_{\tuple{b}\in \{0,1\}^\delta} f_{\varphi'}(\tuple{a},\tuple{b})$.
\end{definition}

\begin{definition}
The complexity measure $\M(f)$ 
for deterministic and $\delta$-limited nondeterministic formulas
is denoted $\BF(f)$ and $\LNBF_\delta(f)$ respectively. In each case $\M(f)$ is
defined
as the minimum, over every formula $\phi$ of the appropriate type
computing $f$,
of the number $|\phi|$ of non-constant leaves in (a.k.a.\
\emph{the size of}) $\phi$.
\end{definition}

\begin{lemma}
\label{lem:Number_of_subfunctions_proof_checker_upper_bound}
Let $\delta \in \N$ and let $g\colon \{0, 1\}^{n + \delta} \to \{0, 1\}$ and
let $f\colon \{0, 1\}^n \to \{0, 1\}$ be given by 
$f(\tuple{a})=\bigvee_{\tuple{b}\in \{0,1\}^\delta} g(\tuple{a},\tuple{b})$.
Then, for all $V \subseteq [n]$, we have $r_V(f) \leq 
B(r_V(g),2^\delta)-1< r_V(g)^{2^\delta}$
where $B(m,r)=\sum_{i=0}^r \binom{m}{i}$ is the volume of the
Hamming ball of radius $r$ in $\{0,1\}^m$.
\end{lemma}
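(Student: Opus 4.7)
The plan is to fix an arbitrary $V \subseteq [n]$ and, for each $\rho \in \{0,1\}^{[n] \setminus V}$, decompose the subfunction $f|_\rho$ as an OR of subfunctions of $g$ on $V$. Concretely, viewing $(\rho, \tuple{b})$ as a partial assignment to the variables of $g$ outside $V$ (note $V \subseteq [n] \subseteq [n+\delta]$), I would write, for every $\tuple{y} \in \{0,1\}^V$,
\[
f|_\rho(\tuple{y}) \;=\; \bigvee_{\tuple{b} \in \{0,1\}^\delta} g|_{(\rho, \tuple{b})}(\tuple{y}),
\]
where each $g|_{(\rho,\tuple{b})}$ is an element of $s_V(g)$. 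Thus $f|_\rho$ is the disjunction of the indexed family $(g|_{(\rho,\tuple{b})})_{\tuple{b}\in\{0,1\}^\delta}$ of members of $s_V(g)$.

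The key observation is that disjunction is associative, commutative and idempotent, so $f|_\rho$ depends only on the underlying \emph{set}
\[
S_\rho \;:=\; \{\, g|_{(\rho,\tuple{b})} \mid \tuple{b}\in\{0,1\}^\delta\,\} \;\subseteq\; s_V(g),
\]
not on the multiplicities, and $S_\rho$ is nonempty with $\card{S_\rho} \le \min(2^\delta, r_V(g))$. Hence the map $\rho \mapsto f|_\rho$ factors through $\rho \mapsto S_\rho$, so $r_V(f)$ is at most the number of nonempty subsets of $s_V(g)$ of cardinality at most $2^\delta$. Using the convention $\binom{m}{i}=0$ for $i>m$ to absorb the minimum, this yields
\[
r_V(f) \;\le\; \sum_{i=1}^{\min(2^\delta,\, r_V(g))} \binom{r_V(g)}{i} \;=\; B(r_V(g), 2^\delta) - 1.
\]

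For the second inequality, I would inject the set of nonempty subsets of $s_V(g)$ of size at most $2^\delta$ into $s_V(g)^{2^\delta}$ by fixing any total ordering on $s_V(g)$ and sending $S=\{h_1,\ldots,h_k\}$ (with $k\le 2^\delta$, the $h_i$ listed in increasing order) to the padded tuple $(h_1,\ldots,h_k,h_1,\ldots,h_1) \in s_V(g)^{2^\delta}$. This injection is manifestly not surjective (most tuples are not in canonical padded form) whenever $r_V(g)\ge 2$ and $2^\delta\ge 2$, giving $B(r_V(g),2^\delta)-1 < r_V(g)^{2^\delta}$ in the non-degenerate regime. There is no real obstacle here: the only content-bearing step is the idempotence remark, and everything else is routine counting of subsets.
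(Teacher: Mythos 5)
Your proof is correct and follows essentially the same route as the paper's: decompose $f|_\rho$ as $\bigvee_{\tuple{b}} g|_{\rho\tuple{b}}$, observe by idempotence of $\vee$ that $f|_\rho$ depends only on the nonempty set of at most $2^\delta$ distinct subfunctions of $g$ appearing, and count such subsets to get $B(r_V(g),2^\delta)-1$. You go slightly further than the paper by also justifying the second (strict) inequality, correctly noting it needs the non-degenerate regime (indeed, for $\delta=0$ or $r_V(g)=1$ one only gets equality).
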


\begin{proof}
Let $V \subseteq [n]$. For $\rho \in \{0, 1\}^{[n] \setminus V}$, by definition,
$f|_\rho=\bigvee_{\tuple{b}\in \{0,1\}^\delta} g|_{\rho\tuple{b}}$.
Since $\rho\tuple{b}$ assigns all variables in $[n+\delta]\setminus V$, each
function $g|_{\rho\tuple{b}}$ is in $s_V(g)$.
Therefore, each $f|_\rho\in s_V(f)$ is
the $\bigvee$ of $2^\delta$ functions in $s_V(g)$ (not necessarily distinct).
Therefore over all choices of $\rho$, the function $f|_\rho$ only depends on
the set of between 1 and $2^\delta$
among these subfunctions of $g$ that are distinct (and not what values
$\tuple{b}$ with which each such subfunction is associated).
Therefore there are at most $B(r_V(g),2^\delta)-1$ possible distinct
subfunctions $f|_\rho$ in $s_V(f)$.
\end{proof}


\section{Nondeterministic Branching Program Lower Bounds via Shannon Bounds}
In this section we describe the simplest form of 
\Neci's technique and its applications
in order to give some intuition about the technique.
Readers may prefer to skip to the generalised abstract definition of \Neci's
method in Section~\ref{sec:neciporuk}.
The simple version here is based on the so-called ``Shannon bounds" for a
complexity measure.
The Shannon function for a complexity measure maps $n$ to the maximum
complexity of any Boolean function over $\{0,1\}^n$ in that measure.
Lower bounds on the Shannon function typically follow by a simple enumeration
of the number of distinct functions of bounded measure.

For all $n, s \in \N$, and $\M$ a complexity measure let us denote by
$\M_{sem}(n, s)$ the number of distinct $n$-ary
Boolean functions of complexity measure at most $s$.
In particular, define $N_{sem}$ to be the function $\M_{sem}$ for NBPs and
$\oplus_{sem}$ be that for $\oplus$BPs.
The next lemma is the core of the simple version of \Neci's technique.

\begin{lemma}
\label{lem:Weak_Neciporuk_NBP_principle}
For any $n \in \N$, for any $n$-ary Boolean function $f$ on $V$ that depends
on all its inputs and any partition $V_1, \ldots, V_p$ of $V$ we have
$$\NBP(f) \geq
 \sum_{i = 1}^p \max\{\card{V_i},\min\{s \in \N \mid N_{sem}(\card{V_i}, s) \geq r_{V_i}(f)\}\}.$$
$$\PBP(f) \geq
 \sum_{i = 1}^p \max\{\card{V_i},\min\{s \in \N \mid \oplus_{sem}(\card{V_i}, s) \geq r_{V_i}(f)\}\}.$$
\end{lemma}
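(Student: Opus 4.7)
The plan is to fix an optimal device for each measure and track, block by block, how many of its non-sink states query variables in each $V_i$. Concretely, for the NBP case, let $P = (X, s, t_0, t_1, A_0, A_1, var)$ be a minimum-size NBP computing $f$, set $Y_i = var^{-1}(V_i)$ for each $i \in [p]$, and $s_i = \card{Y_i}$. Since every non-sink state queries exactly one variable, the sets $Y_1, \ldots, Y_p$ partition $X \setminus \set{t_0, t_1}$, so $\NBP(f) = \card{X} - 2 = \sum_{i = 1}^{p} s_i$, and the same bookkeeping applies verbatim to a minimum-size parity BP. The entire proof therefore reduces to showing, separately for each $i$, that $s_i \geq \max\{\card{V_i},\ \min\{s \in \N \mid \Nsem(\card{V_i}, s) \geq r_{V_i}(f)\}\}$, and similarly with $\oplus_{sem}$.

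The ``dependence'' lower bound $s_i \geq \card{V_i}$ is immediate: if some $j \in V_i$ were not queried by any state of $P$, then hard-wiring the value of $x_j$ in $P$ would leave the computed function unchanged, contradicting the hypothesis that $f$ depends on $x_j$; hence each variable in $V_i$ is queried at least once.

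The ``counting'' lower bound reduces to showing that every subfunction $f|_\rho$ on $V_i$ is computable by some NBP on $V_i$ with at most $s_i$ non-sink states, which gives $r_{V_i}(f) \leq \Nsem(\card{V_i}, s_i)$ by definition of $\Nsem$. I would build such an NBP $P_\rho^{(i)}$ by \emph{$\rho$-contracting} $P$: its state set is $Y_i \cup \set{t_0, t_1}$, and from each $u \in Y_i$ one installs a $c$-arc (for $c \in \set{0, 1}$) to every $v \in Y_i \cup \set{t_0, t_1}$ such that $P$ contains a $\rho$-consistent walk from some $c$-successor of $u$ to $v$ passing only through states in $X \setminus (Y_i \cup \set{t_0, t_1})$. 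The source is $s$ if $s \in Y_i$; otherwise one designates a canonical state of $Y_i$ as source and folds the short $\rho$-initial behavior of $P$ into its outgoing arcs, handling directly the degenerate cases in which $f|_\rho$ is constant (which are in any event subsumed by the $\card{V_i}$-floor already established). Correctness is a one-line induction: accepting $s$-to-$t_1$ walks of $P$ on $(\tuple{a}, \rho)$ are in bijection with accepting source-to-$t_1$ walks of $P_\rho^{(i)}$ on $\tuple{a}$.

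The parity case is structurally identical: each arc of $P_\rho^{(i)}$ now encodes the parity (in $\mathbb{F}_2$) of the walks it contracts, and distributivity of multiplication over addition modulo $2$ guarantees that the $\mathbb{F}_2$-count of source-to-$t_1$ walks is preserved, so $P_\rho^{(i)}$ computes $f|_\rho$ as a $\oplus$BP of size at most $s_i$. Summing both per-block inequalities over $i \in [p]$ yields the claim. The main obstacle, as always with Nečiporuk-style contraction, is to perform the contraction without introducing spurious non-sink states when $s \notin Y_i$; the trick is to exploit the $\card{V_i}$-floor to absorb degenerate subfunctions and to fold the $\rho$-initial fragment of $P$ into the existing entry states of $Y_i$, so that the cardinality of $Y_i$ is never exceeded.
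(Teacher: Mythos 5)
Your proposal is correct and follows essentially the same route as the paper's proof: count the states $s_i$ querying each block $V_i$, get $s_i\geq\card{V_i}$ from dependence, and get $s_i\geq\min\{s\mid \Nsem(\card{V_i},s)\geq r_{V_i}(f)\}$ by $\rho$-contracting $P$ onto the states of $Y_i$ so that every subfunction on $V_i$ is computed by an NBP of size at most $s_i$. Your only departure is that you spell out the parity case by placing a contracted arc exactly when the number of contracted walks is odd, a detail the paper leaves implicit when it says the same argument applies to $\PBP$.
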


\begin{proof}
Let $n \in \N$, $f$ be an $n$-ary Boolean function on $V$ depending on all
its inputs and $V_1, \ldots, V_p$ a partition of $V$.
Let $P$ be a Boolean NBP computing $f$. For all $i \in [p]$ we will denote by
$s_i \in \N$ the number of vertices in $P$ labelled by elements in $V_i$. It is
clear that $P$ is of size at least $\sum_{i = 1}^p s_i$.

Let $i \in [p]$. Observe that for all subfunction $g$ of $f$ on $V_i$,  there is
by definition a partial assignment $\rho$ on $V \setminus V_i$ such that
$f|_\rho = g$, so it is not too difficult to see that $g$ is computed by the
Boolean NBP of size $s_i$ obtained from $P$ by:
\begin{enumerate}
    \item
	removing all non sink vertices labelled by elements not in $V_i$;
    \item
	defining the new start vertex as one of the vertices whose label is in
        $V_i$ and connected to the start vertex of $P$ by a path of nodes
	labelled by a variable outside of $V_i$ and arcs labelled consistently
	with $\rho$ and adding both an arc labelled 0 and an arc labelled 1 from
	this new start vertex to each other such reachable vertex (except for
	the extreme case of a constant function, in which we just set the start
	vertex as the appropriate sink vertex);
    \item
	connecting a vertex $u$ to a vertex $v$ by an arc labelled by
	$a \in \set{0, 1}$ if and only if there exists a path from $u$ to $v$
	in $P$ verifying that any intermediate vertex of the path is labelled by
	a variable outside of $V_i$, the first arc is labelled by $a$ and each
	arc (but the first one) is labelled consistently with $\rho$.
\end{enumerate}
Thus, $r_{V_i}(f)$ is necessarily upper-bounded by the number of semantically
distinct such NBPs we can build from $P$ that way, which is in turn at most
$N_{sem}(\card{V_i}, s_i)$.
Moreover, since, by construction, $f$ depends on all variables whose indices are
in $V_i$, we have that for each element $\ell \in V_i$, $P$ contains at least
one vertex labelled by $\ell$, so $s_i \geq \card{V_i}$.
Hence, for each $i$,
$s_i \geq \max\{\card{V_i},\min\{s \in \N \mid
		N_{sem}(\card{V_i}, s) \geq r_{V_i}(f)\}\}$.
Since the NBP has size at least $\sum_{i=1}^p s_i$ and the NBP is arbitrary,
the bound of the lemma follows.

The same argument also applies directly to yield the bound for $\oplus$BPs, with
$\oplus_{sem}(\card{V_i},s)$ replacing $N_{sem}(\card{V_i},s)$.
\end{proof}

\begin{proposition}
\label{Shannonbound}
Let $s\ge n$.  Then
$N_{sem}(n, s), \oplus_{sem}(n,s) < 2^{2 (s+1)^2}$.
\end{proposition}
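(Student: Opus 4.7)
The plan is to upper bound both $N_{sem}(n,s)$ and $\oplus_{sem}(n,s)$ by the number of syntactically distinct nondeterministic branching programs of size at most $s$ over $n$ variables. This reduction is justified because two NBPs that compute distinct Boolean functions are necessarily syntactically distinct; moreover, a parity BP has exactly the same syntactic representation as an NBP and differs only in its acceptance criterion, so one count dominates both measures.

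For the syntactic count itself, I would fix a canonical naming of the $s+2$ vertices of a size-$s$ NBP, say $X = \{1, \ldots, s+2\}$ with source $= 1$ and sinks $t_0 = s+1$, $t_1 = s+2$. Then an NBP is completely specified by the labelling $var\colon \{1, \ldots, s\} \to V$ together with the two arc sets $A_0, A_1 \subseteq \{1, \ldots, s\} \times \{2, \ldots, s+2\}$. Since each $A_b$ is a subset of a universe of cardinality $s(s+1)$, the total count is at most $n^s \cdot 2^{2s(s+1)}$, and NBPs of size strictly smaller than $s$ are absorbed into this bound by padding with unreachable non-sink states, which does not affect the computed function.

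The remaining numerical step is to show that $n^s \cdot 2^{2s(s+1)} < 2^{2(s+1)^2}$ under the hypothesis $s \geq n$. Taking base-$2$ logarithms, this reduces to the inequality $s \log_2 n < 2(s+1)$. I expect this to be the delicate part of the proof, since the naive estimate $\log_2 n \leq \log_2 s$ coming from $s \geq n$ is not sharp enough to close the gap directly for large $s$. To finish I would tighten the enumeration by jointly encoding each non-sink state's label together with its two outgoing neighbourhoods as a single compact descriptor of roughly $2(s+1) + O(1)$ bits, using the constraint $n \leq s$ to absorb the $\log_2 n$ label cost into the slack of the $2(s+1)$ bits already budgeted for the transition encoding, so that the product over all $s$ non-sink states fits within the $2(s+1)^2$-bit budget. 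The parity case then follows immediately, since every syntactic counting argument applies verbatim to $\oplus$BPs.
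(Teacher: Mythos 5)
There is a genuine gap here, and you have in fact located it yourself: the purely syntactic count $n^s\cdot 2^{2s(s+1)}$ is \emph{not} bounded by $2^{2(s+1)^2}$. Taking base-$2$ logarithms, your budget leaves only $2(s+1)^2-2s(s+1)=2(s+1)$ bits of slack for the variable labels, so you would need $s\log_2 n<2s+2$, i.e.\ essentially $\log_2 n<2$, which already fails for $n\ge 5$; the hypothesis $s\ge n$ does not help, since it permits $n$ as large as $s$. The repair you sketch --- a ``compact descriptor'' of $2(s+1)+O(1)$ bits per state that ``absorbs'' the $\log_2 n$ label cost into the slack of the transition encoding --- is not an argument: each non-sink state genuinely carries $\log_2 n$ independent bits of label information on top of the roughly $2(s+1)$ bits for its two out-neighbourhoods, and the total slack available in the budget is an additive $O(s)$, far too small to pay an $s\log_2 n$ cost.

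The missing idea is to count \emph{semantically} distinct programs by quotienting out a symmetry of the model: permuting the names of the internal vertices $v_2,\dots,v_s$ leaves the computed function unchanged, both as an NBP and as a $\oplus$BP, so the number of distinct functions is at most the syntactic count divided by $(s-1)!$. This is exactly what the paper does: it bounds $N_{sem}(n,s)$ and $\oplus_{sem}(n,s)$ by $(2^{2s}n)^s/(s-1)!\le 2^{2s^2}s^s/(s-1)!\le 2^{2s^2}\,s\,e^s<2^{2(s+1)^2}$, using $n\le s$ and $s!\ge (s/e)^s$. The point is that $\log_2\bigl((s-1)!\bigr)=s\log_2 s-O(s)$ recovers essentially all of the $s\log_2 n\le s\log_2 s$ bits spent on variable labels --- precisely the savings your per-state encoding cannot produce. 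The rest of your proposal (reducing $\oplus_{sem}$ to the same count, fixing a canonical vertex naming, padding programs up to size exactly $s$) matches the paper's argument.
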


\begin{proof}
We simply count the number of distinct branching programs.
Subject to renaming and reorganising, any $n$-ary Boolean function
computable by an NBP or $\oplus$BP
of size at most $s$, can be
computed by one of size exactly $s$, having
$\{v_j\}_{j = 1}^{s + 2}$ as vertices, $v_1$ as start vertex, $v_{s + 1}$ as
$0$-vertex and $v_{s + 2}$ as $1$-vertex, where no arc goes to the $0$-vertex.
The out-edges at each node $v_i$ can be described by the subset of
vertices
$v_j$, $j\ne i$, reached on each of values $0$ and $1$.
There are
$(s-1)!$ different ways of reordering the names of vertices
$v_2,\ldots, v_s$ that keep identical connectivity of the branching program
and hence the function it computes, both as an NBP and a $\oplus$BP.
Hence, it directly follows that
$N_{sem}(n, s),\oplus_{sem}(n,s) \leq
 (2^{2s} n)^{s}/(s-1)!\le 2^{2s^2} s^{s}/(s-1)!$,
since $s \geq n$, therefore, since $s!\ge (s/e)^s$,
$N_{sem}(n, s),\oplus_{sem}(n,s) \leq 2^{2 s^2} s e^s < 2^{2 (s+1)^2}$.
\end{proof}

\begin{definition}
\label{def:Weak_Neciporuk_NBP}
For the complexity measures $\M=\NBP, \PBP$, the \emph{simple \Neci\ lower bound method}
consists of the following.
\begin{enumerate}
    \item
	Giving explicitly a non-decreasing function $b\colon \N_{>0} \to \N$
	such that for any $n \in \N$, for any $n$-ary Boolean function $f$ on
	$V$ that depends on all its inputs and any partition
	$V_1, \ldots, V_p$ of $V$, we have
	$\sum_{i = 1}^p
	 \max\{\card{V_i},\min\{s \in \N \mid
	       \M_{sem}(\card{V_i}, s) \geq r_{V_i}(f)\}\} \geq
	 \sum_{i = 1}^p b(r_{V_i}(f))$.
    \item
	For a given $n$-ary Boolean function $g$ on $V$ that depends on all
	the variables whose indices are in $V$, explicitly choosing a
	partition $V_1, \ldots, V_p$ of $V$,
	computing $r_{V_i}(g)$ for all $i \in [p]$ and
	concluding that $\M(g) \geq \sum_{i = 1}^p b(r_{V_i}(f))$.
\end{enumerate}
\end{definition}

A function $b$ satisfying the condition of Step 1 in the definition above is
called a \emph{simple \Neci\ function for $\M$}.

We now give an explicit simple \Neci\ function for $\NBP$ and $\PBP$.

\begin{proposition}
\label{ptn:Weak_Neciporuk_NBP_size_lower_bound}
The function on $\N_{>0} \to \N$ given by
$m \mapsto \ceiling{\sqrt{\frac{1}{2} \log_2 m}-1}$ is a simple \Neci\ function for
$\NBP$ and for $\PBP$.
\end{proposition}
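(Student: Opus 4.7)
The plan is to show the bound blockwise using the Shannon estimate from Proposition \ref{Shannonbound}. Fix a Boolean function $f$ on $V$ depending on all its inputs and a partition $V_1,\ldots,V_p$ of $V$. For each block $V_i$ I would set
\[
 s_i^{*} \;=\; \max\bigl\{\card{V_i},\,\min\{s \in \N \mid \M_{sem}(\card{V_i}, s) \geq r_{V_i}(f)\}\bigr\},
\]
where $\M$ is either $\NBP$ or $\PBP$, and show $s_i^{*} \geq b(r_{V_i}(f))$. Summing across $i\in[p]$ then yields exactly the inequality required by Step~1 of Definition~\ref{def:Weak_Neciporuk_NBP}.

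To bound $s_i^{*}$, first observe that $\M_{sem}(\card{V_i},\cdot)$ is non-decreasing, so $\M_{sem}(\card{V_i}, s_i^{*}) \geq r_{V_i}(f)$. By construction, $s_i^{*} \geq \card{V_i}$, so Proposition~\ref{Shannonbound} applies and gives
\[
 2^{2(s_i^{*} + 1)^2} \;>\; \M_{sem}(\card{V_i}, s_i^{*}) \;\geq\; r_{V_i}(f).
\]
Taking binary logarithms and rearranging, $s_i^{*} + 1 > \sqrt{\tfrac{1}{2}\log_2 r_{V_i}(f)}$, i.e.\ $s_i^{*} > \sqrt{\tfrac{1}{2}\log_2 r_{V_i}(f)} - 1$. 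Because $s_i^{*}$ is a non-negative integer, this strict inequality upgrades to $s_i^{*} \geq \lceil \sqrt{\tfrac{1}{2}\log_2 r_{V_i}(f)} - 1\rceil = b(r_{V_i}(f))$.

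Summing the blockwise bounds yields
\[
 \sum_{i=1}^{p} s_i^{*} \;\geq\; \sum_{i=1}^{p} b(r_{V_i}(f)),
\]
which is exactly the condition demanded of a simple \Neci\ function for $\M \in \{\NBP,\PBP\}$. The same derivation works uniformly for both measures because Proposition~\ref{Shannonbound} gives the same Shannon bound for $N_{sem}$ and $\oplus_{sem}$.

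There is no real obstacle; the only subtle point is that Proposition~\ref{Shannonbound} requires $s \geq n$, which is precisely why the statement takes $\max$ with $\card{V_i}$ inside the definition of $s_i^{*}$. Without that, for small $\card{V_i}$ and moderate $r_{V_i}(f)$ the minimizer $s$ could lie below $\card{V_i}$ and the Shannon estimate would not apply; the $\max$ forces us into the regime where the enumeration argument of Proposition~\ref{Shannonbound} is valid.
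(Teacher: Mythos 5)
Your proof is correct and follows essentially the same route as the paper's: apply the Shannon-type bound of Proposition~\ref{Shannonbound} to $s_i^{*}$ (valid because the $\max$ with $\card{V_i}$ guarantees $s_i^{*}\geq\card{V_i}$), take logarithms, and use integrality of $s_i^{*}$ to pass to the ceiling. The only item the paper adds is the (trivial) observation that the candidate function $b$ is non-decreasing, which Definition~\ref{def:Weak_Neciporuk_NBP} formally requires of a simple \Neci\ function.
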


\begin{proof}
We start by observing that the function on $\N_{>0} \to \N$ given by
$m \mapsto \ceiling{\sqrt{\frac{1}{2} \log_2 m}-1}$ is obviously non-decreasing.
Let $n \in \N$, $f$ be an $n$-ary Boolean function $f$ on $V$ depending on
all its variables and $V_1, \ldots, V_p$ be a partition of $V$.
Let $i \in [p]$.
Let
$s_i = \max\{\card{V_i},\min\{s \in \N \mid N_{sem}(\card{V_i}, s) \geq r_{V_i}(f)\}\}$
for all $i \in [p]$. We claim that
$s_i \geq \ceiling{\sqrt{\frac{1}{2} \log_2 r_{V_i}(f)}-1}$ for all $i \in [p]$.

By definition, $N_{sem}(\card{V_i}, s_i) \geq r_{V_i}(f)$, and since
$s_i\ge \card{V_i}$, Proposition~\ref{Shannonbound} implies that
$N_{sem}(\card{V_i}, s_i) < 2^{2 (s_i+1)^2}$ and hence
$2^{2 (s_i+1)^2} > r_{V_i}(f)$, that is to say,
$s_i > \sqrt{\frac{1}{2} \log_2 r_{V_i}(f)}-1$.
Since $s_i$ is integral, we deduce that
$s_i \geq \ceiling{\sqrt{\frac{1}{2} \log_2 r_{V_i}(f)}-1}$.
The lemma follows for $\NBP$; the argument for $\PBP$ is identical replacing
$N_{sem}$ by $\oplus_{sem}$.
\end{proof}

This directly gives us the following lower bounds.

\begin{proposition}[\cite{pu87,kawi93}]

$\NBP(\ED_n),\NBP(\ISA_n),\PBP(\ED_n),\PBP(\ISA_n) \in
 \Omega\Bigl(\frac{n^{3/2}}{\log_2 n}\Bigr)$.
\end{proposition}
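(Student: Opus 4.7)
The plan is to apply Lemma~\ref{lem:Weak_Neciporuk_NBP_principle} in combination with the explicit simple \Neci\ function
\[
b(m) = \ceiling{\sqrt{\tfrac{1}{2}\log_2 m} - 1}
\]
from Proposition~\ref{ptn:Weak_Neciporuk_NBP_size_lower_bound}, instantiated at the partitions and subfunction counts already produced in Lemma~\ref{ED-subfunctions-lemma} for $\ED_n$ and Lemma~\ref{lem:ISA_partitioning} for $\ISA_n$. Since Proposition~\ref{ptn:Weak_Neciporuk_NBP_size_lower_bound} supplies the same $b$ for $\NBP$ and $\PBP$ (its proof goes through verbatim with $\oplus_{sem}$ in place of $N_{sem}$, thanks to Proposition~\ref{Shannonbound} bounding both), a single computation will deliver all four claimed inequalities at once.

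For $\ED_n$ at $n = 2k\cdot 2^k$, the function depends on all its inputs, and Lemma~\ref{ED-subfunctions-lemma} supplies a partition into $N = 2^k$ blocks of size $2k$ satisfying $r_{V_i}(\ED_n) \geq 2^{k(2^k - 1)}$. Plugging these into Lemma~\ref{lem:Weak_Neciporuk_NBP_principle} yields
\[
\NBP(\ED_n) \;\geq\; N \cdot b\bigl(2^{k(2^k - 1)}\bigr) \;=\; \Omega\bigl(2^{3k/2}\sqrt{k}\bigr),
\]
and the identities $2^k = \Theta(n/\log_2 n)$ and $k = \Theta(\log_2 n)$, both immediate from $n = 2k\cdot 2^k$, simplify this to $\Omega(n^{3/2}/\log_2 n)$. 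For $\ISA_n$ with $n \geq 5$, Lemma~\ref{lem:ISA_partitioning} yields a partition of $[n]$ with $p \geq n/(32\log_2 n)$ blocks $V_i$ satisfying $r_{V_i}(\ISA_n) = 2^q$ for some $q \geq n/16$; Lemma~\ref{lem:Weak_Neciporuk_NBP_principle} then gives
\[
\NBP(\ISA_n) \;\geq\; p \cdot \Bigl(\sqrt{\tfrac{q}{2}} - 1\Bigr) \;\geq\; \tfrac{n}{32\log_2 n} \cdot \Bigl(\sqrt{\tfrac{n}{32}} - 1\Bigr) \;=\; \Omega\Bigl(\tfrac{n^{3/2}}{\log_2 n}\Bigr),
\]
and the same calculation carries over to $\PBP$.

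The only bookkeeping issue is the hypothesis of Lemma~\ref{lem:Weak_Neciporuk_NBP_principle} that $f$ depend on all its inputs. This is transparent for $\ED_n$ at the distinguished sizes. For $\ISA_n$ at sizes not of the form $h_{\ISA}(k)$, the coordinates in the residual block $U$ that lie outside $[h_{\ISA}(k)]$ are irrelevant to the function; by Lemma~\ref{lem:Boolean_function_enlargement} we may equivalently apply the argument to $\ISA_{n'}$ on $[n']$ (which does depend on all its inputs) with the partition restricted accordingly, and Proposition~\ref{lem:Number_of_subfunctions_with_variable_independence} guarantees that the counts $r_{V_i}$ are unchanged, while $\NBP$ and $\PBP$ are unaffected by ignoring dummy variables. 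I expect no real obstacle: the heavy lifting was done in the earlier lemmas, and this proof is essentially plug-and-simplify.
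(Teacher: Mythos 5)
Your proposal is correct and follows essentially the same route as the paper: both apply the simple \Neci\ function $m \mapsto \ceiling{\sqrt{\tfrac{1}{2}\log_2 m}-1}$ of Proposition~\ref{ptn:Weak_Neciporuk_NBP_size_lower_bound} to the partitions of Lemma~\ref{ED-subfunctions-lemma} and Lemma~\ref{lem:ISA_partitioning} and carry out the same arithmetic. Your extra care about the ``depends on all its inputs'' hypothesis for padded instances of $\ISA_n$ (via Lemma~\ref{lem:Boolean_function_enlargement} and Proposition~\ref{lem:Number_of_subfunctions_with_variable_independence}) is a point the paper glosses over, but it does not change the argument.
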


\begin{proof}
We first consider $\ED_n$ for $n=2k2^k$ and $k\ge 2$. 
By Lemma~\ref{ED-subfunctions-lemma} there is a partition $V_1,\ldots, V_N$
of $[n]$ such that $r_{V_i}(\ED_n)\ge N^{N-1}$ and $\card{V_i}=2k$ for all
$i \in [N]$. 
Applying Proposition~\ref{ptn:Weak_Neciporuk_NBP_size_lower_bound}, since
$\ED_n$ depends on all its variables, we have
\begin{align*}
\NBP(\ED_n), \PBP(\ED_n) &\geq \sum_{i=1}^N \ceiling{\sqrt{\frac{1}{2} \log_2 r_{V_i}(\ED_n)}-1}
 \geq N \cdot \ceiling{\sqrt{\frac{1}{2} \log_2 N^{N-1}}}-N\\
&\geq N \cdot \sqrt{\frac{N-1}{2} \log_2 N}-N
\end{align*}
which is in $\Omega(N^{3/2}(\log_2 N)^{1/2})$ and hence
$\Omega\Bigl(\frac{n^{3/2}}{\log_2 n}\Bigr)$ since $n$ is
$O(N \log_2 N)$.

We now consider $\ISA$.
Let $n \in \N, n \geq 32$.
Let $V_1, \ldots, V_p, U$ be a partition of $[n]$ such that
$r_{V_i}(\isan) = 2^q$ for all $i \in [p]$ where $p, q \in \N_{>0}$ verify
$p \geq \frac{1}{32} \cdot \frac{n}{\log_2 n}$ and $q \geq \frac{n}{16}$ as
given by Lemma \ref{lem:ISA_partitioning}.
Applying Proposition~\ref{ptn:Weak_Neciporuk_NBP_size_lower_bound}, since
$\ISA_n$ depends on all its variables, we have
\begin{align*}
\NBP(\ISA_n), \PBP(\ISA_n)
& \geq \sum_{i = 1}^p \ceiling{\sqrt{\frac{1}{2} \log_2 r_{V_i}(\isan)}-1} \\
& \geq \sum_{i = 1}^p \Bigl(\sqrt{\frac{1}{2} \log_2(2^q)}-1\Bigr)
 = p\cdot \Bigl(\sqrt{\frac{q}{2}} -1\Bigr)\\
& \geq \frac{1}{32} \cdot \frac{n}{\log_2 n} \cdot
       \Bigl(\sqrt{\frac{n}{32}}-1\Bigr)
\displaypunct{.}
\end{align*}
So $\NBP(\ISA_n), \PBP(ISA_n) \in \Omega\Bigl(\frac{n^{3/2}}{\log_2 n}\Bigr)$.
\end{proof}

To understand the best lower bounds we can prove with the simple \Neci\
lower bound method, we first give a lower bound
on $N_{sem}(n, s)$ and $\oplus_{sem}(n,s)$ (valid for suitable values of $n, s \in \N$) that will allow
us to give an upper bound on all simple \Neci\ functions for $\NBP$, $\PBP$.
We do this by
giving an easy upper bound on the size needed by NBPs and $\oplus$BPs
to compute any $n$-ary Boolean function\footnote{Note that there are somewhat
tighter but more complicated upper bounds of $2^{n/2+1}$ for $\NBP$ due to
Lupanov~\cite{lup58} and a tight asymptotic upper bound of $2^{(n+1)/2}$ for $\PBP$
due to \Neci~\cite{ne62}, respectively; see~\cite{ju12}.}; i.e., simple
upper bounds on the Shannon function for $\NBP$ and $\PBP$.

\begin{lemma}
\label{lem:NBP_overall_upper_bound}
For any $n$-ary Boolean function $f$ on $\{0,1\}^n$ ($n \in \N$),
$\NBP(f),\PBP(f) \leq 3\cdot 2^{\ceiling{\frac{n}{2}}}$.
\end{lemma}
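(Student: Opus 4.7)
The plan is to build, for every $f\colon \set{0,1}^n \to \set{0,1}$, a single branching program $P$ of size at most $3\cdot 2^{\lceil n/2 \rceil}$ that computes $f$ both as an NBP and as a $\oplus$BP; the key will be to arrange $P$ so that every input has at most one accepting path, which makes the two semantics coincide. Set $k = \lceil n/2 \rceil$ and $\ell = n-k = \lfloor n/2 \rfloor \le k$, partition $[n]$ as $A = \set{1,\ldots,k}$ and $B = \set{k+1,\ldots,n}$, and for each $\tuple{a}\in \set{0,1}^k$ let $S_{\tuple{a}} = \set{\tuple{b}\in\set{0,1}^\ell : f(\tuple{a},\tuple{b})=1}$, so that $f(\tuple{a},\tuple{b}) = 1$ iff $\tuple{b}\in S_{\tuple{a}}$.

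I would construct $P$ in two stages. \emph{Stage~1} is a complete deterministic decision tree on the $A$-variables: $2^k-1$ internal nodes at depths $0,\ldots,k-1$ (each labelled by the appropriate $a_i$) followed by $2^k$ transit nodes $u_{\tuple{a}}$ at depth $k$, one per $\tuple{a}\in\set{0,1}^k$. \emph{Stage~2} is a shared ``verification DAG'' on the $B$-variables: for each $j \in \set{1,\ldots,\ell-1}$ and each residual target $\tuple{c} \in \set{0,1}^{\ell-j}$, one node $v_{j,\tuple{c}}$ labelled $b_{j+1}$, whose arc labelled $c_1$ goes to $v_{j+1,(c_2,\ldots,c_{\ell-j})}$ (or to $t_1$ when $j = \ell-1$) and whose other arc goes to $t_0$. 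Finally, each $u_{\tuple{a}}$ is labelled $b_1$ and emits, for every $\tuple{b}^*\in S_{\tuple{a}}$, exactly one arc, labelled $\tuple{b}^*_1$ and pointing to $v_{1,(\tuple{b}^*_2,\ldots,\tuple{b}^*_\ell)}$ (directly to $t_1$ when $\ell = 1$). The degenerate case $\ell = 0$ (i.e.\ $n \le 1$) is handled directly by making the leaves of the $A$-tree coincide with $t_0$ or $t_1$ according to $f$.

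For correctness, on input $(\tuple{a},\tuple{b})$ Stage~1 drives the computation deterministically to $u_{\tuple{a}}$; an accepting continuation must pick one arc at $u_{\tuple{a}}$, corresponding to some $\tuple{b}^*\in S_{\tuple{a}}$ with $\tuple{b}^*_1 = b_1$, after which the verification DAG deterministically checks whether $\tuple{b}^* = \tuple{b}$ and routes to $t_1$ only if so. Hence $P$ admits exactly one accepting path when $\tuple{b}\in S_{\tuple{a}}$ and none otherwise, so it computes $f$ both as an NBP and, since path counts lie in $\set{0,1}$, as a $\oplus$BP. Counting non-sink states, Stage~1 contributes $(2^k-1)+2^k = 2^{k+1}-1$ and Stage~2 contributes $\sum_{j=1}^{\ell-1} 2^{\ell-j} = 2^\ell - 2$, totalling $2^{k+1} + 2^\ell - 3 \le 3\cdot 2^k - 3 < 3\cdot 2^{\lceil n/2 \rceil}$ using $\ell \le k$. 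The main subtlety is engineering Stage~2 to be simultaneously shared across all $\tuple{a}$ and path-unique, which is precisely what lets a single object serve both models at once; once this is in place, the counting and the edge cases are routine.
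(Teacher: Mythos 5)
Your construction is correct and is essentially the paper's own: a deterministic decision tree on the first $\lceil n/2\rceil$ variables, a nondeterministic fan-out at the middle level with one arc per accepted completion, and a shared fan-in verification tree on the remaining variables, yielding a unique accepting path so the same program works for both $\NBP$ and $\PBP$. The only difference is cosmetic (you treat odd $n$ explicitly and place the guessing level at the first variable of the second half rather than the last variable of the first half).
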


\begin{proof}
Assume that $n=2t$ is even.
The constructed NBPs will have only one nondeterministic level, will be
the same for all functions for the other
levels 1 to $t-1$ and $t+1$ to $2t$, and every vertex at each level $i$ will
query variable $x_i$.

The first $t-1$ levels form a complete decision tree of height $t-1$ on
variables $x_1,\ldots, x_{t-1}$ with a vertex at level $t$ for each assignment
$a_1\ldots a_{t-1}$ to these variables.
The last $t$ levels of the NBP consist of a complete fan-in tree of height t
on variables $x_{t+1},\ldots,x_{2t}$ as follows: 
there is a vertex at level $t'>t$ for every assignment $a_{t'}\ldots a_{2t}$
to $x_{t'}, \ldots, x_t$ and there is an out-arc labelled $a_{t'}$ from
this vertex to the vertex at level $t'+1$ corresponding to
$a_{t'+1}\ldots a_{2t}$.
The $1$-output vertex has two in-arcs, one labelled $a_{2t}$
from each vertex $a_{2t}$ at level $2t$.

Finally, we define the nondeterministic level $t$ of the NBP for function $f$.
For each assignment $a_1\ldots a_{2t}$ on which $f$ evaluates to $1$, there
is an out-arc labelled $a_t$ from the vertex corresponding to
$a_1\ldots a_{t-1}$ at level $t$ (which queries $x_t$) to the vertex
corresponding to $a_{t+1}\ldots a_{2t}$ at level $t+1$. 

The constructed NBP has at most $3\cdot 2^t=3\cdot 2^{\frac{n}{2}}$ vertices.
By observing that there is precisely one accepting path on any accepted
input, we see that it is also a $\oplus$BP.
\end{proof}

\begin{corollary}
\label{cor:Nsem_lower_bound}
For all $n, s \in \N$, $n\ge 2\floor{\log_2(\frac{s}{3})}$,
$\Nsem(n,s),\oplus_{sem}(n,s)> 2^{\frac{s^2}{36}}$.
\end{corollary}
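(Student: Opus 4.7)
The plan is to obtain the lower bound on $\Nsem(n,s)$ and $\oplus_{sem}(n,s)$ by exhibiting many distinct Boolean functions of complexity at most $s$, using the universal upper bound from Lemma~\ref{lem:NBP_overall_upper_bound} to control complexity, and Lemma~\ref{lem:Boolean_function_enlargement} to lift functions from a smaller arity $n'$ up to arity $n$.

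Specifically, I would set $n' = 2\floor{\log_2(s/3)}$, which is even and, by hypothesis, satisfies $n' \leq n$. The key observation is that for any $n'$-ary Boolean function $g$, Lemma~\ref{lem:NBP_overall_upper_bound} gives
\[
\NBP(g),\PBP(g) \leq 3\cdot 2^{\ceiling{n'/2}} = 3\cdot 2^{\floor{\log_2(s/3)}} \leq 3\cdot(s/3) = s\displaypunct{.}
\]
Now invoke Lemma~\ref{lem:Boolean_function_enlargement} to turn each of the $2^{2^{n'}}$ distinct $n'$-ary Boolean functions into a distinct $n$-ary Boolean function that does not depend on the $n-n'$ extra variables; the same NBP (resp.\ $\oplus$BP) still computes the enlarged function, so each enlarged function has NBP and $\oplus$BP complexity at most $s$. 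This yields
\[
\Nsem(n,s),\oplus_{sem}(n,s) \geq 2^{2^{n'}}\displaypunct{.}
\]

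It remains to check that $2^{n'} > s^2/36$. Since $\floor{x} > x-1$ for all real $x$, we have $\floor{\log_2(s/3)} > \log_2(s/3) - 1 = \log_2(s/6)$, and therefore
\[
2^{n'} = 2^{2\floor{\log_2(s/3)}} > 2^{2\log_2(s/6)} = \frac{s^2}{36}\displaypunct{,}
\]
which gives $2^{2^{n'}} > 2^{s^2/36}$, as required.

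There is no real obstacle here since all the ingredients (the Shannon-type upper bound in Lemma~\ref{lem:NBP_overall_upper_bound}, the enlargement lemma, and the trivial count of $2^{2^{n'}}$ distinct $n'$-ary Boolean functions) are already in place; the only care needed is to pick $n'$ large enough to make $2^{2^{n'}}$ beat $2^{s^2/36}$ while small enough that the universal upper bound of $3\cdot 2^{n'/2}$ does not exceed $s$. The choice $n' = 2\floor{\log_2(s/3)}$ is exactly the largest even integer for which both conditions hold simultaneously, explaining both the form of the hypothesis on $n$ and the constant $36$ in the conclusion.
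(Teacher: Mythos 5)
Your proof is correct and follows essentially the same route as the paper's: reduce to arity $n' = 2\floor{\log_2(s/3)}$ (the paper phrases this as monotonicity of $\Nsem$ in $n$, you make it explicit via the enlargement lemma), count all $2^{2^{n'}}$ functions of that arity, bound each by size $s$ using Lemma~\ref{lem:NBP_overall_upper_bound}, and check $2^{n'} > s^2/36$. The arithmetic verification via $\floor{x} > x-1$ is just the paper's observation $s < 6\cdot 2^{n'/2}$ in a different guise.
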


\begin{proof}
Clearly $\Nsem(n,s)$ is non-decreasing in $n$,
so it suffices to prove the corollary for $n = 2 \floor{\log_2(\frac{s}{3})}$.
Then $3\cdot 2^{\frac{n}{2}}\le s < 6\cdot 2^{\frac{n}{2}}$.
There are precisely $2^{2^{n}} > 2^{\frac{s^2}{36}}$ different Boolean functions
on $n$ inputs and, by Lemma~\ref{lem:NBP_overall_upper_bound}, each may be
computed by an NBP or $\oplus$BP of size at most $s$.
\end{proof}

\begin{theorem}
\label{thm:Weak_Neciporuk_limitation_NBP}
Let $F = \{f_n\}_{n \in \N}$ be a family of Boolean functions.
Let $L\colon \N \to \N$ be such that for each $n \in \N$, the lower bound $L(n)$
for $\NBP(f_n)$ or $\PBP(f_n)$ has been obtained using the simple \Neci lower
bound method.
Then, $L(n) \in \Omicron\Bigl(\frac{n^{3/2}}{\log_2 n}\Bigr)$.
\end{theorem}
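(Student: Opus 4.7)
The plan is to unfold the definition of a simple \Neci function. Any lower bound $L(n)$ produced by the simple \Neci method is, by Definition~\ref{def:Weak_Neciporuk_NBP}, witnessed by a simple \Neci function $b$ for $\NBP$ (respectively $\PBP$), a Boolean function $f_n$ on some $V \subseteq \N$ depending on all its variables, and a partition $V_1, \ldots, V_p$ of $V$, such that $L(n) \leq \sum_{i=1}^p b(r_{V_i}(f_n))$. The very defining inequality of a simple \Neci function, applied to \emph{this} particular partition and function, immediately gives
\begin{equation*}
L(n) \leq \sum_{i=1}^p \max\Bigl\{\card{V_i},\, \min\{s \in \N \mid \Nsem(\card{V_i}, s) \geq r_{V_i}(f_n)\}\Bigr\}
\end{equation*}
(with $\oplus_{sem}$ in place of $\Nsem$ in the $\PBP$ case). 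It therefore suffices to bound this right-hand side by $\Omicron(n^{3/2}/\log_2 n)$ uniformly in the choice of function and partition.

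For each block, write $s_i^\star := \min\{s \in \N \mid \Nsem(\card{V_i}, s) \geq r_{V_i}(f_n)\}$. Two tools control $s_i^\star$. By Lemma~\ref{lem:Number_of_subfunctions_upper_bound}, $\log_2 r_{V_i}(f_n) \leq n$; together with Corollary~\ref{cor:Nsem_lower_bound}, this yields $s_i^\star \leq \lceil 6\sqrt{n}\rceil + 1$ as soon as $\card{V_i} \geq \log_2 n + 2$, because then the hypothesis $\card{V_i} \geq 2\floor{\log_2(s/3)}$ of the corollary is satisfied for $s = \lceil 6\sqrt{n}\rceil + 1$. For smaller blocks, the Shannon-type bound of Lemma~\ref{lem:NBP_overall_upper_bound} (applied to the $\card{V_i}$-ary subfunctions of $f_n$) gives the coarser $s_i^\star \leq 3\cdot 2^{\lceil \card{V_i}/2\rceil}$.

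To finish, I split the index set into \emph{small} blocks $S = \{i \mid \card{V_i} \leq \log_2 n + 2\}$ and \emph{big} blocks $B = \{i \mid \card{V_i} > \log_2 n + 2\}$. Since $\card{B}(\log_2 n + 2) < \sum_{i \in B}\card{V_i} \leq n$, the contribution of big blocks is at most $\sum_{i \in B}\card{V_i} + \card{B}\cdot(\lceil 6\sqrt{n}\rceil + 1) = \Omicron(n + n^{3/2}/\log_2 n)$. For small blocks, each $\max\{\card{V_i}, s_i^\star\}$ is bounded by $\card{V_i} + 3\cdot 2^{\lceil \card{V_i}/2\rceil}$; summing, the first part contributes at most $n$, and for the second, since the ``density'' $k \mapsto 2^{k/2}/k$ is increasing for $k \geq 3$, the sum $\sum_{i \in S} 2^{\lceil \card{V_i}/2\rceil}$ under the constraints $\sum_{i \in S}\card{V_i} \leq n$ and $\card{V_i} \leq \log_2 n + 2$ is maximised when all blocks are at the maximum allowed size, yielding at most $(n/(\log_2 n + 2))\cdot 2\sqrt{n} = \Omicron(n^{3/2}/\log_2 n)$.

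The main obstacle is the careful balancing required by the case split: the threshold $\log_2 n + 2$ is essentially forced, being just large enough for Corollary~\ref{cor:Nsem_lower_bound} to apply to big blocks with $s = \Omicron(\sqrt{n})$, yet small enough that the small-block contribution---coming from potentially $\Theta(n/\log_2 n)$ blocks each of Shannon-size up to $\Theta(\sqrt{n})$---still fits within $\Omicron(n^{3/2}/\log_2 n)$. The $\PBP$ case is handled identically, since Corollary~\ref{cor:Nsem_lower_bound} and Lemma~\ref{lem:NBP_overall_upper_bound} apply verbatim to $\oplus_{sem}$.
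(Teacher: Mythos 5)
Your proof is correct and follows essentially the same route as the paper's: reduce $L(n)$ to $\sum_{i}\max\{\card{V_i},s_i^\star\}$, bound $s_i^\star$ via the Shannon-type construction of Lemma~\ref{lem:NBP_overall_upper_bound} on small blocks and via the counting bound of Corollary~\ref{cor:Nsem_lower_bound} on large blocks, then sum the two contributions. The only differences are bookkeeping (the paper splits at $\card{V_i}\approx\log_2\log_2 r_{V_i}(f_n)$ and merges small blocks, you split at $\card{V_i}\approx\log_2 n$ and use a density argument); just note that your threshold $\log_2 n+2$ is a hair too tight against the hypothesis $\card{V_i}\ge 2\floor{\log_2(s/3)}$ of the corollary, which for $s=\ceiling{6\sqrt{n}}+1$ can require up to roughly $\log_2 n+2\log_2(7/3)$, but any $\log_2 n+\Omicron(1)$ threshold makes the argument go through unchanged.
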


\begin{proof}
Let $F = \{f_n\}_{n \in \N}$ be a family of Boolean functions, where for each
$n \in \N$, $f_n$ depends on all the variables in $D_n\subseteq [n]$.
Let $L\colon \N \to \N$ be such that
\[
L(n) = 
\max\Bigl\{\sum_{i = 1}^p \max\{\card{V_i},\min\{s \in \N \mid
				N_{sem}(\card{V_i}, s) \geq r_{V_i}(f_n)\}\}
	   \mathrel{\Big|}
	   V_1, \ldots, V_p \text{ partition } D_n \Bigr\}
\]
for all $n \in \N$.

Let $n \in \N$ and $V_1, \ldots, V_p$ be a partition of $D_n$.
Let $i \in [p]$ and set
$$s_i = \max\{\card{V_i},
\min\{s \in \N \mid N_{sem}(\card{V_i}, s) \geq r_{V_i}(f_n)\}\}.$$
Suppose that $s_i=
\min\{s \in \N \mid N_{sem}(\card{V_i}, s) \geq r_{V_i}(f_n)\}\}>\card{V_i}$.
We now have two cases depending on $\card{V_i}$:
If $\card{V_i} < \log_2\log_2 r_{V_i}(f) + 3$ then, by
Lemma~\ref{lem:NBP_overall_upper_bound}, since circuits of size
$3 \cdot 2^{\ceiling{\frac{\card{V_i}}{2}}}$ suffice to compute all
functions on $V_i$, which include those counted in $r_{V_i}(f_n)$, 
$$s_i\le 3 \cdot 2^{\ceiling{\frac{\card{V_i}}{2}}}
\le 3 \cdot 2^{(\log_2\log_2 r_{V_i}(f))/2 + 2} =
 12 \sqrt{\log_2 r_{V_i}(f)}.$$
On the other hand, if $\card{V_i} \geq \log_2\log_2 r_{V_i}(f) + 3$ then
setting $s = \ceiling{6 \sqrt{\log_2 r_{V_i}(f)}}$, then
$$2 \log_2(\frac{s}{3}) \leq
 2 \log_2\Bigl(\frac{6 \sqrt{\log_2 r_{V_i}(f)} + 1}{3}\Bigr)
 \leq 2 \log_2\Bigl(\frac{7}{3} \sqrt{\log_2 r_{V_i}(f)}\Bigr) \leq
 \log_2\log_2r_{V_i}(f) + 3 \leq \card{V_i}$$
so, by Corollary~\ref{cor:Nsem_lower_bound}, we have that
$N_{sem}(\card{V_i}, s) > 2^{\frac{s^2}{36}} \geq r_{V_i}(f)$, which means that
$s_i \leq \ceiling{6 \sqrt{\log_2 r_{V_i}(f)}}$.
Therefore, for all $n \in \N$,
\[
L(n) \leq 
\max\Bigl\{\sum_{i = 1}^p
	   \max\{\card{V_i},12 \sqrt{\log_2 r_{V_i}(f)}\}
	   \mathrel{\Big|}
	   V_1, \ldots, V_p \text{ partition } D_n \Bigr\}
\displaypunct{.}
\]
Let $n \in \N, n \geq 4$.
By Proposition~\ref{lem:Number_of_subfunctions_upper_bound}, it follows that
\begin{align*}
L(n)
\leq & \max\Bigl\{\sum_{i = 1}^p
		  \max\{\card{V_i},12 \sqrt{\log_2(\min\{2^{2^{\card{V_i}}},
						       2^{n - \card{V_i}}\})}\}
		  \mathrel{\Big|}
		  V_1, \ldots, V_p \text{ partition }
		  D_n \Bigr\}\\
= & \max\Bigl\{\sum_{i = 1}^p
	       \max\{v_i,12 \sqrt{\log_2(\min\{2^{2^{v_i}},
						    2^{n - v_i}\})} \}
						    \mathrel{\Big|}
	       \sum_{i = 1}^p v_i \le n  \text{ and }
	       \forall i \in [p], v_i > 0\Bigr\}\\
\leq & \max\Bigl\{\sum_{i = 1}^p
		  \max\{v_i,12 \sqrt{\min\{2^{v_i}, n - v_i\}}\}
		  \mathrel{\Big|}
		  \sum_{i = 1}^p v_i = n \text{ and  }
		  \forall i \in [p], v_i > 0\Bigr\}\\
= & \max\Bigl\{\sum_{i = 1}^p h(v_i) \mathrel{\Big|}
	       \sum_{i = 1}^p v_i = n \text{ and  }
	       \forall i \in [p], v_i > 0\Bigr\}
\tag{$\star\star$} \label{eq:Max_weak_Neciporuk_lower_bound}
\end{align*}
where $h(v) = \max\{v,12 \min\{2^{v/2}, \sqrt{n - v}\}\}$ for
all $v \in \N$.

Let now $v_1, \ldots, v_p$ realise the maximum in
\eqref{eq:Max_weak_Neciporuk_lower_bound}.
Clearly, $h(v) = 12 \cdot 2^{v/2}$ for all
$v \in \N, v \leq \log_2 n - 1$ and hence  $h(v) + h(v') \leq h(v + v')$ if
$v + v' \leq \log_2 n - 1$. It follows that without loss of generality we can
assume that there exists at most one $j \in [p]$ such that $v_j$ is smaller than
$\frac{\log_2 n - 1}{2}$. Such a small $v_j$ has
$h(v_j) =
 12 \cdot 2^{v_j/2} < 12 \cdot n^{1/4}$.
Let now $I \subseteq [p]$ such that $i \in I$ if and only if $h(v_i) = v_i$.
We have that $\sum_{i \in I} h(v_i) \leq n$ by definition of $v_1, \ldots, v_p$.
Moreover, in $[p] \setminus I$, there are at most $\frac{2 n}{\log_2 n - 1}$
elements, since for all $i \in I \setminus \{j\}$,
$v_i \geq \frac{\log_2 n - 1}{2}$, 
and each such $i$ verifies
$h(v_i) \leq 12 \sqrt{n - v_i} \leq 12 \sqrt{n}$.
Hence,
\[
L(n) \leq
\sum_{i = 1}^p h(v_i) \leq
24 \cdot \frac{n^{3/2}}{\log_2 n - 1} +
12 \cdot n^{1/4} + n
\leq 74 \cdot \frac{n^{3/2}}{\log_2 n}
\]
which completes the proof.
\end{proof}

\subsection*{Limitations of this Formulation}

Simply using some adaptation of
Definition~\ref{def:Weak_Neciporuk_NBP} would not allow us to recover the
well-known $\Omega\bigl(\frac{n^2}{\log n}\bigr)$ lower bound for size of
binary formulas contained in \Neci's original article~\cite{ne66}.
Indeed, for all $n, s \in \N$, let us denote by $F_{sem}(n, s)$ the number of
$n$-ary Boolean functions on some fixed $V$ computable by BFs of size at most
$s$. We can prove a Lemma analogous to
Lemma~\ref{lem:Weak_Neciporuk_NBP_principle} where $\NBP$ is replaced by $\BF$
and $N_{sem}$ by $F_{sem}$.
Similarly, we can define the \emph{simple \Neci lower bound method} for $\BF$ as in
Definition~\ref{def:Weak_Neciporuk_NBP}, as well as \emph{simple \Neci functions
for $\BF$} accordingly.
However, Lupanov showed (see~\cite[p.31]{ju12}) that for all
$n \in \N$, any $n$-ary Boolean function on some $V$ can be computed by a BF of
size at most $\alpha \cdot \frac{2^n}{\log_2 n}$ for some constant
$\alpha \in \R_{>0}$ (a result which is analogous to
Lemma~\ref{lem:NBP_overall_upper_bound}). Following the same strategy as for the
proofs of Corollary \ref{cor:Nsem_lower_bound} and Theorem
\ref{thm:Weak_Neciporuk_limitation_NBP}, we can show that this implies there
exists a constant $\beta \in \R_{>0}$ such that any simple \Neci function
$b\colon \N_{>0} \to \N$ for $\BF$ verifies
$b(m) \leq \beta \cdot \frac{\log_2 m}{\log_2\log_2\log_2 m}$
for any sufficiently large $m \in \N_{>0}$.
This means that this does not allow us to
recover the well-known \Neci bounding function of
$m \mapsto \frac{1}{4} \log_2 m$ (see e.g. \cite[Theorem 6.16]{ju12}), and
therefore also not \Neci's original lower bound.

Even if we managed to adapt
Definition~\ref{def:Weak_Neciporuk_NBP} to the case of binary formulas, we
cannot really do it in a clean way for all complexity measures we would like to
study. 
If we were to try to adapt
Lemma~\ref{lem:Weak_Neciporuk_NBP_principle} to the case of 
the size of limited nondeterministic branching programs (LNBPs), we would
define, as usual, for all $n, s, \delta \in \N$, the number
$LN_{sem}(n, s, \delta)$ of $n$-ary Boolean functions on some fixed $V$
computable by LNBPs of size at most $s$ and using $\delta$ nondeterministic
bits. But then, it would be false to say that for any $\delta, n \in \N$, for
any $n$-ary Boolean function $f$ depending on all of $V$ and any partition
$V_1, \ldots, V_p$ of $V$ we have
$\LNBP_\delta(f) \geq
 \sum_{i = 1}^p \max\{\card{V_i},
		      \min\{s \in \N \mid
			    LN_{sem}(\card{V_i}, s, \delta) \geq r_{V_i}(f)\}\}$
(this would induce an overcount, as we would most certainly count vertices
corresponding to nondeterministic variables several times).

These considerations led us to the more general
formulation of the \Neci\ method described in the next section.


\section{An abstract formulation of \Neci's method}
\label{sec:neciporuk}
In this section we present an abstract version of \Neci's lower bound method and
provide some model-independent meta-results on the limitations of this method.

The main idea of the general version of the method is, for a given Boolean
function, to partition
its set of input variables and to lower bound its complexity by a sum over each
element of the partition of a partial cost that depends only on the number of
subfunctions of the function on the variables in this element. More formally, we
state the method in the following way.

\begin{definition}\label{defi-neci}
For a given complexity measure $\M$ on Boolean functions, \emph{\Neci's
lower bound method} consists of the following.
\begin{enumerate}
\item Giving explicitly a non-decreasing function $b\colon \N_{>0} \to \N$ such
  that for any $n \in \N$, for any $n$-ary Boolean function $f$ and any
  partition $V_1, \ldots, V_p$ of~$[n]$, we have $\M(f) \geq \sum_{i =
    1}^p b(r_{V_i}(f))$.
\item
For a given $n$-ary Boolean function $g$, explicitly choosing a partition $V_1, \ldots, V_p$ of $[n]$,
computing $r_{V_i}(g)$ for all $i \in [p]$ and concluding that
$\M(g) \geq \sum_{i = 1}^p b(r_{V_i}(g))$.
\end{enumerate}
\end{definition}

A function $b$ satisfying the condition of Step~1 in \Neci's method is called a
\emph{\Neci function for $\M$} and we denote by $\NeciporukSet_{\M}$ the
\emph{set of all \Neci functions for $\M$}.

The first step of Definition~\ref{defi-neci} is usually not included in the
\Neci method. For instance in~\cite{we87,ju12}, an explicit \Neci function
$b$ is given for a complexity measure $\M$ and therefore the result concerning
the limitation of the method is relative to this function $b$. In the case of
deterministic branching programs, the best possible $b$ was given
by Alon and Zwick~\cite{alzw89}, who use a similar definition but we are not aware of any result of this kind for other
complexity measures.

It follows from Definition~\ref{defi-neci} that the best lower bound achievable by the
\Neci method for a family $F=\{f_n\}_{n\in\N}$ of Boolean functions and a complexity
measure $\M$ is the function $\NeciporukLB^{\M}_F$:
\begin{equation}
n  \mapsto  \max\set{\sum_{i = 1}^p b(r_{V_i}(f_n)) \mid
		    b \in \mathcal{N}_{\M} \wedge
		    V_1, \ldots, V_p \text{ partition of } [n]}
                \end{equation}

\subsection{Meta-results on \Neci's method}\label{sec-neci-meta}

We now give two results concerning \Neci's method depending on
hypotheses on the complexity measure $\M$.  We will apply those results in the
next section with the appropriate constants and functions for each of the
concrete computational models we consider in this paper.

The first meta-result is that an upper bound on the
complexity of the functions $\isakk$ implies an upper bound on
every $b \in 
\NeciporukSet_{\M}$. Intuitively this is possible because by definition,
$b$ entails a lower bound on $\M(f)$ for every function $f$.

\begin{lemma}
\label{lem:Neciporuk's_method_function_limitation_meta-result}
Let $\M$ be a given complexity measure on Boolean functions and assume that we
have a non-decreasing function
$g_\M\colon \cointerval{1}{+\infty} \to \R_{\geq 0}$ such that
$\M(\ISA_{k, k}) \leq g_\M(k)$ for all $k \in \N_{>0}$ and there exists a constant
$\alpha \in \R_{>0}$ such that $\frac{g_\M(k + 1)}{g_\M(k)} \leq \alpha$ for all
$k \in \N_{>0}$.
Then, any $b \in \NeciporukSet_{\M}$ is such that
\[b(m) \leq \alpha \cdot \frac{g_\M(\log_2\log_2 m)}{\log_2 m }\]
for all $m \in \N, m \geq 4$.
\end{lemma}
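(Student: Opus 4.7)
The plan is to test the assumed Neciporuk function $b$ directly against the family $\{\isakk\}_{k \in \N_{>0}}$, exploiting the fact that the partition provided by Lemma~\ref{lem:ISA_function_partitioning} yields many blocks each with the same large number of subfunctions. Specifically, for each $k \in \N_{>0}$, Lemma~\ref{lem:ISA_function_partitioning} (applied with $\ell = k$) furnishes a partition $V_1, \ldots, V_{2^k}, U$ of $[k + 2^k k + 2^k]$ with $r_{V_i}(\isakk) = 2^{2^k}$ for every $i \in [2^k]$. Plugging this partition into the defining inequality for $b \in \NeciporukSet_{\M}$ and dropping the non-negative term $b(r_U(\isakk))$, I obtain
\[
2^k \cdot b\bigl(2^{2^k}\bigr) \;\leq\; \sum_{i = 1}^{2^k} b(r_{V_i}(\isakk)) + b(r_U(\isakk)) \;\leq\; \M(\isakk) \;\leq\; g_\M(k).
\]
This gives the key inequality $b(2^{2^k}) \leq g_\M(k)/2^k$ for every $k \in \N_{>0}$.

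Next, given $m \in \N$ with $m \geq 4$, I would set $k = \floor{\log_2 \log_2 m}$, so that $k \geq 1$ and $2^{2^k} \leq m < 2^{2^{k+1}}$. Monotonicity of $b$ then yields $b(m) \leq b\bigl(2^{2^{k+1}}\bigr) \leq g_\M(k+1)/2^{k+1}$. Applying the hypothesis $g_\M(k+1) \leq \alpha \cdot g_\M(k)$ gives $b(m) \leq \alpha \cdot g_\M(k)/2^{k+1}$. Finally, translating the bracketing $2^{2^k} \leq m < 2^{2^{k+1}}$ into $2^k \leq \log_2 m < 2^{k+1}$ and using the non-decreasing character of $g_\M$ together with $k \leq \log_2 \log_2 m$, I conclude
\[
b(m) \;\leq\; \alpha \cdot \frac{g_\M(k)}{2^{k+1}} \;\leq\; \alpha \cdot \frac{g_\M(\log_2 \log_2 m)}{\log_2 m},
\]
which is the desired bound.

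There is no real obstacle here; the only points requiring care are verifying that the choice $k = \floor{\log_2 \log_2 m}$ is a valid positive integer for all $m \geq 4$ (so that $g_\M(k)$ and Lemma~\ref{lem:ISA_function_partitioning} apply), and using the two-sided bracketing of $m$ to simultaneously invoke monotonicity of $b$ on the upper side and of $g_\M$ with the denominator $\log_2 m$ on the lower side. The mild interest of the argument lies in the fact that the lower bound on $\M(\isakk)$ coming from the Neciporuk definition forces $b(2^{2^k})$ to be small on the double-exponential scale, and the ratio hypothesis on $g_\M$ is precisely what lets one transfer the bound from the sparse values $m = 2^{2^k}$ to all intermediate $m$ without loss beyond a factor $\alpha$.
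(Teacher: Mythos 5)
Your proposal is correct and follows essentially the same route as the paper: both derive the key inequality $b\bigl(2^{2^{k}}\bigr) \leq g_\M(k)/2^{k}$ by feeding the partition of Lemma~\ref{lem:ISA_function_partitioning} for $\isakk$ into the defining inequality of a \Neci\ function, then bracket $m$ between $2^{2^k}$ and $2^{2^{k+1}}$ and use monotonicity of $b$ and $g_\M$ together with the ratio hypothesis. The only cosmetic difference is that you state the key inequality for all $k$ and then instantiate it at $k+1$, whereas the paper works directly with $\ISA_{k+1,k+1}$.
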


\begin{proof}
  Let $b \in \NeciporukSet_{\M}$. Let $m \in \N, m \geq 4$ and $k\in \N_{>0}$ be
  such that $2^{2^k} \leq m \leq 2^{2^{k + 1}}$. Hence $2^k \leq \log_2 m \leq
  2^{k + 1}$ and of course $k \leq \log_2\log_2 m \leq k + 1$.  Consider now
  $\ISA_{k + 1, k + 1}$. By Lemma~\ref{lem:ISA_function_partitioning} we have a
  partition $V_1, \ldots, V_{2^{k + 1}}, U$ of the set of indices $[(k + 1) +
  2^{k + 1} (k + 1) + 2^{k + 1}]$ of the input variables of $\ISA_{k + 1, k +
    1}$ such that $r_{V_i}(\ISA_{k + 1, k + 1}) = 2^{2^{k + 1}}$ for all $i \in
  [2^{k + 1}]$. By hypothesis, it therefore follows that:
\begin{align*}
g_\M(k + 1)
& \geq \M(\ISA_{k + 1, k + 1})\\
& \geq \sum_{i = 1}^{2^{k + 1}} b(r_{V_i}(\ISA_{k + 1, k + 1})) +
       b(r_U(\ISA_{k + 1, k +1}))\\
& \geq \sum_{i = 1}^{2^{k + 1}} b\bigl(2^{2^{k + 1}}\bigr)\\
& = 2^{k + 1} b\bigl(2^{2^{k + 1}}\bigr)
\displaypunct{,}
\end{align*}
therefore $b(2^{2^{k + 1}}) \leq \frac{g_\M(k + 1)}{2^{k + 1}}$.
But since $\frac{g_\M(k + 1)}{g_\M(k)} \leq \alpha$,
$g_\M(k) \leq g_\M(\log_2\log_2 m)$ (because $g_\M$ is non-decreasing and
$1 \leq k \leq \log_2\log_2 m$),  $b$ is non-decreasing and $m \leq 2^{2^{k +
    1}}$ we have:
\[
b(m) \leq b\bigl(2^{2^{k + 1}}\bigr) \leq \frac{g_\M(k + 1)}{2^{k + 1}} \leq
\alpha \cdot \frac{g_\M(\log_2\log_2 m)}{\log_2 m}
\displaypunct{.}
\]
In conclusion, we indeed have that
$b(m) \leq \alpha \cdot \frac{g_\M(\log_2\log_2 m )}{\log_2 m}$
for all $m \in \N, m \geq 4$.
\end{proof}

Assuming an upper bound on every $b \in \NeciporukSet_{\M}$, as given for
instance by the previous lemma, we can derive an upper bound on
$\NeciporukLB^{\M}_F$ independently of the family of Boolean functions
$F$. That is to say that we can give an overall (asymptotic) upper bound on the
best complexity lower bounds we may obtain using \Neci's lower bound method for
the complexity measure $\M$, exhibiting the limitation of the method.

\begin{lemma}
\label{lem:Neciporuk's_method_limitation_complexity_lower_bound_meta-result}
Let $\M$ be a given complexity measure on Boolean functions and assume that we
have a function $h_\M\colon \cointerval{4}{+\infty} \to \R_{\geq0}$ such that there
exist $x_0 \in \cointerval{2^8}{+\infty}$ and a constant $\alpha \in \R_{>0}$
verifying that:
\begin{enumerate}[(i)]
\item $h_\M$ is non-decreasing on $\cointerval{x_0}{+\infty}$;
\item $h_\M(2^x) \geq \log_2 x$ for all $x \in \cointerval{\log_2 x_0}{+\infty}$;
\item $h_\M(2^{2^v}) + h_\M(2^{2^{v'}}) \leq h_\M(2^{2^{v + v'}})$ for all $v, v' \in \N$
      verifying $2^{2^v} \geq x_0$ and $2^{2^{v'}} \geq x_0$;
\item for all $b \in \NeciporukSet_{\M}$ and $m \in \N, m \geq 4$, we have $b(m) \leq \alpha \cdot h_\M(m)$.
\end{enumerate}
    Then, for any family of Boolean functions $F = \{f_n\}_{n \in \N}$, we have
\[
\NeciporukLB^{\M}_F(n) \leq
\alpha \cdot \Bigl(4 + h_\M(\floor{x_0})\Bigr) \cdot
\frac{n}{\log_2 n} \cdot h_\M(2^n)
\]
for all $n \in \N, n \geq \log_2 x_0$.
\end{lemma}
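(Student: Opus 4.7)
The strategy is to fix $n \geq \log_2 x_0$, $b \in \NeciporukSet_{\M}$ and a partition $V_1, \ldots, V_p$ of $[n]$ (writing $v_i = \card{V_i}$), and bound $\sum_{i=1}^p b(r_{V_i}(f_n))$. First, hypothesis~(iv) converts each $b$-value to an $h_\M$-value; then I split the indices into three classes and handle each separately using (i)--(iii), with (iii) being the crucial tool for the dominant contribution.

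The split is (a) $I_1 = \set{i : r_{V_i}(f_n) \leq \floor{x_0}}$; (b) $I_2 = \set{i : r_{V_i}(f_n) > \floor{x_0} \wedge v_i \leq \log_2 n}$; (c) $I_3 = \set{i : r_{V_i}(f_n) > \floor{x_0} \wedge v_i > \log_2 n}$. For class~(a), since $b$ is non-decreasing and $\floor{x_0} \geq 2^8 \geq 4$, each term is at most $b(\floor{x_0}) \leq \alpha h_\M(\floor{x_0})$ by~(iv); as there are at most $n$ such indices, the contribution is at most $\alpha n h_\M(\floor{x_0})$. For class~(c), Lemma~\ref{lem:Number_of_subfunctions_upper_bound} gives $r_{V_i}(f_n) \leq 2^{n - v_i} \leq 2^n$, so monotonicity~(i) yields $h_\M(r_{V_i}(f_n)) \leq h_\M(2^n)$; since each such block occupies more than $\log_2 n$ variables, $\card{I_3} \leq n/\log_2 n$, and the contribution is at most $\alpha (n/\log_2 n) h_\M(2^n)$.

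The delicate case is~(b). Using $r_{V_i}(f_n) \leq 2^{2^{v_i}}$ (and $2^{2^{v_i}} \geq r_{V_i}(f_n) > x_0$), monotonicity gives $h_\M(r_{V_i}(f_n)) \leq h_\M(2^{2^{v_i}})$. I would greedily pack the blocks of $I_2$ into groups $G_1, \ldots, G_r$ with $\sum_{i \in G_j} v_i \leq \floor{\log_2 n}$; since every $v_i \leq \log_2 n$, a standard first-fit analysis gives $r \leq 2n/\log_2 n + O(1)$. Inside each group, iterating~(iii) yields $\sum_{i \in G_j} h_\M(2^{2^{v_i}}) \leq h_\M(2^{2^{\sum_{i \in G_j} v_i}}) \leq h_\M(2^n)$, and the hypothesis of~(iii) is preserved at every step since each $2^{2^{v_i}}$ already exceeds $x_0$. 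The contribution of~(b) is thus at most $\alpha (2n/\log_2 n + O(1)) h_\M(2^n)$.

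Combining (b) and (c) and absorbing the $O(1)$ term using $n \geq \log_2 x_0 \geq 8$ gives at most $4\alpha \cdot (n/\log_2 n) \cdot h_\M(2^n)$. For class~(a), hypothesis~(ii) applied at $x = n$ gives $h_\M(2^n) \geq \log_2 n$, hence $\alpha n h_\M(\floor{x_0}) \leq \alpha h_\M(\floor{x_0}) (n/\log_2 n) h_\M(2^n)$. Adding the three contributions yields the target bound $\alpha(4 + h_\M(\floor{x_0})) (n/\log_2 n) h_\M(2^n)$, and taking the maximum over $b$ and over partitions gives the statement for $\NeciporukLB^{\M}_F(n)$. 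The main obstacle is case~(b): one must check that greedy grouping realises the right constant in front of $h_\M(2^n)$ while keeping the applicability conditions of~(iii) satisfied throughout the iteration.
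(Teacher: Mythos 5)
Your proposal is correct and follows essentially the same route as the paper's proof: convert $b$ to $h_\M$ via (iv), bound $r_{V_i}(f_n)$ by $\min\{2^{2^{v_i}}, 2^{n-v_i}\}$, aggregate the small blocks using the superadditivity hypothesis (iii), count the at most $n/\log_2 n$ blocks with $v_i > \log_2 n$, and absorb the at most $n$ low-subfunction-count blocks into the main term via (ii). The only real difference is in how the small blocks are combined — you greedily bin-pack them into at most $2n/\floor{\log_2 n}+1$ groups of total size at most $\floor{\log_2 n}$ and iterate (iii) inside each group, whereas the paper uses (iii) as an exchange argument to show that a maximizing partition has at most one block with $v_i \leq (\log_2 n - 1)/2$ — and both versions deliver the constant $4$ (your ``$+O(1)$'' is really $2n/\floor{\log_2 n}+1$, but this is still below $3n/\log_2 n$ for all $n\geq 8$, as needed).
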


\begin{proof}
The condition $n \geq \log_2 x_0$ ensures that $h_\M(2^n)$ is always well defined
and satisfies~(ii).
Let $F = \{f_n\}_{n \in \N}$ be a family of Boolean functions. For all
$n \in \N_{>0}$, let $h'_n\colon [n] \to \R$ be the function defined on $[n]$ by
\begin{equation*}
h'_n(v) = \begin{cases}
	  \alpha \cdot \min\{h_\M(2^{2^v}), h_\M(2^{n - v})\} &
	  \text{if $\min\{2^{2^v}, 2^{n - v}\} \geq x_0$}\\
	  \alpha \cdot h_\M(\floor{x_0}) & \text{otherwise.}
	  \end{cases}
\end{equation*}

\begin{myclaim}\label{claim-h-prime}
If $n \in \N_{>0}$ and $v \in [n]$ are such that
$v\leq \log_2 n - 1$ and $2^{2^v}\geq x_0$ then
$h'_n(v) = \alpha \cdot h_\M(2^{2^v})$.
\end{myclaim}
\begin{proof}
With the hypothesis of the claim we have:
\[
2^{2^v} \leq 2^{2^{\log_2 n - 1}} = 2^{\frac{n}{2}} \leq
2^{n - \log_2 n + 1} \leq 2^{n - v}
\displaypunct{,}
\]
the middle inequality being a consequence of $n \geq  
\log_2 x_0 \geq 8$. Hence in this case
$\min\set{2^{2^v}, 2^{n - v}} \allowbreak = 2^{2^v}$
which is greater  than  $x_0$. As by (i) $h_\M$ is non-decreasing we
have $h'_n(v) = \alpha \cdot h_\M(2^{2^v})$.
\end{proof}

Let $n \in \N_{>0}$, $b \in \NeciporukSet_{\M}$ and $V \subseteq [n], V \neq
\emptyset$.  According to~(iv), we have $b(m) \leq \alpha \cdot h_\M(m)$
for all $m \in \N, m \geq 4$. Moreover, by
Lemma~\ref{lem:Number_of_subfunctions_upper_bound}, we have $r_V(f_n) \leq
\min\{2^{2^{\card{V}}}, 2^{n - \card{V}}\}$, so since $b$ is non-decreasing,
it follows that
$b(r_V(f_n)) \leq b(\min\{2^{2^{\card{V}}}, 2^{n - \card{V}}\})$.
Now, if $\min\{2^{2^{\card{V}}}, 2^{n - \card{V}}\} \geq x_0$, we get that
\begin{align*}
b(\min\{2^{2^{\card{V}}}, 2^{n - \card{V}}\})
& \leq \alpha \cdot h_\M(\min\{2^{2^{\card{V}}}, 2^{n - \card{V}}\}) &
  \text{by (iv)}\\
& = \alpha \cdot \min\{h_\M(2^{2^{\card{V}}}), h_\M(2^{n - \card{V}})\} &
    \text{by (i)}\\
& = h'_n(\card{V})
\displaypunct{;}
\end{align*}
otherwise (i.e. $\min\{2^{2^{\card{V}}}, 2^{n - \card{V}}\} < x_0$), we get that
\[
b(\min\{2^{2^{\card{V}}}, 2^{n - \card{V}}\}) \leq b(\floor{x_0}) \leq
\alpha \cdot h_\M(\floor{x_0}) = h'_n(\card{V})
\]
since $b$ is non-decreasing and $\floor{x_0} \geq 4$. Hence,
$b(r_V(f_n)) \leq h'_n(\card{V})$ for all $n \in \N_{>0}$,
$b \in \NeciporukSet_{\M}$ and $V \subseteq [n], V \neq \emptyset$.
Therefore, by definition, it follows that for all $n \in \N_{>0}$, we have
\begin{align*}
\NeciporukLB^{\M}_F(n)
& = \max\Bigl\{\sum_{i = 1}^p b(r_{V_i}(f_n)) \mathrel{\Big|}
	       b \in \NeciporukSet_{\M} \text{ and }
	       V_1, \ldots, V_p \text{ partition of } [n]\Bigl\}\\
& \leq \max\Bigl\{\sum_{i = 1}^p h'_n(\card{V_i}) \mathrel{\Big|}
		  V_1, \ldots, V_p \text{ partition of } [n]\Bigr\}\\
& = \max\Bigl\{\sum_{i = 1}^p h'_n(v_i) \mathrel{\Big|}
	       \sum_{i = 1}^p v_i = n \text{ and }
	       \forall i \in [p], v_i>0\Bigr\}
\tag{$\star$} \label{eq:Max_Neciporuk_lower_bound}
\displaypunct{.}
\end{align*}

Let $n \in \N, n \geq \log_2 x_0$ and $v_1, \ldots, v_p\in\N_{>0}$
such that $\sum_{i = 1}^p v_i = n$ that realizes the maximum
\eqref{eq:Max_Neciporuk_lower_bound}. We first show that without loss of
generality we can assume that there exists at most one $j \in [p]$ such that
$\min\{2^{2^{v_j}}, 2^{n - v_j}\} \geq x_0$ and $v_j \leq
 \frac{\log_2 n - 1}{2}$.

If this is not the case then we have $v, v' \in [n]$ such that
$\min\set{2^{2^v}, 2^{n - v}, 2^{2^{v'}}, 2^{n - v'}} \geq x_0$
and $v + v' \leq \log_2 n - 1$. It follows from (iii) and Claim~\ref{claim-h-prime} that
\[
h'_n(v) + h'_n(v') = \alpha \cdot (h_\M(2^{2^v}) + h_\M(2^{2^{v'}})) \leq
\alpha \cdot h_\M(2^{2^{v + v'}}).
\]
But as $v \leq v + v' \leq \log_2 n - 1$, we have
\[
\min\{2^{2^{v + v'}}, 2^{n - (v + v')}\} = 2^{2^{v + v'}} \geq 2^{2^v} =
\min\{2^{2^v}, 2^{n - v}\} \geq x_0
\displaypunct{;}
\]
hence by Claim~\ref{claim-h-prime},
$\alpha \cdot h_\M(2^{2^{v + v'}})= h'_n(v + v')$ and
$h'_n(v) + h'_n(v')\leq h'_n(v + v') $ and the partition that unifies the
corresponding sets would yield a bound at least as big
in~\eqref{eq:Max_Neciporuk_lower_bound}.

If it exists this $j$ is such that
\[
h'_n(v_j) = \alpha \cdot h_\M(2^{2^{v_j}}) \leq
\alpha \cdot h_\M\Bigl(2^{2^{\frac{\log_2 n - 1}{2}}}\Bigr) \leq
\alpha \cdot h_\M(2^n).
\]

Consider now the remaining elements of the partition, i.e. those $i \in [p]
\setminus \{j\}$. If moreover we have $\min\{2^{2^{v_i}}, 2^{n - v_i}\} \geq
x_0$ then by definition of $h'_n$ we have
\begin{equation*}
h'_n(v_i) \leq \alpha \cdot h_\M(2^{n - v_i}) \leq \alpha \cdot h_\M(2^n).
\end{equation*}
As for this case we have  $v_i > \frac{\log_2 n - 1}{2}$ there are at most
$\frac{2 n}{\log_2 n -  1}$ such $i$. Notice that $\frac{n}{\log_2 n -  1} \leq
\frac{3}{2} \cdot \frac{n}{\log_2 n}$ for $n \geq 8$.

If otherwise $\min\{2^{2^{v_i}}, 2^{n - v_i}\} < x_0$, then we have 
$$h'_n(v_i) = \alpha \cdot h_\M(\floor{x_0})$$
and there are at most $n$ such $i$.

Putting all together, we get that
\begin{align*}
\NeciporukLB^{\M}_F(n)
& \leq \max\Bigl\{\sum_{i = 1}^p h'_n(v_i) \mathrel{\Big|}
		  \sum_{i = 1}^p v_i = n \text{ and } \forall i \in [p], v_i >0
	   \Bigr\}\\
& \leq \alpha \cdot h_\M(2^n) +
       \frac{3 n}{\log_2 n} \cdot \alpha \cdot h_\M(2^n) +
       n \cdot \alpha \cdot h_\M(\floor{x_0})\\
& \leq \alpha \cdot h_\M(2^n) +
       \frac{3 n}{\log_2 n } \cdot \alpha \cdot h_\M(2^n) +
       \frac{h_\M(2^n)}{\log_2 n} \cdot n \cdot \alpha \cdot h_\M(\floor{x_0}) & \text{by (ii)}\\
& \leq \alpha \cdot \Bigl(4 + h_\M(\floor{x_0})\Bigr) \cdot
    \frac{n}{\log_2 n} \cdot h_\M(2^n).
\end{align*}
\end{proof}


\section{Upper Bounds for the Computation of \isakl}
\label{sec:upper_bounds}
The \isakl\ functions play a critical role in our approach to studying \Neci's
method.  This section collects size upper bounds for computing \isakl\ on
every model considered in this paper.  These bounds will be required when
limits to the \Neci\ method for these models are investigated.

\begin{theorem}\label{thm:all_uppers}
\label{lem:ISA_LNBP_size_upper_bound}
\label{lem:ISA_BP_size_upper_bound}
\label{lem:ISA_NBP_size_upper_bound}
\label{lem:ISA_LNBF_size_upper_bound}
\label{lem:ISA_BF_size_upper_bound}
Let $\delta \in \N$. For all $k, \ell \in \N_{>0}$,
\begin{align}
    \NBP(\isakl), \PBP(\isakl) &\leq 3 \cdot 2^{k + \frac{\ell}{2}} + 2^\ell\\
\LNBP_\delta(\isakl) &\leq
\begin{cases}
12 \cdot 2^k \max\bigl\{\frac{2^{\ell - \delta}}{\ell - \delta}, \ell\bigr\} +
\frac{2^{2 \ell - \delta}}{\ell - \delta} & \text{if $\ell>\delta$}\\
2^k (3 \ell + 1) + 2 \cdot 2^\ell & \text{if $\ell\leq \delta$}
\end{cases}
\\
\BP(\isakl) &\leq 9 \cdot \frac{2^{k + \ell}}{\ell} + \frac{2^{2
    \ell}}{\ell}\\
\LNBF_\delta(\isakl) &\leq
12 \cdot 2^k \cdot \max\{2^{\ell - \delta}, \ell\} + 3 \cdot 2^\ell
\\
\BF(\isakl) &\leq 7\cdot 2^k\cdot 2^\ell.
\end{align}
\end{theorem}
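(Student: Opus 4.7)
The plan is to give explicit constructions witnessing each of the five upper bounds, all sharing a common two-phase architecture that mirrors the definition of $\isakl$: first the $k$-bit primary pointer is processed to select one of the $2^k$ secondary pointers, then an $\ell$-of-$2^\ell$ selector uses the chosen $\ell$-bit address to pick the appropriate data bit. The savings below the naive $O(2^k \cdot 2^\ell)$ bound come from sharing the data-reading substructure across the $2^k$ primary-pointer threads, together with whichever resource (nondeterminism or Lupanov-style table compression) the model in question provides.

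For the NBP/$\oplus$BP bound, I would build a three-layer device. The first layer is a complete decision tree of $2^k - 1$ nodes on the primary pointer. Attached to each of its $2^k$ leaves sits a meet-in-the-middle gadget of the kind used in Lemma~\ref{lem:NBP_overall_upper_bound}, applied to the $\ell$-bit selected secondary pointer: a depth-$\lceil \ell/2 \rceil$ deterministic tree reads the first half of the address, then a single nondeterministic layer guesses the second half. All these gadgets feed into a single shared bank of $2^\ell$ data-reading nodes indexed by the full $\ell$-bit guess. Counting carefully gives $3 \cdot 2^{k + \ell/2} + 2^\ell$. Reinterpretation as a $\oplus$BP is valid because on every accepted input exactly one path through the guess layer is consistent with the true secondary pointer value.

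For the deterministic BP bound, the extra factor-of-$\ell$ saving stems from a Lupanov-style universal-table construction on the selector stage: splitting the $\ell$-bit address into a high portion of roughly $\ell - \log_2 \ell$ bits (used to traverse a primary-pointer-dependent decoder of size $O(2^{k+\ell}/\ell)$) and a low portion of $\log_2 \ell$ bits (combined with the data to index a shared universal table of size $O(2^{2\ell}/\ell)$). For $\delta$-LNBP with $\ell > \delta$, the $\delta$ nondeterministic bits guess the high $\delta$ bits of the address, reducing the deterministic selector to $O(2^{\ell-\delta}/(\ell-\delta))$ per primary-pointer thread; when $\ell \leq \delta$ the guess already fixes the entire secondary pointer and only $O(\ell)$ verification nodes per thread remain, explaining the case split in the LNBP bound. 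The LNBF and BF constructions are analogous but simpler since formulas allow free recomputation: a recursive splitting on the primary pointer gives $\BF(\isakl) \leq 2\, \BF(\ISA_{k-1,\ell}) + O(1)$, bottoming out at a compact selector formula on $\ell + 2^\ell$ variables; for LNBF, $\delta$ guess bits shrink the inner selector by a factor of $2^\delta$.

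The main obstacle will be the constant-factor bookkeeping needed to match the stated prefactors ($3$, $9$, $12$, $7$, and so on); each construction has cross-over boundaries between the pointer-decoding and data-accessing substructures that must be counted carefully, and the Lupanov-style block size must be tuned precisely. A secondary subtlety is ensuring that the shared data-reading substructure is compatible with the determinism constraints of the BP models (in particular the single-out-edge-per-label requirement of Definition~\ref{def:BP}), which can be resolved by routing consistency checks through a dedicated addressing layer placed just before the data query.
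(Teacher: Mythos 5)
Your overall architecture is the right one, and for the deterministic and limited-nondeterministic branching programs it matches the paper's construction in spirit (defer reading $m$ bits of the secondary pointer, read the $2^m$ candidate data bits into the state at a cost of $2^{2^m}$ states per thread of the shared stage, then re-read the primary pointer and the deferred address bits to resolve; the paper's tuning is $m=\lfloor\log_2(\ell-\delta-\log_2((\ell-\delta)^2))\rfloor$). However, two of the five bounds have genuine gaps that are not mere constant-factor bookkeeping.

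First, your $\NBP$/$\PBP$ construction as described computes the wrong function. After the nondeterministic layer guesses the second half $w'$ of the secondary pointer and the shared bank reads $\mathit{Data}[w,w']$, the program must still \emph{verify} that $w'$ equals the actual bits $\sec[p]_{\lceil\ell/2\rceil+1},\ldots,\sec[p]_{\ell}$; otherwise it accepts whenever \emph{some} completion of the address hits a $1$ in the data string. The difficulty is that by the time the data bit is read, the primary pointer $p$ has necessarily been forgotten (else the data bank could not be shared), so the program no longer knows \emph{which} secondary pointer to check the guess against. The paper's solution is to guess $p$ a second time after the data read and then verify both guesses with inverted-tree gadgets (Lemma~\ref{lem:constructs}.3); this verification accounts for roughly two thirds of the $3\cdot 2^{k+\ell/2}$ term, and your assertion that "exactly one path through the guess layer is consistent" presupposes exactly this missing machinery. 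Second, the $\LNBF_\delta$ bound cannot be reached by recursive splitting on the primary pointer: that tree structure replicates the $2^\ell$ data leaves once per primary-pointer value, so it yields $\Omega(2^{k+\ell})$ leaves no matter how the inner selector is shrunk, whereas the claimed bound $12\cdot 2^k\max\{2^{\ell-\delta},\ell\}+3\cdot 2^\ell$ is $o(2^{k+\ell})$ already for $\delta\geq 1$ and large $k$. (Shrinking the inner selector by a factor of $2^\delta$ is also not possible if that selector contains the data leaves, since all $2^\ell$ data positions remain reachable.) The paper instead writes the formula as $V\wedge D$ with a \emph{single} data selector $D$ whose low-order address bits are computed by size-$O(2^k)$ selector subformulas over the primary pointer and whose high-order address bits are the nondeterministic variables themselves, with $V$ checking those variables against the selected secondary pointer. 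Your recursion is fine for the plain $\BF$ bound (it even gives about $3\cdot 2^{k+\ell}$), but it does not extend to $\LNBF_\delta$.
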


\begin{proof}
Recall the notation used to refer to the bits of an \isakl\ instance
$\tuple{a}$.
Here we further use $a_1,\ldots,a_k$ for the bits of the primary
pointer and (when relevant) $x_{n+1},\ldots,x_{n+\delta}$ for the
nondeterministic variables.

We begin with simple constructions:

\begin{lemma}\label{lem:constructs}
Let $v_1,\ldots,v_k,y_1,\ldots,y_k,z_1,\ldots, z_{2^k}$ be Boolean
variables, $k\geq 1$. 
\begin{enumerate}
\item A size $2^k-1$ deterministic branching program can ``read''
  $v_1,\ldots,v_k$ and route the $2^k$ possible outcomes to $2^k$
  distinct arcs;
\item A size $3k$ deterministic branching
  program can test whether $v_i=y_i$ holds for every $i\in[k]$;
\item A size $2^{k+1}-2$ deterministic branching program with $2^k$
  distinguished states $s_w$ for $w\in\{0,1\}^k$ can ascertain that
  $(v_1,\ldots,v_k)=w$, i.e., has the property that
  for each $w$, a computation started at
  $s_w$ accepts iff $(v_1,\ldots,v_k)=w$;
\item A size $4k$ formula can test whether $v_i=y_i$ holds for every
  $i\in[k]$; 
\item A formula with leaves $z_1,\ldots,z_{2^k}$ and, for every
  $i\in [k]$, with $2^i$ leaves  $v_i$ or $\neg v_i$
  can compute $z_{\bin_k(v_1,\ldots,v_k) + 1}$.
\end{enumerate}
\end{lemma}

\begin{proof}
For (1), a full binary tree suffices.
For (2), a size-$3$ program can test whether $v_i=y_i$ for a fixed $i$, so a
cascade of $k$ such programs can check equality for every $i$.
For (3), an inverted binary tree first
queries $v_1$ at each of $2^k$ leaves $s_w$, $w\in\{0,1\}^k$; each
answer $a\in\{0,1\}$ branches from $s_w$ to the unique state $s_{w'}$, among
$2^{k-1}$ states at the next level, for which $w=aw'$;
each state at this next level queries $v_2$ and branches to one
of $2^{k-2}$ states at the next level, and so on,
down to level $k$ with two states querying $v_k$, for a
total of $\Sigma_{1\leq i\leq k}2^i$ states; every missing arc in the
above description rejects.

For (4), the formula $\wedge_{1\leq i\leq k}[(v_i\wedge y_i)\vee
(\neg v_i\wedge \neg y_i)]$ expanded into a binary tree has $4k$
leaves.
For (5), we note that
$(\neg v_1\wedge z_1)\vee (v_1\wedge z_2)$ computes
$z_{\bin_1(v_1) + 1}$ and use induction, having computed
$z_{\bin_k(0,v_2,\ldots,v_k) + 1}$ from the leaves
$z_1,\ldots,z_{2^{k-1}}$ and the $2^i$ leaves
  $v_{i+1}$ or $\neg v_{i+1}$ for $i\in[k-1]$, and having computed 
$z_{\bin_k(1,v_2,\ldots,v_k) + 1}$ similarly from the leaves
$z_{2^{k-1}+1},\ldots,z_{2^k}$ and $2^i$ further leaves
  $v_{i+1}$ or $\neg v_{i+1}$ for $i\in[k-1]$.
\end{proof}

\emph{The $\NBP$ case}.
If $\ell=1$ then, by Lemma~\ref{lem:constructs}.1, a
(deterministic) BP of 
size $2^k-1 + 2^k + 2<3\cdot 2^k + 2^\ell$ computes \isakl.
So let $\ell>1$.
For every $w\in \{0,1\}^{\ceiling{\ell/2}}$, $w'\in \{0,1\}^{\floor{\ell/2}}$ and
$p'\in[2^k]$, 
the NBP will have states $s_{(w,w')}$, $(p',s_{w'})$ and $p'$.
Together with further states, the NBP
implements the following:

\begin{itemize}
\item Read bits $a_1,\ldots,a_k,\sec_1,\ldots,\sec_{\ceiling{\ell/2}}$.
\item Guess $w'\in  \{0,1\}^{\floor{\ell/2}}$ and branch
  to $s_{(\sec_1,\ldots,\sec_{\ceiling{\ell/2}},w')}$, forgetting $a_1,\ldots,a_k$.\\
  (For every $w'\in  \{0,1\}^{\floor{\ell/2}}$ and for every $a\in\{0,1\}$, every
  state querying 
  $\sec_{\ceiling{\ell/2}}$, i.e., every bottom node in the binary tree
  formed by the first stage,
  is connected to the state $s_{(\sec_1,\ldots,\sec_{\ceiling{\ell/2}-1},a,w')}$
  with an arc labelled $a$.)
\item If \emph{Data}[$\sec_1,\ldots,\sec_{\ceiling{\ell/2}},w'$] $=1$ then
guess the bits $a'_1,\ldots,a'_k$ of the primary pointer $p\in [2^k]$
and branch to the state $(\bin_k(a'_1,\ldots,a'_k)+1,s_{w'})$.\\ 
(For every $w\in \{0,1\}^{\ceiling{\ell/2}}$ and $w'\in
\{0,1\}^{\floor{\ell/2}}$, the state $s_{w,w'}$ queries \emph{Data}[$w,w'$] and
connects via an arc labelled $1$ to every state $(p',s_{w'})$,
$p'\in[2^k]$.)
\item Ascertain that $w'$ was guessed correctly.\\
(For each $p'\in [2^k]$ separately, apply Lemma~\ref{lem:constructs}.3 to the
distinguished states $(p',s_{w'})$, $w'\in\{0,1\}^{\floor{\ell/2}}$, to
ascertain that a computation from $(p',s_{w'})$ reaches the state $p'$
iff $(\sec[p']_{\ceiling{\ell/2}+1},\ldots,\sec[p']_\ell)=w'$.)
\item Ascertain that $p$ was guessed correctly.\\
(Apply Lemma~\ref{lem:constructs}.3, to the $2^k$ distinguished states
$p'$, to ascertain that $(a_1,\ldots,a_k)=(a_1',\ldots,a_k')$.)

\end{itemize}
The first stage uses $2^{k+\ceiling{\ell/2}}-1$ states by
  Lemma~\ref{lem:constructs}.1.
The second stage needs the $2^\ell$ states $s_{w,w'}$.
The fourth stage uses $2^k$ times $2^{\floor{\ell/2}+1}-2$ states by
Lemma~\ref{lem:constructs}.3 (and also includes the $2^{k+\floor{\ell/2}}$ states
$(p',s_{w'})$).
The last stage uses 
$2^{k+1}-2$ states by 
Lemma~\ref{lem:constructs}.3 for a total 
$<2^k(2^{\ceiling{\ell/2}} + 2\cdot 2^{\floor{\ell/2}}-2)+2^{k+1} + 2^\ell$, which equals
$2^k(3\cdot 2^{\ell/2}) + 2^\ell$ when $\ell$ is
even and $2^k(\frac{4}{\sqrt{2}}\cdot 2^{\ell/2}) + 2^\ell<2^k(3\cdot
2^{\ell/2}) + 2^\ell$ when $\ell$ is odd.

\emph{The $\PBP$ case}.  It is easy to check that the above NBP has a 
unique accepting path for any input for which $\isakl$ is 1 and hence as 
a $\oplus$BP it also computes $\isakl$.

\emph{The $\LNBP_\delta$ case}.
If $\ell \leq \delta$ then
the secondary \isakl\ pointer is no wider than $\delta$, i.e.,
contains no more than $\delta$ bits.
So a $\delta$-LNBP can ``store'' the secondary pointer within its
first $\ell$ nondeterministic variables $x_{n+1},\ldots,x_{n+\ell}$  and
solve \isakl\ as follows:
\begin{itemize}
\item Read the primary pointer
\item Check that $(\sec_1,\ldots,\sec_\ell)=(x_{n+1},\ldots,x_{n+\ell})$
\item Forget everything
\item Read $x_{n+1},\ldots,x_{n+\ell}$
\item Check that data[$x_{n+1},\ldots,x_{n+\ell}$]=1.
\end{itemize}

The first and second steps use $2^k-1$ and $2^k 3 \ell$ states
respectively,  appealing to 
Lemma~\ref{lem:constructs}.1 and Lemma~\ref{lem:constructs}.2.
Note that across the second step, neither the secondary pointer nor
$x_{n+1}, \ldots, \allowbreak x_{n+\ell}$ are remembered.
The third step merges every arc that survived
the second step and thus requires no state.
The fourth and fifth steps require $2^\ell-1$ and $2^\ell$ states, for
a total $<2^k + 2^k 3 \ell + 2 \cdot 2^\ell$.

Now suppose that $\ell > \delta$, i.e.,
the secondary pointer is strictly wider than $\delta$.
Let $m \in [\ell - \delta
- 1]$, to be set optimally later.
A $\delta$-LNBP can implement the following strategy, where
grey-shaded regions in the diagrams indicate the portion of the
\isakl\ variables that are remembered, at exponential cost in
numbers of states, at any given time.

\begin{enumerate}
\item Read the primary pointer:

\centerline{\includegraphics[height=0.12\textheight]{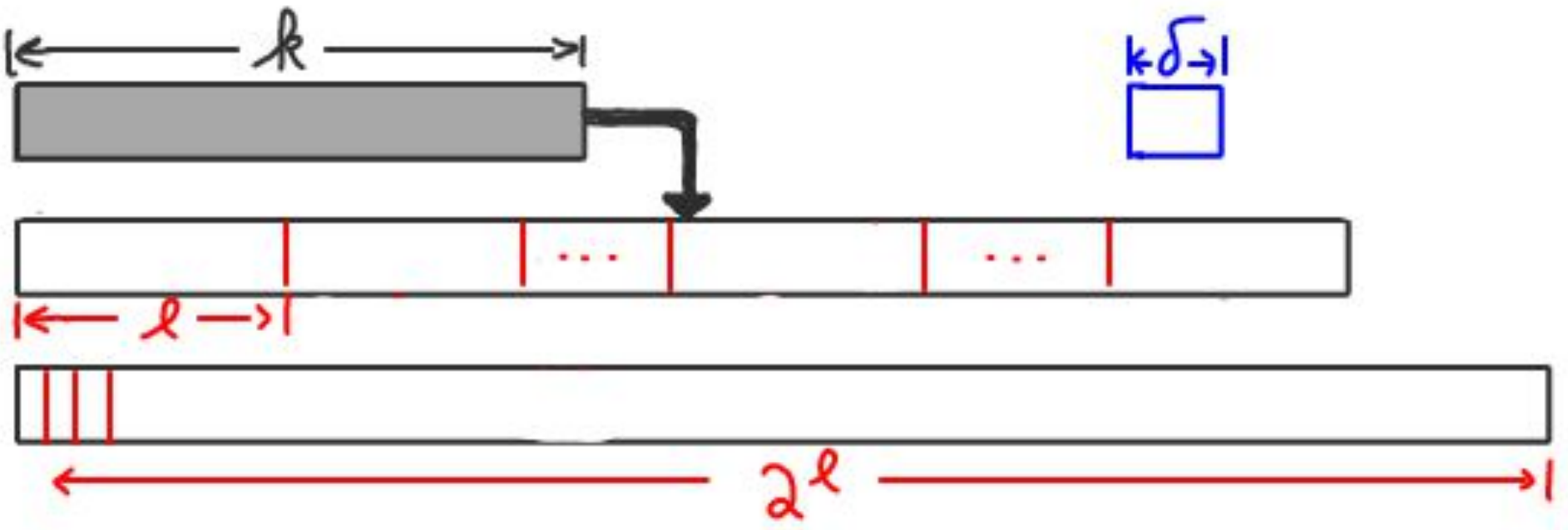}}
Uses $2^k-1$ states as per Lemma~\ref{lem:constructs}.1.

\item Check $\delta$ contiguous secondary pointer bits for equality with
  $x_{n+1},\ldots,x_{n+\delta}$:

\centerline{\includegraphics[height=0.12\textheight]{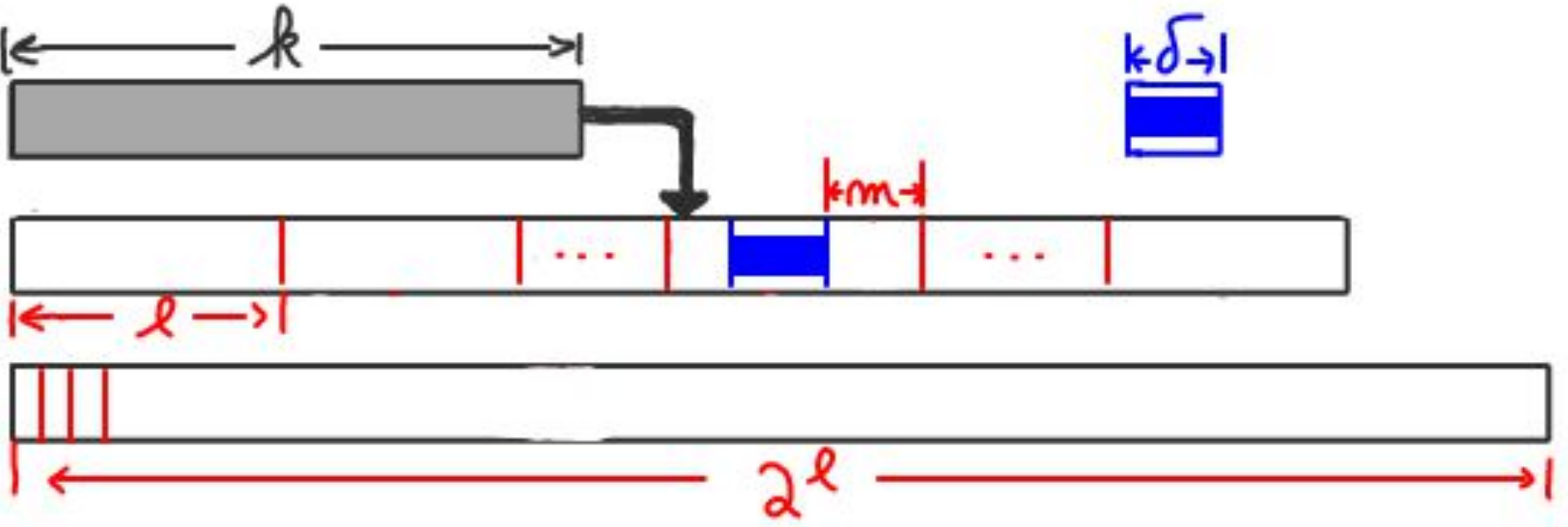}}
Uses $2^k$ times $3\delta$ states, again by
Lemma~\ref{lem:constructs}.2. None of the checked bits are
remembered.

\item Read $\ell-m-\delta$ other contiguous bits from the secondary pointer:

\centerline{\includegraphics[height=0.12\textheight]{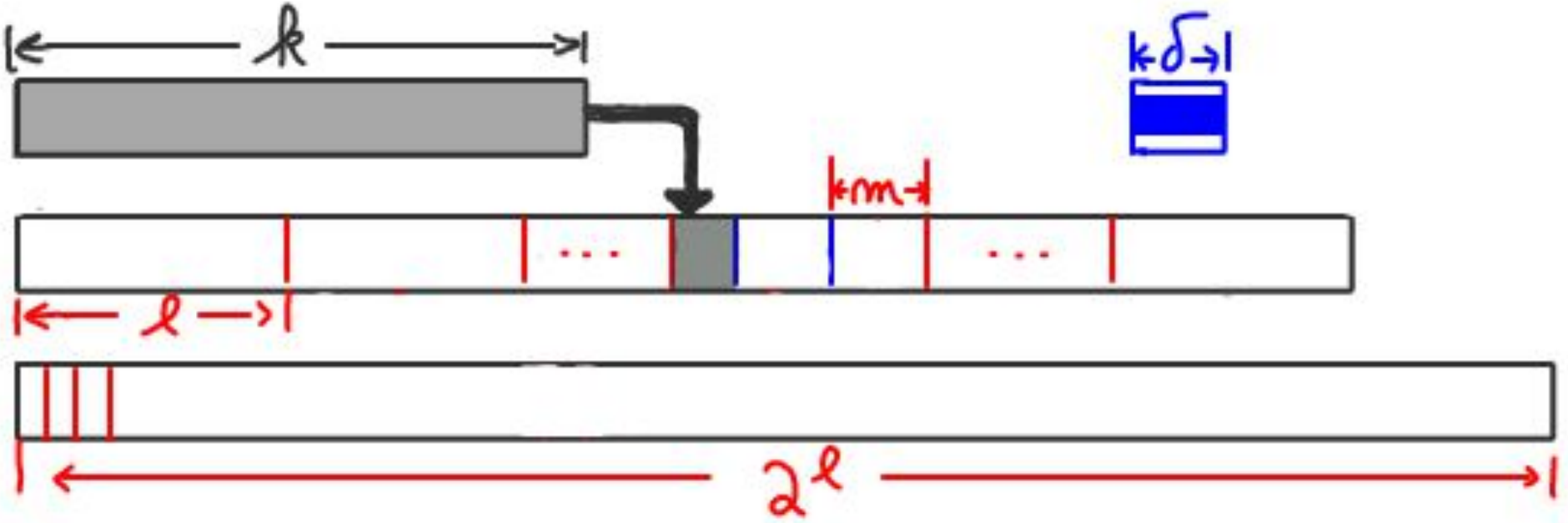}}
Uses $2^k$ times $(2^{\ell-m-\delta}-1)$ states.

\item Forget the primary pointer:

\centerline{\includegraphics[height=0.12\textheight]{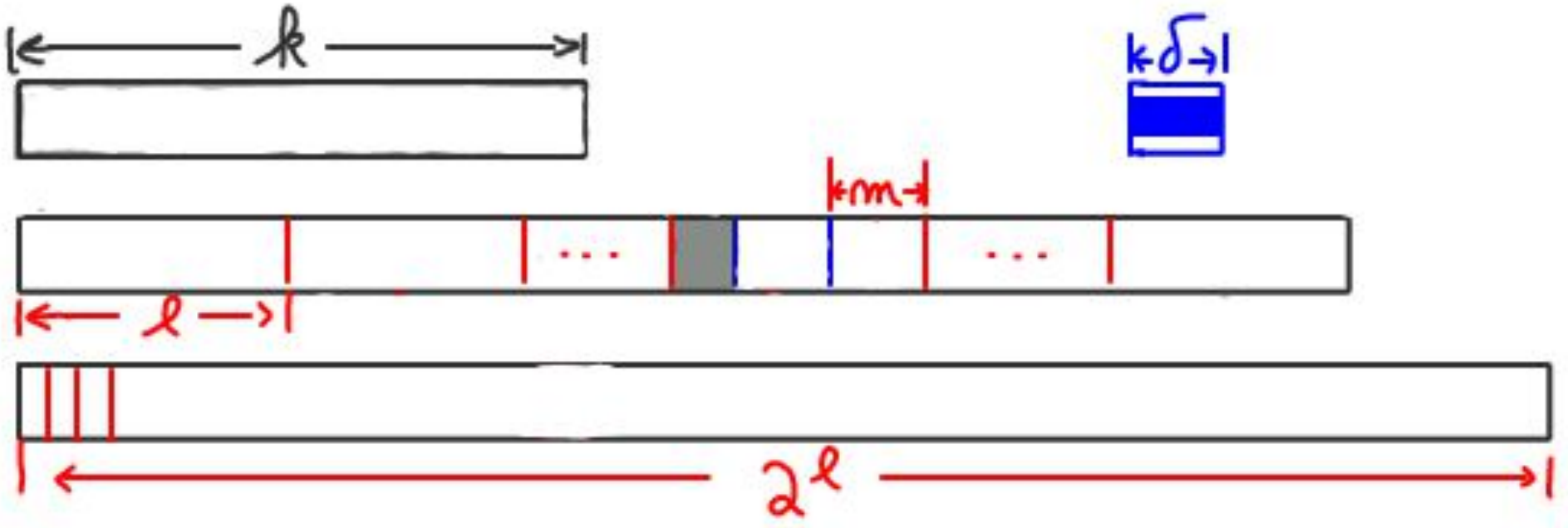}}
No state required.

\item Read and remember the nondeterministic bits:

\centerline{\includegraphics[height=0.12\textheight]{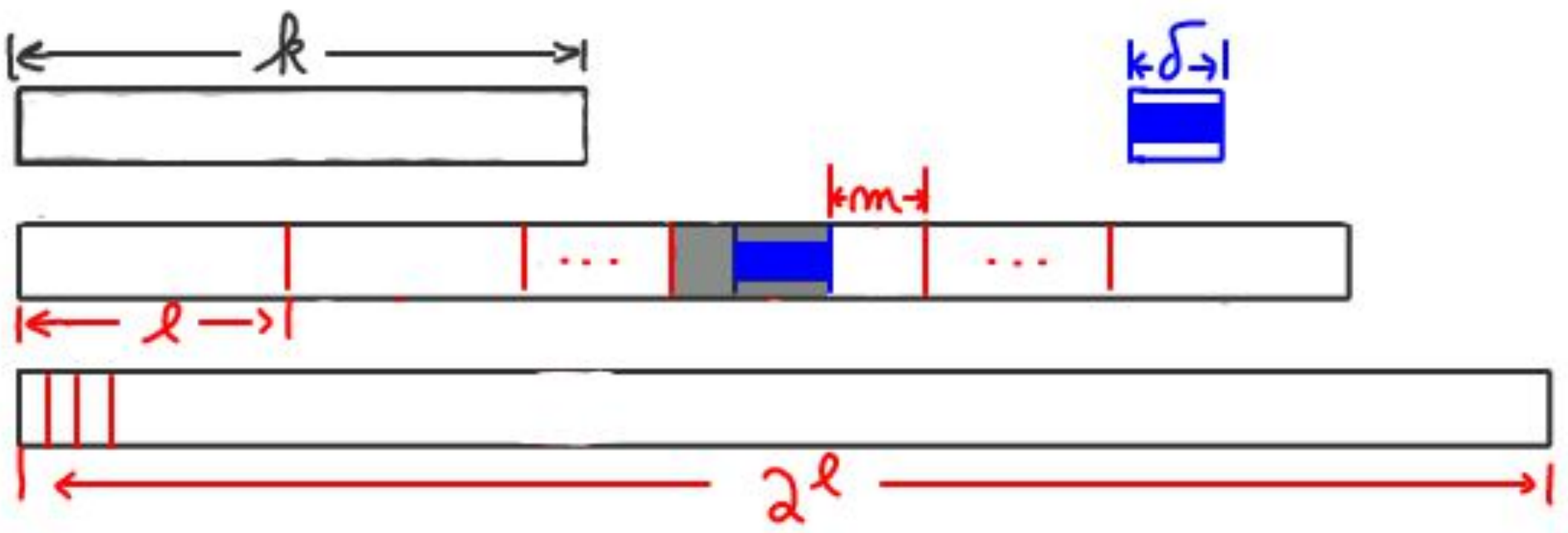}}
Uses $2^{\ell-m-\delta}(2^{\delta}-1)<2^{\ell-m}$ states.

\item Read the data bits that remain candidates:

\centerline{\includegraphics[height=0.12\textheight]{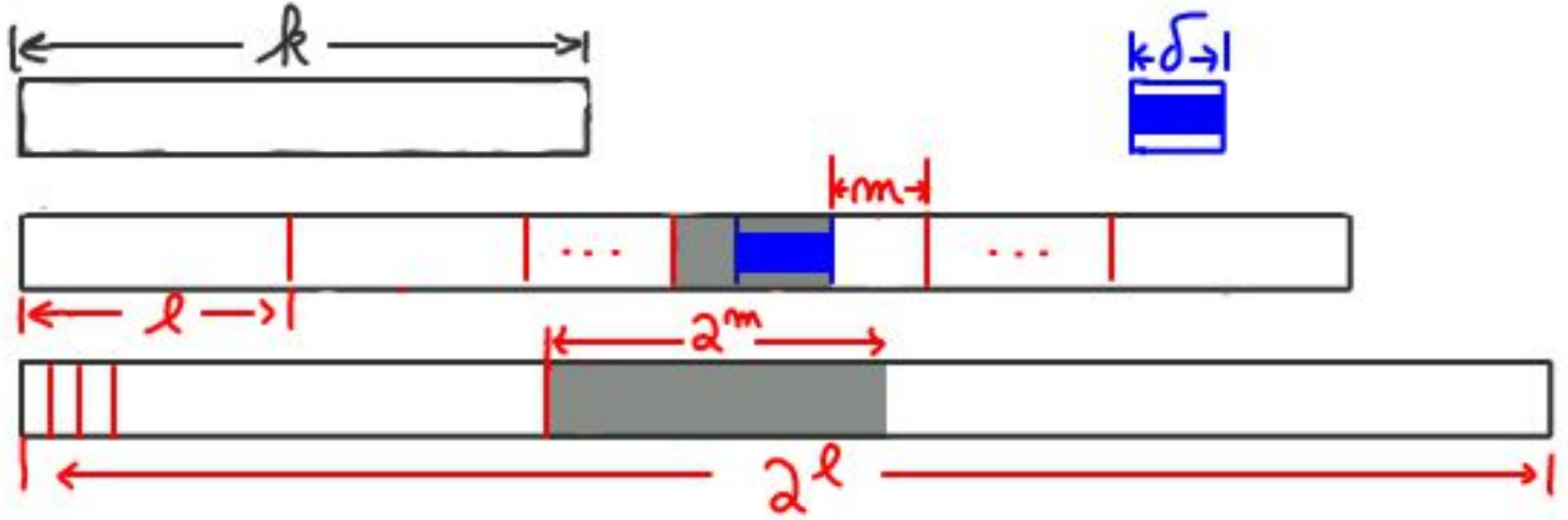}}
Uses $2^{\ell-m}(2^{2^m}-1)< 2^{\ell-m}2^{2^m}$ states.

\item Forget the part of the secondary pointer that was read:

\centerline{\includegraphics[height=0.12\textheight]{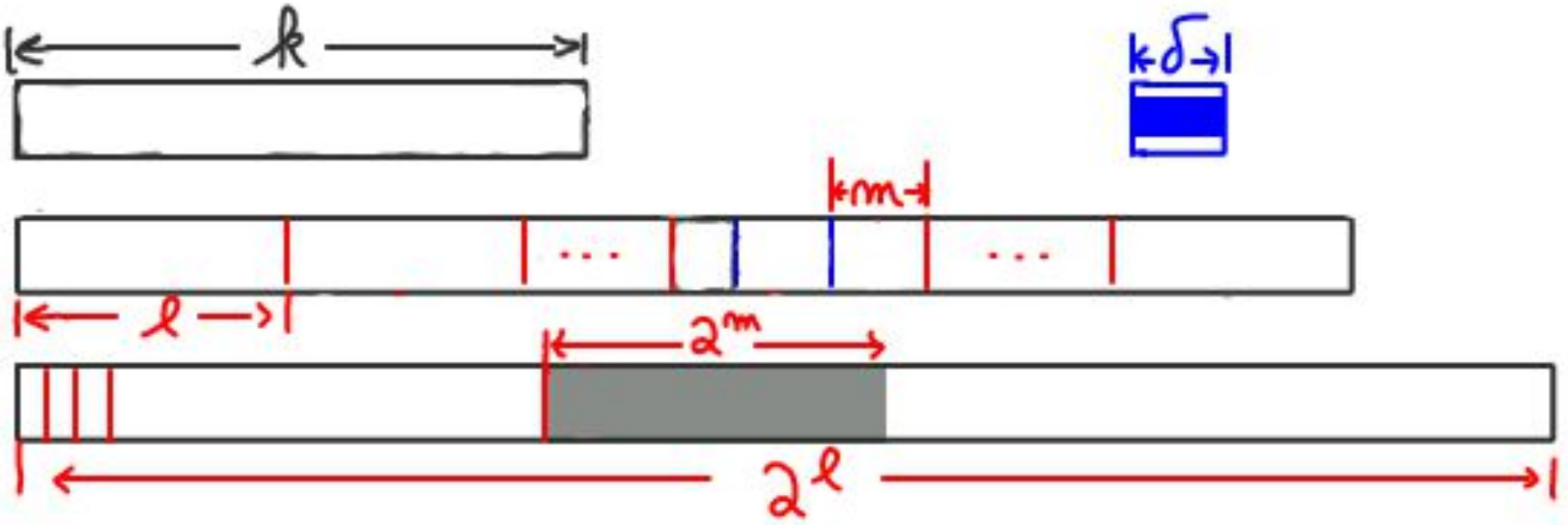}}
No state required.

\item Read the primary pointer:

\centerline{\includegraphics[height=0.12\textheight]{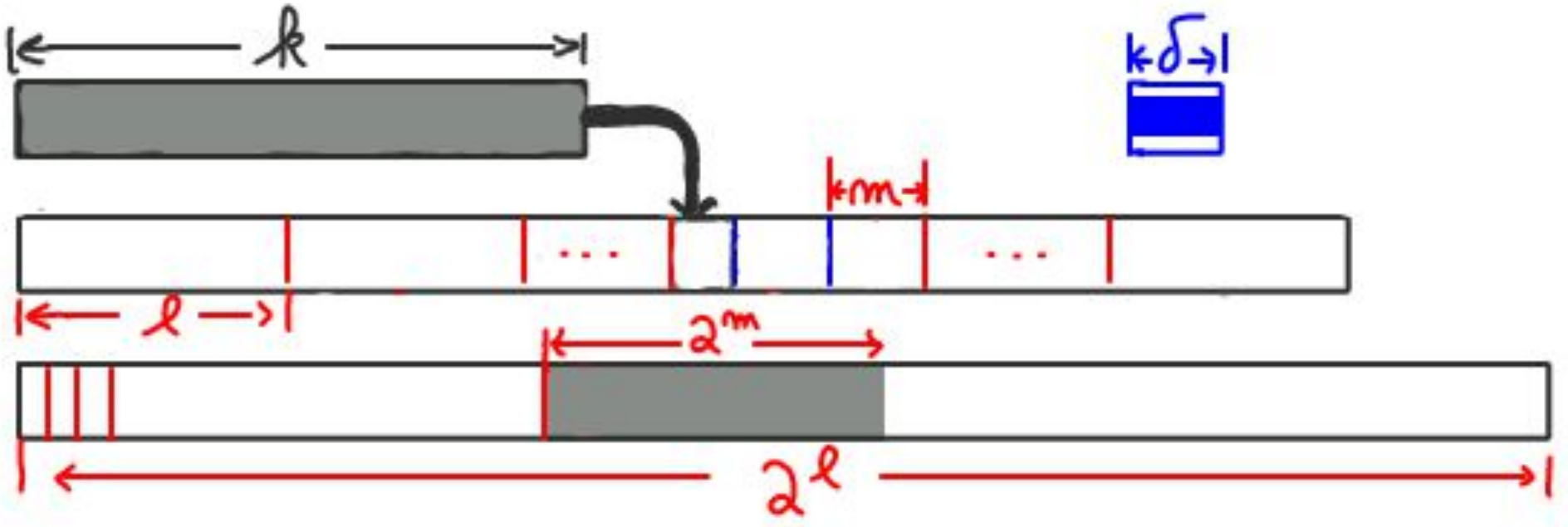}}
Uses $2^{2^m}(2^k-1)<2^k2^{2^m}$ states.

\item Read the secondary pointer bits that were never yet accessed:

\centerline{\includegraphics[height=0.12\textheight]{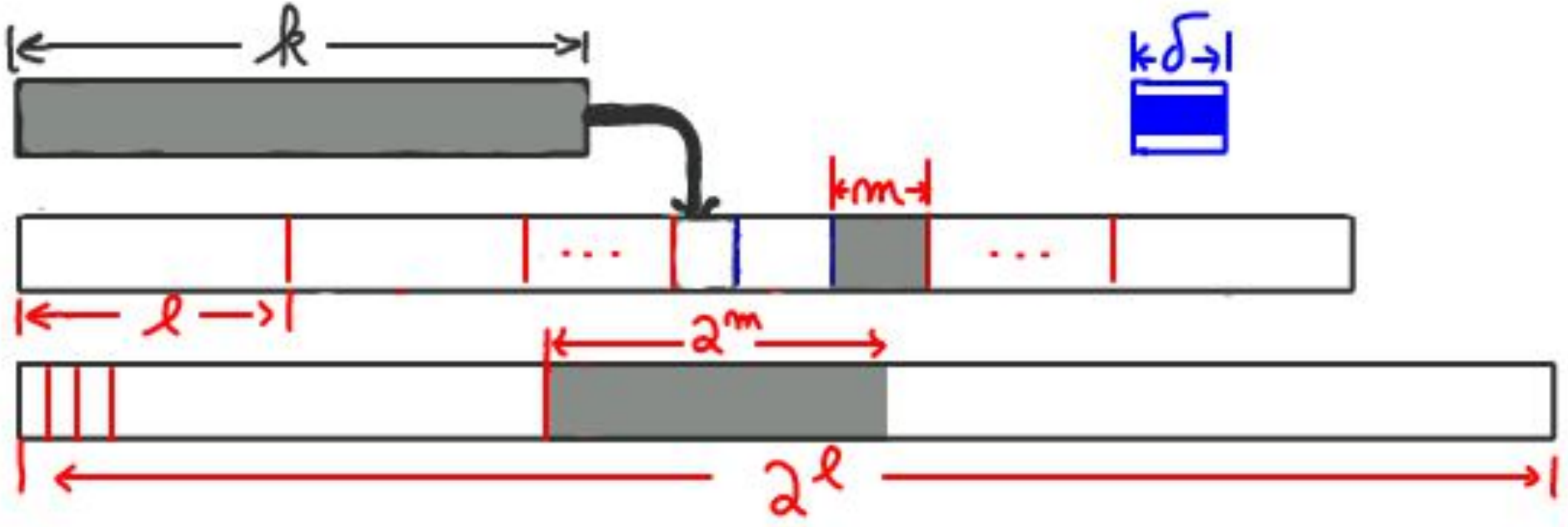}}
Uses $2^{2^m}2^k(2^m-1)<2^k2^m2^{2^m}$ states.

\item Output the appropriate data bit from memory: no state required.
\end{enumerate}
The resulting $\delta$-LNBP has fewer than
$$2^k + 2^k 3 \delta + 2^k 2^{\ell - \delta - m} + 2^{\ell - m} +
  2^{\ell - m} 2^{2^m} + 2^k 2^{2^m} + 2^k2^m 2^{2^m}$$
states, which is less than
$$12 \cdot 2^k \max\Bigl\{\frac{2^{\ell - \delta}}{\ell - \delta}, \ell\Bigr\} +
\frac{2^{2 \ell - \delta}}{\ell - \delta}$$
when $m$ is set to 
$\floor{\log_2(\ell - \delta - \log_2((\ell - \delta)^2))}$ and
$\ell - \delta \geq 8$ (and the degenerate case in which
$1 \leq \ell - \delta < 8$ is treated separately by using a simpler method to
compute $\isakl$).

\emph{The $\BP$ case}.
Follows from the $\LNBP_\delta$ case by setting $\delta=0$. More
specifically, stages 2 and 5 in the construction of the $\delta$-NLBP
are skipped.

\emph{The $\delta$}-\emph{$\LNBF$ case}.
We will not exploit more than $\ell$ nondeterministic variables amongst
$x_{n+1}, \ldots \allowbreak, x_{n+\delta}$ so we suppose that
$\delta\leq \ell$.
Let $m=\ell-\delta$.
The nondeterministic formula $V\wedge D$ solves \isakl\ provided that
$V$ and $D$ fulfil
\begin{align*}
V=1\ &\text{iff } (\sec_{m+1},\ldots,\sec_\ell) =
(x_{n+1},\ldots,x_{n+\delta}),\\
D=1\ &\text{iff }
\textit{Data}[F_1,\ldots,F_m,x_{n+1},\ldots,x_{n+\delta}]=1,
\end{align*}
and for $1\leq j\leq m$, $F_j$ evaluates to $\sec_j$.
By Lemma~\ref{lem:constructs}.5, $D$ exists such that
\begin{align}
|D| &= 2^\ell + \Sigma_{j=1}^m 2^j \cdot |F_j|
+ \Sigma_{j=m+1}^\ell 2^j
 <
3\cdot 2^\ell + \Sigma_{j=1}^m 2^j \cdot |F_j|.
 \label{eqn:VD}
\end{align}
By Lemma~\ref{lem:constructs}.5, each
formula $F_j$, $1\leq j\leq m$, can be constructed of size
\begin{align}
|F_j| &= 2^k + \Sigma_{j=1}^k 2^j < 3\cdot 2^k. \label{eqn:F}
\end{align}
By Lemma~\ref{lem:constructs}.4, for every $p\in[2^k]$, a formula
$V_p$ of size $4\delta$ can be constructed that evaluates to $1$ iff
$(\sec[p]_{m+1},\ldots,\sec[p]_\ell) = 
(x_{n+1},\ldots,x_{n+\delta})$.
The formula $V$ can then be constructed using
Lemma~\ref{lem:constructs}.5, taking $z_1,\ldots,z_{2^k}$ as
$V_1,\ldots,V_{2^k}$. The size of $V$ is then
\begin{align}
|V| &=\Sigma_{j=1}^{2^k} |V_j|\ +\ \Sigma_{j=1}^k 2^j <
2^k\cdot 4\delta +  2^{k+1} \leq 2^k \cdot 6\ell. \label{eqn:V} 
\end{align}
Substituting (\ref{eqn:F}) into (\ref{eqn:VD}) and using
(\ref{eqn:V}), the size of $V\wedge D$ is at most
\begin{align*}
2^k \cdot 6\ell + 3\cdot 2^\ell + 2^{m+1} 3\cdot 2^k \leq
6\cdot 2^k (\ell + 2^m) + 3\cdot 2^\ell \leq 
12\cdot 2^k \cdot \max\{2^m,\ell\} + 3\cdot 2^\ell.
\end{align*}

\emph{The $\BF$ case}.
Follows from the $\delta$-$\LNBF$ case by setting $\delta = 0$. More
sharply, $V$ from that construction is not needed, and $|D|=2^\ell +
\Sigma_{j=1}^\ell 2^j|F_j| < 2^\ell + 3\cdot 2^k \cdot 2^{\ell+1} < 7\cdot
2^k\cdot 2^\ell$.
\end{proof}

\section{Nondeterministic and Parity Branching Programs revisited}
We note in this section that, in the case of $\NBP$ and $\PBP$, the flexibility
added by Definition~\ref{defi-neci} over Definition~\ref{def:Weak_Neciporuk_NBP}
yields no better lower bounds.

We first define the function $b_{\NBP,\PBP}\colon \N_{>0} \rightarrow \N$ by 
\[
b_{\NBP,\PBP}(m)=
\begin{cases}
\ceiling{\sqrt{\frac{1}{2} \log_2 m} - 1} & \text{if $m \geq 4$}\\
0 & \text{otherwise}
\end{cases}
\]
for all $m \in \N_{>0}$.
Using the same strategy as in the proofs of
Lemma~\ref{lem:Weak_Neciporuk_NBP_principle} and
Proposition~\ref{ptn:Weak_Neciporuk_NBP_size_lower_bound}, we can prove the
following.

\begin{proposition}
\label{ptn:Neciporuk_NBP_size_lower_bound}
$b_{\NBP,\PBP}$ is a \Neci bounding function for  the $\NBP$ (respectively, $\PBP$) size complexity measure;
i.e., $b_{\NBP,\PBP} \in \NeciporukSet_{\NBP},\NeciporukSet_{\PBP}$.
\end{proposition}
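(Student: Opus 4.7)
The plan is to follow the structure of Lemma~\ref{lem:Weak_Neciporuk_NBP_principle} together with the counting computation of Proposition~\ref{ptn:Weak_Neciporuk_NBP_size_lower_bound}, but now in the stricter setting of Definition~\ref{defi-neci}, where $f$ is an arbitrary $n$-ary Boolean function (not necessarily depending on all its variables) and $V_1, \ldots, V_p$ is a partition of the whole set $[n]$ (not merely of the variables on which $f$ depends). Fix such an $f$ and such a partition, and let $P$ be a minimum-size NBP (respectively $\oplus$BP) computing $f$. For each $i \in [p]$, let $s_i$ denote the number of non-sink states of $P$ whose label lies in $V_i$. Since every non-sink state carries exactly one label from $[n]$, we have $\NBP(f) = \sum_{i = 1}^p s_i$ (and similarly for $\PBP(f)$), so it suffices to show $s_i \geq b_{\NBP,\PBP}(r_{V_i}(f))$ for each $i$.

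If $r_{V_i}(f) < 4$ then $b_{\NBP,\PBP}(r_{V_i}(f)) = 0$ and the inequality is vacuous. When $r_{V_i}(f) \geq 4$, I would run the same subfunction-extraction construction as in the proof of Lemma~\ref{lem:Weak_Neciporuk_NBP_principle}: for each partial assignment $\rho \in \{0,1\}^{[n] \setminus V_i}$, contracting the paths of $P$ through states whose labels lie outside $V_i$ yields an NBP (resp.\ $\oplus$BP) on the variables indexed by $V_i$, of size at most $s_i$, that computes $f|_\rho$. Hence $r_{V_i}(f) \leq N_{sem}(\card{V_i}, s_i)$, respectively $r_{V_i}(f) \leq \oplus_{sem}(\card{V_i}, s_i)$.

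The main obstacle, compared with Proposition~\ref{ptn:Weak_Neciporuk_NBP_size_lower_bound}, is that one cannot conclude $s_i \geq \card{V_i}$ directly from the depends-on-all-variables hypothesis, since that hypothesis is now dropped; yet Proposition~\ref{Shannonbound}, which bounds $N_{sem}$ and $\oplus_{sem}$, requires $s \geq n$. The way around this is to pass to the subset $V'_i \subseteq V_i$ of indices that actually label some state of $P$. Variables in $V_i \setminus V'_i$ are not read by $P$, so $f$ does not depend on them, and Proposition~\ref{lem:Number_of_subfunctions_with_variable_independence} gives $r_{V_i}(f) = r_{V'_i}(f)$. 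The extraction construction above can be applied with $V'_i$ in place of $V_i$ to produce NBPs (resp.\ $\oplus$BPs) on the variables indexed by $V'_i$ of size at most $s_i$, yielding $r_{V_i}(f) = r_{V'_i}(f) \leq N_{sem}(\card{V'_i}, s_i)$ (resp.\ $\oplus_{sem}(\card{V'_i}, s_i)$), and now $s_i \geq \card{V'_i}$ holds tautologically because every element of $V'_i$ labels at least one state of $P$.

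With $s_i \geq \card{V'_i}$ in hand, Proposition~\ref{Shannonbound} applies and gives $r_{V_i}(f) < 2^{2(s_i+1)^2}$, whence $s_i > \sqrt{\tfrac{1}{2} \log_2 r_{V_i}(f)} - 1$ and, by integrality, $s_i \geq \ceiling{\sqrt{\tfrac{1}{2} \log_2 r_{V_i}(f)} - 1} = b_{\NBP,\PBP}(r_{V_i}(f))$. Summing over $i \in [p]$ yields $\NBP(f) \geq \sum_{i = 1}^p b_{\NBP,\PBP}(r_{V_i}(f))$, and the identical argument using $\oplus_{sem}$ in place of $N_{sem}$ handles the parity case, showing $b_{\NBP,\PBP} \in \NeciporukSet_{\NBP} \cap \NeciporukSet_{\PBP}$.
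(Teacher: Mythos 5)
Your proof is correct and follows exactly the strategy the paper prescribes for this proposition (it only says to reuse the arguments of Lemma~\ref{lem:Weak_Neciporuk_NBP_principle} and Proposition~\ref{ptn:Weak_Neciporuk_NBP_size_lower_bound}). Your patch for the dropped ``depends on all inputs'' hypothesis --- restricting to the labels $V'_i$ actually occurring in $P$ and invoking Proposition~\ref{lem:Number_of_subfunctions_with_variable_independence} to get $r_{V_i}(f)=r_{V'_i}(f)$ --- is precisely the device the paper uses in its proofs of the analogous Propositions~\ref{ptn:Neciporuk_LNBP_size_lower_bound} and~\ref{ptn:Neciporuk_LNBF_size_lower_bound}.
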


Combining this with Lemmas~\ref{ED-subfunctions-lemma}
and~\ref{lem:ISA_partitioning}, we can
immediately derive asymptotic lower bounds on
$\NBP(\ED_n)$ and $\NBP(\ISA_n)$ using \Neci's method and
hence on
$\NeciporukLB^{\NBP}_{\ED}$,
$\NeciporukLB^{\PBP}_{\ED}$,
$\NeciporukLB^{\NBP}_{\ISA}$,
and
$\NeciporukLB^{\PBP}_{\ISA}$.

\begin{proposition}
\label{ptn:ISA_NBP_size_lower_bound}
$\NeciporukLB^{\NBP}_{\ED}(n), \NeciporukLB^{\NBP}_{\ISA}(n),
\NeciporukLB^{\PBP}_{\ED}(n), \NeciporukLB^{\PBP}_{\ISA}(n) \in
 \Omega\Bigl(\frac{n^{3/2}}{\log_2 n}\Bigr)$
 and hence
$\NBP(\ED_n)$, $\NBP(\ISA_n)$, $\PBP(\ED_n)$, $\PBP(ISA_n)$ are all 
$\Omega\Bigl(\frac{n^{3/2}}{\log_2 n}\Bigr)$.
\end{proposition}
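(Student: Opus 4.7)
The plan is to apply the general \Neci method of Definition~\ref{defi-neci} with the specific \Neci function $b_{\NBP,\PBP}$ just provided by Proposition~\ref{ptn:Neciporuk_NBP_size_lower_bound}, combining it with the two ready-made partitions supplied by Lemma~\ref{ED-subfunctions-lemma} (for $\ED$) and Lemma~\ref{lem:ISA_partitioning} (for $\ISA$). By the very definition of $\NeciporukSet_{\M}$, any sum of the form $\sum_{i=1}^p b(r_{V_i}(f_n))$ with $b\in\NeciporukSet_{\M}$ and $V_1,\ldots,V_p$ a partition of $[n]$ is simultaneously a lower bound on $\NeciporukLB^{\M}_F(n)$ and on $\M(f_n)$; so it is enough to exhibit, in each of the four cases, one such sum of magnitude $\Omega(n^{3/2}/\log_2 n)$.

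For $\ED_n$ with $n=2k\cdot 2^k$ and $N=2^k$, Lemma~\ref{ED-subfunctions-lemma} provides a partition $V_1,\ldots,V_N$ with $\card{V_i}=2k$ and $r_{V_i}(\ED_n)\geq N^{N-1}=2^{k(2^k-1)}$ for every $i\in[N]$. Plugging into $b_{\NBP,\PBP}$ gives $\sum_{i=1}^N b_{\NBP,\PBP}(r_{V_i}(\ED_n))\geq N\cdot\bigl(\sqrt{k(N-1)/2}-1\bigr)$, which is $\Omega(N^{3/2}\sqrt{k})$. Since $n=\Theta(N\log_2 N)$ yields $N=\Theta(n/\log_2 n)$ and $k=\Theta(\log_2 n)$, this rearranges to $\Omega(n^{3/2}/\log_2 n)$.

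For $\ISA_n$ with $n\geq 32$, Lemma~\ref{lem:ISA_partitioning} furnishes a partition $V_1,\ldots,V_p,U$ of $[n]$ with $p\geq\frac{1}{32}\cdot\frac{n}{\log_2 n}$ and $r_{V_i}(\ISA_n)=2^q$ for a common $q\geq n/16$. Hence $\sum_{i=1}^p b_{\NBP,\PBP}(r_{V_i}(\ISA_n))\geq p\cdot\bigl(\sqrt{q/2}-1\bigr)$, and a direct computation shows that this quantity is $\Omega\bigl(\tfrac{n}{\log_2 n}\cdot\sqrt{n}\bigr)=\Omega(n^{3/2}/\log_2 n)$.

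Since Proposition~\ref{ptn:Neciporuk_NBP_size_lower_bound} asserts that $b_{\NBP,\PBP}$ belongs to both $\NeciporukSet_{\NBP}$ and $\NeciporukSet_{\PBP}$, both computations transfer verbatim to $\PBP$, and the announced bounds on $\NeciporukLB^{\NBP}_{\ED}$, $\NeciporukLB^{\NBP}_{\ISA}$, $\NeciporukLB^{\PBP}_{\ED}$ and $\NeciporukLB^{\PBP}_{\ISA}$, together with the implied bounds on the measures $\NBP(f_n)$ and $\PBP(f_n)$, follow. I expect no real obstacle: the computation is already implicit in the proof immediately following Proposition~\ref{ptn:Weak_Neciporuk_NBP_size_lower_bound}, and the only task is the routine asymptotic bookkeeping relating $n$ to the partition parameters $k$, $N$, $p$ and $q$.
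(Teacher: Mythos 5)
Your proposal is correct and follows exactly the route the paper intends: the paper states this proposition without a separate proof, as an immediate consequence of Proposition~\ref{ptn:Neciporuk_NBP_size_lower_bound} combined with Lemmas~\ref{ED-subfunctions-lemma} and~\ref{lem:ISA_partitioning}, with the asymptotic computation being the same one carried out earlier for the simple method. Your bookkeeping (the bounds $N(\sqrt{k(N-1)/2}-1)$ for $\ED$ and $p(\sqrt{q/2}-1)$ for $\ISA$, and the observation that a witness pair $(b,\text{partition})$ lower-bounds both $\NeciporukLB^{\M}_F(n)$ and $\M(f_n)$) matches the paper's argument.
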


\begin{sloppypar}
Then, we can show that $b_{\NBP,\PBP}$ is in fact the asymptotically largest
function
in $\NeciporukSet_{\NBP}\cup \NeciporukSet_{\PBP}$ and that the previous lower
bound is in fact also the
asymptotically largest we may obtain.
To do this, we appeal to our upper bound from Theorem~\ref{thm:all_uppers} on
the size of NBPs and $\oplus$BPs computing $\isakl$ and apply
Lemma~\ref{lem:Neciporuk's_method_function_limitation_meta-result}.
\end{sloppypar}

\begin{proposition}
\label{ptn:Neciporuk's_method_function_limitation_NBP_size}
There exists a constant $c \in \R_{>0}$ verifying that any $b \in
\NeciporukSet_{\NBP}\cup \NeciporukSet_{\PBP}$ is such that
$b(m) \leq c \cdot b_{\NBP,\PBP}(m)$, for $m\geq 4$.
\end{proposition}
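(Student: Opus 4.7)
The plan is to invoke Lemma~\ref{lem:Neciporuk's_method_function_limitation_meta-result} with $\M \in \set{\NBP, \PBP}$, using the upper bound from Theorem~\ref{thm:all_uppers} specialized to $\isakk$. Taking $k = \ell$ in the first bound of Theorem~\ref{thm:all_uppers} gives
\[
\NBP(\isakk),\PBP(\isakk) \leq 3 \cdot 2^{3k/2} + 2^k,
\]
so I would define $g_\M \colon \cointerval{1}{+\infty} \to \R_{\geq 0}$ by $g_\M(x) = 3 \cdot 2^{3x/2} + 2^x$. This $g_\M$ is non-decreasing, dominates $\M(\isakk)$ at every positive integer $k$, and satisfies $g_\M(k+1)/g_\M(k) \leq 2^{3/2} + o(1)$, so a constant $\alpha$ (say $\alpha = 3$) works uniformly.

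Lemma~\ref{lem:Neciporuk's_method_function_limitation_meta-result} then yields, for every $b \in \NeciporukSet_{\M}$ and every $m \geq 4$,
\[
b(m) \,\leq\, \alpha \cdot \frac{g_\M(\log_2\log_2 m)}{\log_2 m}
     \,=\, \alpha \cdot \frac{3 \cdot 2^{\frac{3}{2}\log_2\log_2 m} + 2^{\log_2\log_2 m}}{\log_2 m}
     \,=\, \alpha \cdot \frac{3(\log_2 m)^{3/2} + \log_2 m}{\log_2 m}.
\]
For $m \geq 4$ this is bounded by $c' \sqrt{\log_2 m}$ for a suitable absolute constant $c'$.

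It remains to compare this with $b_{\NBP,\PBP}(m) = \ceiling{\sqrt{\tfrac{1}{2}\log_2 m} - 1}$, which is $\Theta(\sqrt{\log_2 m})$ for $m \geq 4$ bounded below (and is exactly $0$ only for $m < 4$, where the inequality of the proposition is vacuous). Concretely, for $m \geq 16$ one has $b_{\NBP,\PBP}(m) \geq \tfrac{1}{2}\sqrt{\tfrac{1}{2}\log_2 m}$, so $b(m) \leq c \cdot b_{\NBP,\PBP}(m)$ for a constant $c$ depending only on $c'$; the finitely many remaining values $m \in \set{4,\dots,15}$ are handled by inflating $c$ if necessary.

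The only mildly delicate step is the verification that $g_\M$ satisfies the ratio hypothesis of Lemma~\ref{lem:Neciporuk's_method_function_limitation_meta-result} on \emph{all} of $\cointerval{1}{+\infty}$ and not just on integers; this is routine since $2^{3(x+1)/2} + 2^{x+1}$ divided by $2^{3x/2} + 2^x$ is bounded by $2^{3/2}$ uniformly in $x \geq 1$. Apart from this, the whole argument reduces to substituting the explicit upper bound on $\M(\isakk)$ from Theorem~\ref{thm:all_uppers} into the meta-bound and comparing exponents.
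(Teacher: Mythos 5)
Your proof is correct and follows essentially the same route as the paper's: both bound $\M(\isakk)$ via Theorem~\ref{thm:all_uppers}, feed a dominating function $g_\M(x)=\Theta(2^{3x/2})$ into Lemma~\ref{lem:Neciporuk's_method_function_limitation_meta-result}, and read off $b(m)=O(\sqrt{\log_2 m})$, which is then compared to $b_{\NBP,\PBP}(m)$. The differences are cosmetic (the paper takes the cleaner $g(x)=4\cdot 2^{3x/2}$ with exact ratio $2\sqrt{2}$ and leaves the final comparison with $b_{\NBP,\PBP}$ implicit); the only caveat is that at $m=4$ one has $b_{\NBP,\PBP}(4)=\ceiling{\sqrt{1}-1}=0$, an edge case your ``inflate $c$'' step does not cover, but the paper's own proof glosses over the same point.
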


\begin{proof}
Let $g\colon \cointerval{1}{+\infty} \to \R_{\geq 0}$ be the non-decreasing
function defined by $g(x) = 4 \cdot 2^{\frac{3}{2} x}$ for all
$x \in \cointerval{1}{+\infty}$.
Theorem~\ref{lem:ISA_NBP_size_upper_bound} tells us that for all $k \in \N_{>0}$,
we have
\[
\NBP(\isakk), \PBP(\isakk) \leq 3 \cdot 2^{\frac{3}{2} k} + 2^k \leq
4 \cdot 2^{\frac{3}{2} k} = g(k)
\]
and moreover, $\frac{g(k + 1)}{g(k)} = \frac{4 \cdot 2^{\frac{3}{2} (k + 1)}}{4
  \cdot 2^{\frac{3}{2} k}} = 2 \sqrt{2}$ for all $k \in \N_{>0}$. Therefore, by
Lemma~\ref{lem:Neciporuk's_method_function_limitation_meta-result}, any $b \in
\NeciporukSet_{\NBP}\cup \NeciporukSet_{\PBP}$ verifies
\[
b(m) \leq 2 \sqrt{2} \cdot \frac{g(\log_2\log_2 m)}{\log_2 m} =
2 \sqrt{2} \cdot \frac{4 \cdot 2^{\frac{3}{2} \log_2\log_2 m}}{\log_2 m} =
8 \sqrt{2} \cdot \sqrt{\log_2 m}
\]
for all $m \in \N, m \geq 4$.
\end{proof}

Finally, using this and
Lemma~\ref{lem:Neciporuk's_method_limitation_complexity_lower_bound_meta-result},
we get the following result, showing that the asymptotically greatest lower
bound we may expect using \Neci's method for $\NBP$ is (asymptotically)
equivalent to the lower bound for $\ISA$ given in
Proposition~\ref{ptn:ISA_NBP_size_lower_bound}.

\begin{theorem}
\label{thm:Neciporuk's_method_limitation_NBP_size_lower_bound}
For any family of Boolean functions $F = \{f_n\}_{n \in \N}$,
$\NeciporukLB^{\NBP}_F(n),
\NeciporukLB^{\PBP}_F(n) \in
 \Omicron\Bigl(\frac{n^{3/2}}{\log_2 n}\Bigr)$.
\end{theorem}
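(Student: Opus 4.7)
The plan is to apply the generic meta-result Lemma~\ref{lem:Neciporuk's_method_limitation_complexity_lower_bound_meta-result} to both $\M = \NBP$ and $\M = \PBP$, using Proposition~\ref{ptn:Neciporuk's_method_function_limitation_NBP_size} to supply the uniform upper bound on members of $\NeciporukSet_{\M}$ that is needed as hypothesis~(iv).

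Concretely, I would set
\[
h_\M\colon \cointerval{4}{+\infty} \to \R_{\geq 0}, \quad x \mapsto \sqrt{\log_2 x},
\]
which matches (up to constants) the shape of $b_{\NBP,\PBP}$. First I would choose the threshold $x_0$ large enough to meet conditions (i), (ii) and (iii) simultaneously. For (i), $h_\M$ is clearly non-decreasing on its whole domain. For (ii), i.e. $h_\M(2^x) = \sqrt{x} \geq \log_2 x$, a routine estimate shows it suffices to take $x \geq 16$, so I would pick $x_0 = 2^{16}$ (which also satisfies the required $x_0 \geq 2^8$). For (iii), the condition becomes $2^{v/2} + 2^{v'/2} \leq 2^{(v+v')/2}$, which rearranges to $2^{-v/2} + 2^{-v'/2} \leq 1$; since $2^{2^v} \geq x_0 = 2^{16}$ forces $v \geq 4$, the estimate follows easily. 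Finally, for (iv), Proposition~\ref{ptn:Neciporuk's_method_function_limitation_NBP_size} yields a constant $c$ such that every $b \in \NeciporukSet_{\NBP} \cup \NeciporukSet_{\PBP}$ satisfies $b(m) \leq c \cdot b_{\NBP,\PBP}(m) \leq c \cdot \sqrt{\tfrac{1}{2}\log_2 m} \leq \alpha \cdot h_\M(m)$ for $m \geq 4$, with $\alpha = c$.

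With all four hypotheses verified, Lemma~\ref{lem:Neciporuk's_method_limitation_complexity_lower_bound_meta-result} immediately gives, for every family $F = \{f_n\}_{n \in \N}$ and all sufficiently large $n$,
\[
\NeciporukLB^{\M}_F(n) \leq \alpha \cdot \bigl(4 + h_\M(\floor{x_0})\bigr) \cdot \frac{n}{\log_2 n} \cdot h_\M(2^n) = C \cdot \frac{n}{\log_2 n} \cdot \sqrt{n} = C \cdot \frac{n^{3/2}}{\log_2 n},
\]
for some absolute constant $C$, which is the desired conclusion for both $\M = \NBP$ and $\M = \PBP$.

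The main obstacle is really just the bookkeeping: one must be careful that the constants chosen for $x_0$ and $\alpha$ are compatible with all four conditions of the meta-lemma simultaneously, and that $h_\M$ is chosen so that $h_\M(2^n) = \Theta(\sqrt{n})$, since this is what produces exactly the $n^{3/2}/\log_2 n$ rate rather than something weaker. Once the right $h_\M$ is fixed, all remaining checks are elementary inequalities, and Proposition~\ref{ptn:Neciporuk's_method_function_limitation_NBP_size} supplies the only nontrivial input, namely the sharp pointwise bound on \Neci\ functions for $\NBP$ and $\PBP$.
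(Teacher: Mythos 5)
Your proposal is correct and follows essentially the same route as the paper: the same choice $h_\M(x)=\sqrt{\log_2 x}$, hypothesis~(iv) supplied by Proposition~\ref{ptn:Neciporuk's_method_function_limitation_NBP_size}, and the conclusion read off from Lemma~\ref{lem:Neciporuk's_method_limitation_complexity_lower_bound_meta-result}. Your only deviation is taking $x_0=2^{16}$ rather than the paper's $x_0=2^8$, which is in fact the more careful choice, since condition~(ii) requires $\sqrt{x}\geq\log_2 x$ on $\cointerval{\log_2 x_0}{+\infty}$ and this fails for $8\leq x<16$.
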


\begin{proof}
  We aim at applying
  Lemma~\ref{lem:Neciporuk's_method_limitation_complexity_lower_bound_meta-result}
  which requires four hypotheses, (i) to~(iv).

For (i), let $h\colon \cointerval{4}{+\infty} \to \R_{\geq 0}$ be the function
defined by $h(x) = \sqrt{\log_2 x}$ for all $x \in \cointerval{4}{+\infty}$ and
$x_0 = 2^8$; as required, $h$ is non-decreasing on $\cointerval{2^8}{+\infty}$.

For (ii), notice that $h(2^x)=\sqrt{x}\geq \log_2 x$ for all
$x \in \cointerval{2^8}{+\infty}$.

For (iii), for all $v, v' \in \N$ verifying  $2^{2^v} \geq 2^8$ and
$2^{2^{v'}} \geq 2^8$, we have
$h(2^{2^v}) + h(2^{2^{v'}}) = \sqrt{2^v} + \sqrt{2^{v'}} \leq
 \sqrt{2^{v + v'}} = h(2^{2^{v + v'}})$ because 
$x + y \leq x y$ when $x,y \geq 2$.

For (iv), by
Proposition~\ref{ptn:Neciporuk's_method_function_limitation_NBP_size}, we know
that any $b \in \NeciporukSet_{\NBP}\cup \NeciporukSet_{\PBP}$
is such that $b(m) \leq \alpha \cdot
\sqrt{\log_2 m} = \alpha \cdot h(m)$ for all $m \in \N, m \geq 4$.

We can therefore apply
Lemma~\ref{lem:Neciporuk's_method_limitation_complexity_lower_bound_meta-result}
with $x_0 = 2^8$ and get that for any family of
Boolean functions $F = \{f_n\}_{n \in \N}$ and all $n \in \N, n
\geq 8$,
\[
\NeciporukLB^{\NBP}_F(n),
\NeciporukLB^{\PBP}_F(n) \leq
\alpha \cdot \bigl(4 + h(\floor{2^8})\bigr) \cdot
\frac{n}{\log_2 n} \cdot h(2^n) =
c \cdot \frac{n}{\log_2 n} \cdot \sqrt{n} =
c \cdot \frac{n^{3/2}}{\log_2 n}
\displaypunct{,}
\]
which implies that
$\NeciporukLB^{\NBP}_F(n),
\NeciporukLB^{\PBP}_F(n) \in
 \Omicron\Bigl(\frac{n^{3/2}}{\log_2 n}\Bigr)$.
\end{proof}

\section{Deterministic and Limited Nondeterministic Branching Programs}
\label{sec:Limited_nondeterministic_branching_programs}
In this section, we focus on the model of Boolean deterministic branching
programs, as well as its limited nondeterministic counterpart. In the case of
$\BP$, results related to the \Neci method have been well-known  for a long time
(see for instance \cite[Chapter 14, Section 3]{we87} or \cite{alzw89}).
Reproving these results using what we presented in Section~\ref{sec:neciporuk}
is an opportunity to confirm the usability and validity of our approach.

Concerning limited nondeterministic branching programs, the definition of the
model itself, as well as the results presented in this section concerning
\Neci's method for the associated measure seem to be novel.

For all $\delta \in \N$, let us define the functions
$b_{\LNBP_\delta}\colon \N_{>0} \to \N$ and $b_{\BP}\colon \N_{>0} \to \N$ given
by 
\[
b_{\LNBP_\delta}(m)  =
\begin{cases}
\ceiling{\frac{1}{6} h_{\LNBP_\delta}(m)} & \text{if $m \geq 4$}\\
0 & \text{otherwise}
\end{cases}
\]
and
\[
b_{\BP}(m) =
\begin{cases}
\ceiling{\frac{1}{6}\frac{\log_2 m }{\log_2 \log_2 m}} & \text{if $m \geq 4$}\\
0 & \text{otherwise}
\end{cases}
\]
for all $m \in \N_{>0}$, where
$h_{\LNBP_\delta}\colon \cointerval{4}{+\infty} \to \R$ is defined as
\[
h_{\LNBP_\delta}(x) =
\begin{array}[t]{@{}r@{\,\,}l@{\,\,}l@{}}
\begin{cases}
\max\bigl\{\frac{\log_2 x}{2^\delta (\log_2 \log_2 x - \delta)},
	   \log_2 \log_2 x\bigr\} & \text{if $2^{2^{\delta + 1}} \leq x$}\\
\log_2 \log_2 x & \text{otherwise}
\displaypunct{.}
\end{cases}
\end{array}
\]
It is straightforward to see that $b_{\BP}(m) \leq b_{\LNBP_0}(m)$ for all
$m \in \N_{>0}$ and that equality holds as soon as
$b_{\BP}(m)\geq \log_2\log_2 m$.

To prove that $b_{\BP} \in \NeciporukSet_{\BP}$, we use the well known idea
that is usually used (see for instance \cite{we87}, \cite{alzw89} or
\cite{ju12}) to derive a specific function $b \in \NeciporukSet_{\BP}$, which
is the fact that, given a Boolean function $f$ and a Boolean BP $P$ that
computes it, we can compute any subfunction $f|_\rho$ of $f$ with a Boolean BP
obtained from $P$ by ``fixing'' the values of the variables to which a value is
affected by $\rho$ (removing the associated vertices and directly linking their
predecessors to their successors through the arcs labelled accordingly).
Therefore, if we denote by $s$ the number of vertices in $P$ labelled by
elements from $V$, we get that an upper bound on the maximum number of
subfunctions computed by BPs with $s$ vertices obtained by ``fixing'' the
values of a given set of variables in a given BP implies a lower bound on $s$
depending on $r_V(f)$, as this number must be at least as big as $r_V(f)$.  For
the case of limited nondeterminism, it suffices to observe that a Boolean
$\delta$-LNBP (for $\delta \in \N$) computing some Boolean function $f$, does
in fact deterministically compute a proof-checker function $g$ for $f$.  We can
then combine the aforementioned technique with
Lemma~\ref{lem:Number_of_subfunctions_proof_checker_upper_bound} binding the
number of subfunctions of $f$ on $V$ and the number of subfunctions of $g$ on
$V$.

\begin{proposition}
\label{ptn:Neciporuk_LNBP_size_lower_bound}
\label{ptn:Neciporuk_BP_size_lower_bound}
$b_{\LNBP_\delta} \in \NeciporukSet_{\LNBP_\delta}$ for all $\delta \in \N$.
In particular, $b_{\BP} \in \NeciporukSet_{\BP}$.
\end{proposition}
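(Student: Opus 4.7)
The plan is to follow the hint given immediately before the statement. Monotonicity of $b_{\LNBP_\delta}$ is a routine check from the piecewise definition of $h_{\LNBP_\delta}$. Now fix a Boolean function $f$, a partition $V_1,\ldots,V_p$ of $[n]$, and a minimum-size $\delta$-LNBP for $f$, which by definition is a deterministic branching program $P$ on $n+\delta$ variables computing a proof-checker $g\colon \{0,1\}^{n+\delta}\to \{0,1\}$ such that $f(\tuple{a})=\bigvee_{\tuple{b}\in \{0,1\}^\delta}g(\tuple{a},\tuple{b})$. Let $s_i$ denote the number of non-sink vertices of $P$ whose label lies in $V_i$; since vertices labeled by the $\delta$ nondeterministic variables contribute additionally to the size, one has $\LNBP_\delta(f)\ge \sum_{i=1}^p s_i$, so it suffices to show that $s_i\ge b_{\LNBP_\delta}(r_{V_i}(f))$ for each $i$.

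For the core step, apply the classical Alon--Zwick subfunction-by-fixing argument to $g$: for every partial assignment $\rho\in \{0,1\}^{[n+\delta]\setminus V_i}$, the subfunction $g|_\rho$ is computed by the deterministic branching program on $V_i$ obtained from $P$ by suppressing each vertex labeled outside $V_i$ and redirecting each incoming arc to the successor of the removed vertex dictated by $\rho$. This reduced program has at most $s_i$ non-sink vertices, all labeled from $V_i$, and a direct enumeration of such programs (each of the at most $s_i$ vertices carries $|V_i|$ label choices and two out-arcs each pointing to one of $s_i+2$ targets, with at most $s_i+1$ choices for the source) yields $r_{V_i}(g)\le (s_i+1)\bigl(|V_i|(s_i+2)^2\bigr)^{s_i}$. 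Combining with Lemma~\ref{lem:Number_of_subfunctions_proof_checker_upper_bound}, which gives $r_{V_i}(f)\le r_{V_i}(g)^{2^\delta}$, and taking binary logarithms yields
\[
\log_2 r_{V_i}(f) \ \le\ 2^\delta s_i\bigl(\log_2|V_i| + 2\log_2(s_i+2)\bigr) + 2^\delta\log_2(s_i+1).
\]

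It remains to invert this inequality. Writing $m=r_{V_i}(f)$ and using the crude bound $|V_i|\le \log_2 m + O(1)$ from Lemma~\ref{lem:Number_of_subfunctions_upper_bound}, the displayed inequality reduces in the high-$m$ regime $m \ge 2^{2^{\delta+1}}$ to $s_i\log_2(s_i+2)\gtrsim \log_2 m/2^\delta$, whose inversion produces $s_i\gtrsim \log_2 m/\bigl(2^\delta(\log_2\log_2 m - \delta)\bigr)$, matching the first branch of $h_{\LNBP_\delta}$. The companion bound $s_i\ge \log_2\log_2 m$ corresponding to the second branch is forced structurally: if $s_i$ were much smaller than $\log_2\log_2 m$ then the right-hand side of the displayed inequality would be $O\bigl(2^\delta\log_2\log_2 m\cdot\log_2\log_2\log_2 m\bigr)$, which for large $m$ is much smaller than $\log_2 m$. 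The main obstacle of the proof is not the strategy but the bookkeeping: one must perform this inversion carefully enough, across the two regimes and the boundary case $m < 2^{2^{\delta+1}}$, for the constant $1/6$ in the definition of $b_{\LNBP_\delta}$ to absorb every additive and multiplicative correction on both branches of the $\max$. The $\BP$ statement then follows at once from the $\delta=0$ case, since a deterministic branching program is a $0$-LNBP and $b_{\BP}\le b_{\LNBP_0}$ pointwise, as noted just before the proposition.
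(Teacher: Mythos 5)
Your overall strategy is the one the paper uses --- pass to the proof-checker $g$, count the deterministic branching programs obtained by fixing the variables outside $V_i$, and invert --- but two steps as written do not hold up. First, your enumeration of the reduced programs carries a factor $\card{V_i}^{s_i}$ for the label choices, which you then dispose of by claiming $\card{V_i}\le \log_2 m+O(1)$ ``from Lemma~\ref{lem:Number_of_subfunctions_upper_bound}''. That lemma gives $r_{V_i}(f)\le 2^{n-\card{V_i}}$, i.e.\ $\card{V_i}\le n-\log_2 m$, and a \emph{lower} bound $\card{V_i}\ge\log_2\log_2 m$; it gives no upper bound on $\card{V_i}$ in terms of $\log_2 m$, and none exists (a block can be very large while inducing few subfunctions). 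With $\card{V_i}$ bounded only by $n$, your displayed inequality inverts to $s_i\gtrsim \log_2 m/(2^\delta\log_2 n)$, which is strictly weaker than the first branch of $h_{\LNBP_\delta}$ whenever $\log_2\log_2 m\ll\log_2 n$. The paper avoids the label factor entirely: the reduced programs are all built from the one fixed program $P$, so their vertex set and labels are inherited and only the start vertex and the two functional arc relations vary with $\rho$, giving $r_{V_i}(g)\le(s_i+2)(s_i+1)^{2s_i}\le 2^{6s_i\log_2 s_i}$ for $s_i\ge 2$. (Alternatively, restrict to the subset $V_i'$ of variables on which $f$ depends, use $r_{V_i}(f)=r_{V_i'}(f)$ from Proposition~\ref{lem:Number_of_subfunctions_with_variable_independence} and $\card{V_i'}\le s_i$, so that $\log_2\card{V_i'}\le\log_2 s_i$ is absorbed.)

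Second, your derivation of the branch $s_i\ge\tfrac16\log_2\log_2 m$ from the same counting inequality only works when $\log_2 m$ dominates $2^\delta\log_2\log_2 m\cdot\log_2\log_2\log_2 m$; in the regime $m<2^{2^{\delta+1}}$ --- precisely where $b_{\LNBP_\delta}(m)=\ceiling{\tfrac16\log_2\log_2 m}$ and this branch is the entire claim --- the inequality is vacuous because $\log_2 m/2^\delta<2$. The paper obtains this branch directly and uniformly in $\delta$: every variable in $V_i$ on which $f$ depends labels at least one vertex of $P$, so $s_i\ge\card{V_i'}$ and $r_{V_i}(f)=r_{V_i'}(f)\le 2^{2^{\card{V_i'}}}\le 2^{2^{s_i}}$ by Lemma~\ref{lem:Number_of_subfunctions_upper_bound}, whence $s_i\ge\log_2\log_2 r_{V_i}(f)$ with no constant loss. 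Both gaps are repairable along these lines, but as written your inversion does not deliver $b_{\LNBP_\delta}$.
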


\begin{proof}
Let $\delta \in \N$. It is not
too difficult to show that $b_{\LNBP_\delta}$ and $b_{\BP}$ are non-decreasing,
we leave this to the reader.

Let $f$ be a $n$-ary Boolean function on $V$ and $V_1, \ldots, V_p$ a partition of
$V$.  Let $P$ be a Boolean $\delta$-LNBP computing $f$ and let $g$ be the $(n
+ \delta)$-ary Boolean function computed by $P$ when considering the $\delta$
nondeterministic bits as regular input variables (that is, $g$ is such that,
for all $\tuple{a} \in \{0, 1\}^{V \cup [\delta]}$, $g(\tuple{a}) = 1$ if, and only
if, $P[\tuple{a}]$ contains a path from $s$ to $t_1$). $g$ is a proof-checker
function for $f$.

For all $i \in [p]$ we will denote by $s_i \in \N$ the number of vertices in $P$
labelled by elements in $V_i$, as well as $q \in \N$ the number of vertices
labelled by elements in $U = [\delta]$. It is clear that $P$
is of size $\sum_{i = 1}^p s_i + q \geq \sum_{i = 1}^p s_i$.

We now claim that $s_i \geq b_{\LNBP_\delta}(r_{V_i}(f))$ for all $i \in
[p]$.

Let $i \in [p]$. Let $V_i'$ be the subset of $V_i$ containing all indices of
variables on which $f$ depends. Then, by
Lemma~\ref{lem:Number_of_subfunctions_with_variable_independence}, $r_{V_i}(f)
= r_{V_i'}(f)$. Moreover for each element $l \in V_i'$, $P$ contains at least
one vertex labelled by $l$. By
Lemma~\ref{lem:Number_of_subfunctions_upper_bound}, it follows that $r_{V_i}(f)
= r_{V_i'}(f) \leq 2^{2^{\card{V_i'}}} \leq 2^{2^{s_i}}$ and $s_i \geq
\log_2 \log_2(r_{V_i}(f))$.

If $r_{V_i}(f) \leq 3$ the claim is obvious from the
definition of $b_{\LNBP_\delta}$.

In the case where $4 \leq r_{V_i}(f) <
4^{2^\delta}$ we have $\ceiling{\frac{1}{6} \log_2\log_2(r_{V_i}(f))} =
b_{\LNBP_\delta}(r_{V_i}(f))$ and we are also done as $s_i$ is an integer and
$s_i \geq \log_2\log_2(r_{V_i}(f))$.

We now assume $r_{V_i}(f) \geq 4^{2^\delta}$. In particular this implies that $s_i \geq 1$.

Observe that for all $h\colon \{0, 1\}^{V_i} \to \{0, 1\}$ a subfunction of $g$
on $V_i$, by definition, there exists a partial assignment $\rho \in \{0,
1\}^{V\setminus V_i \cup [\delta]}$ such that $g|_\rho = h$, so it is not too
difficult to see that $h$ is computed by the Boolean BP of size $s_i$ obtained
from $P$ by:
\begin{enumerate}
    \item
	removing all non sink vertices labelled by variables not in $V_i$;
      \item defining the new start vertex as the only vertex whose label is in
        $V_i$ and connected to the start vertex of $P$ by a path of nodes
        labelled by a variable outside of $V_i$ and arcs labelled
        consistently with $\rho$;
      \item connecting a vertex $u$ to a vertex $v$ by an arc labelled by $a \in
        \set{0, 1}$ if, and only if, there exists a path from $u$ to $v$ in $P$
        verifying that any intermediate vertex of the path is labelled by a
        variable outside of $V_i$, the first arc is labelled by $a$ and each arc
        (but the first one) is labelled consistently with $\rho$.
\end{enumerate}
Thus, $r_{V_i}(g)$ is necessarily upper-bounded by the number of syntactically
distinct such BPs we can build from $P$ that way. Since, for such a BP, there
are at most $s_i + 2$ possible choices for the start vertex and by functionality
of the set of arcs labelled $0$ and the set of arcs labelled $1$ seen as
successor relations, there are at most $(s_i + 1)^{s_i}$ possible choices for
the set of arcs labelled $0$, as well as at most $(s_i + 1)^{s_i}$ possible
choices for the set of arcs labelled $1$, $r_{V_i}(g)$ is at most
$(s_i + 2) (s_i + 1)^{2 s_i}$.
Assuming $2 \leq s_i$ we get:
\begin{align*}
r_{V_i}(g) \leq & (s_i + 2) (s_i + 1)^{2 s_i} && = 2^{\log_2(s_i + 2) + 2 s_i \log_2(s_i + 1)}\\
&&& \leq 2^{3 s_i \log_2(s_i+2)} \\
&&& \leq 2^{6 s_i \log_2(s_i)} & \text{as $2 \leq s_i$}
\displaypunct{.}
\end{align*}
It follows that $s_i \geq \frac{1}{6} \cdot
	  \frac{\log_2(r_{V_i}(g))}{\log_2\log_2(r_{V_i}(g))}$. If $s_i=1$ it
          is clear as $r_{V_i}(g)$ is then at most $4$, and if $s_i\geq 2$ we
          would otherwise  have
\begin{align*}
s_i \log_2(s_i)
& < \frac{1}{6} \cdot \frac{\log_2(r_{V_i}(g))}{\log_2\log_2(r_{V_i}(g))}
    \log_2\Bigl(\frac{1}{6} \cdot \frac{\log_2(r_{V_i}(g))}
				       {\log_2\log_2(r_{V_i}(g))}\Bigr)\\
& = \frac{\log_2(r_{V_i}(g))}{6} -
    \frac{\log_2(r_{V_i}(g)) \log_2\bigl(6 \log_2\log_2(r_{V_i}(g))\bigr)}
	 {6 \log_2\log_2(r_{V_i}(g))}\\
& < \frac{\log_2(r_{V_i}(g))}{6}
\end{align*}
(observe that the last inequality follows from the fact that the subtracted
member must necessarily be positive since $r_{V_i}(g) \geq 4$).
From Lemma~\ref{lem:Number_of_subfunctions_proof_checker_upper_bound} we
have $r_{V_i}(g) \geq r_{V_i}(f)^{\frac{1}{2^\delta}}$. The function
$\frac{\log_2(x)}{\log_2\log_2(x)}$ being non-decreasing on
$\cointerval{e^{e \ln(2)}}{+\infty}$ and as
$\frac{\log_2(x)}{\log_2\log_2(x)} \leq 2$ for
$x \in \ccinterval{4}{e^{e \ln(2)}}$, we get for
$r_{V_i}(f)^{\frac{1}{2^\delta}} \geq 4$:

\[
s_i \geq
\frac{1}{6} \cdot \frac{\log_2(r_{V_i}(f))}
		       {2^\delta \bigr(\log_2\log_2(r_{V_i}(f)) - \delta\bigl)}
\displaypunct{.}
\]

In conclusion, for all $\delta,n \in \N$, and any $n$-ary Boolean function $f$
on $V$ and any partition $V_1, \ldots, V_p$ of $V$, it
holds that $\LNBP_\delta(f) \geq \sum_{i = 1}^p b_{\LNBP_\delta}(r_{V_i}(f))$,
hence $b_{\LNBP_\delta} \in \NeciporukSet_{\LNBP_\delta}$.  It also directly
follows that $b_{\BP} \in \NeciporukSet_{\BP}$ because $b_{\BP}$ is
non-decreasing and $b_{\BP}(m) \leq b_{\LNBP_0}(m)$ for all $m \in \N_{>0}$.
\end{proof}

Let us define
$\Xix_{\LNBP}\colon [2,+\infty) \times \N_{>0}\rightarrow\R$ by
\[
	 \Gamma(x, \delta)=
	 \begin{cases}
	  \max\bigl\{\frac{x^2}{2^\delta (\log_2 x - \delta) \log_2 x},
		     x\bigr\} & \text{if $2^{\delta + 1} \leq x$}\\
	  x & \text{otherwise}
	  \displaypunct{.}
	  \end{cases}
\]
Using the previous proposition and Lemma~\ref{lem:ISA_partitioning}, we can
immediately derive the following asymptotic lower bound on
$\NeciporukLB^{\LNBP_{\Delta(n)}}_{\ISA}$ for any $\Delta\colon \N \to \N$.

\begin{proposition}
\label{ptn:ISA_LNBP_size_lower_bound}
\label{ptn:ISA_BP_size_lower_bound}
$\NeciporukLB^{\LNBP_{\Delta(n)}}_{\ISA}(n) \in
 \Omega\bigl(\Xix_{\LNBP}(n, \Delta(n))\bigr)$
for any $\Delta\colon \N \to \N$.
In particular,
$\NeciporukLB^{\BP}_{\ISA}(n) \in \Omega\bigl(\frac{n^2}{\log^2_2 n}\bigr)$.
\end{proposition}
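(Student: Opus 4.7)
The plan is to apply the abstract \Neci machinery by combining the \Neci function $b_{\LNBP_{\Delta(n)}}$ given by Proposition~\ref{ptn:Neciporuk_LNBP_size_lower_bound} with the good partition of the input variables of $\ISA_n$ supplied by Lemma~\ref{lem:ISA_partitioning}. For $n$ large enough, that lemma furnishes a partition $V_1,\ldots,V_p,U$ of $[n]$ such that $p\geq \frac{1}{32}\cdot\frac{n}{\log_2 n}$, every $\card{V_i}$ is uniform, and $r_{V_i}(\ISA_n)=2^q$ with $q\geq \frac{n}{16}$. Since $b_{\LNBP_{\Delta(n)}}\in \NeciporukSet_{\LNBP_{\Delta(n)}}$, the definition of $\NeciporukLB$ together with the non-negativity of $b_{\LNBP_{\Delta(n)}}(r_U(\ISA_n))$ immediately yields
\[
\NeciporukLB^{\LNBP_{\Delta(n)}}_{\ISA}(n)\;\geq\;\sum_{i=1}^{p} b_{\LNBP_{\Delta(n)}}(2^q)\;=\;p\cdot b_{\LNBP_{\Delta(n)}}(2^q).
\]

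Next, I would expand $b_{\LNBP_{\Delta(n)}}(2^q)=\ceiling{\tfrac{1}{6}\,h_{\LNBP_{\Delta(n)}}(2^q)}$ by a case split on whether $q\geq 2^{\Delta(n)+1}$. In the principal case $q\geq 2^{\Delta(n)+1}$ (which is implied by $\Delta(n)\leq \log_2 n-5$ via $q\geq n/16$), one gets
\[
b_{\LNBP_{\Delta(n)}}(2^q)\;\geq\;\tfrac{1}{6}\max\Bigl\{\tfrac{q}{2^{\Delta(n)}(\log_2 q-\Delta(n))},\ \log_2 q\Bigr\},
\]
so substituting $q\geq n/16$ and $\log_2 q\geq \log_2 n - 4$ and multiplying by $p$ produces a bound of order $\max\bigl\{\tfrac{n^2}{2^{\Delta(n)}(\log_2 n-\Delta(n))\log_2 n},\,n\bigr\}$, matching $\Gamma_{\LNBP}(n,\Delta(n))$ in its first branch. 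In the complementary case $q<2^{\Delta(n)+1}$, $h_{\LNBP_{\Delta(n)}}(2^q)$ reduces to $\log_2 q$, hence $p\cdot b_{\LNBP_{\Delta(n)}}(2^q)\geq \frac{n}{192\log_2 n}\cdot(\log_2 n-4)\in\Omega(n)$, which matches the second branch of $\Gamma_{\LNBP}$ (indeed, that case only arises when $\Delta(n)$ is close to or above $\log_2 n$, so that $\Gamma_{\LNBP}(n,\Delta(n))=\Theta(n)$).

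The in particular statement about $\NeciporukLB^{\BP}_{\ISA}(n)$ is the specialisation $\Delta(n)\equiv 0$: then the hypothesis $q\geq 2^{\Delta(n)+1}=2$ is trivially satisfied for $n\geq 32$, and $\Gamma_{\LNBP}(n,0)=\max\bigl\{\tfrac{n^2}{\log_2^2 n},\,n\bigr\}=\Theta\bigl(\tfrac{n^2}{\log_2^2 n}\bigr)$, so the first case of the analysis above directly yields $\Omega\bigl(\tfrac{n^2}{\log_2^2 n}\bigr)$.

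No step here is a genuine obstacle; the only mildly delicate point is to make the case split align correctly, because the threshold $2^{\delta+1}\leq x$ that defines the branches of $\Gamma_{\LNBP}(x,\delta)$ lives at the level of $x=n$, whereas the threshold $2^{2^{\delta+1}}\leq m$ defining the branches of $h_{\LNBP_{\Delta(n)}}$ lives at the level of $m=2^q$ with $q$ only linear in $n$. Checking that the slack $q\geq n/16$ is enough to absorb this mismatch so that both branches of $\Gamma_{\LNBP}$ are recovered from the appropriate branch of $h_{\LNBP_{\Delta(n)}}$ is really the only bookkeeping one needs to do carefully.
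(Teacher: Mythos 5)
Your proposal is correct and follows essentially the same route as the paper: apply $b_{\LNBP_{\Delta(n)}}$ to the partition from Lemma~\ref{lem:ISA_partitioning}, then reconcile the branch condition $2^{2^{\Delta(n)+1}}\leq 2^q$ of $h_{\LNBP_{\Delta(n)}}$ with the branch condition $2^{\Delta(n)+1}\leq n$ of $\Xix_{\LNBP}$. The only (cosmetic) difference is in that reconciliation step: the paper proves the inequality $\log_2 x \geq \frac{cx}{2^{\alpha}(\log_2 x-\alpha)}$ on $\ccinterval{2^{\alpha+1}}{2^{\alpha+5}}$ by a derivative analysis, whereas you observe directly that in the mismatched regime $2^{\Delta(n)}\gtrsim n/32$ forces the first branch of $\Xix_{\LNBP}(n,\Delta(n))$ to be $O(n)$, which is a slightly more elementary way of establishing the same fact.
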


\begin{proof}
  Let $\Delta\colon \N \to \N$.  Let $n \in \N, n \geq 32$.  Let $V_1,
  \ldots, V_p, U$ be a partition of $[n]$ such that $r_{V_i}(\isan) = 2^q$ for
  all $i \in [p]$ where $p, q \in \N_{>0}$ verify $p \geq \frac{1}{32} \cdot
  \frac{n}{\log_2 n }$ and $q \geq \frac{n}{16}$ as given by Lemma
  \ref{lem:ISA_partitioning}.  We have
\begin{align*}
& \NeciporukLB^{\LNBP_{\Delta(n)}}_{\ISA}(n)\\
\geq & \sum_{i = 1}^p b_{\LNBP_{\Delta(n)}}(r_{V_i}(\isan)) +
       b_{\LNBP_{\Delta(n)}}(r_U(\isan))\\
\geq & \sum_{i = 1}^p b_{\LNBP_{\Delta(n)}}(2^q) \\
\geq & \frac{1}{6} \cdot \frac{1}{32} \cdot \frac{n}{\log_2 n} \cdot
       \begin{cases}
       \max\bigl\{\frac{\frac{n}{16}}
		       {2^{\Delta(n)} (\log_2(\frac{n}{16}) - \Delta(n))},
		  \log_2(\frac{n}{16})\bigr\} &
       \text{if $4^{2^{\Delta(n)}} \leq 2^{\frac{n}{16}}$}\\
       \log_2(\frac{n}{16}) & \text{otherwise}
       \end{cases}\\
\geq & \frac{1}{192} \cdot \frac{n}{\log_2 n} \cdot \frac{1}{16} \cdot
       \begin{cases}
       \max\bigl\{\frac{n}{2^{\Delta(n)} (\log_2(n) - \Delta(n))},
		  \log_2 n\bigr\} & \text{if $2^{\Delta(n) + 5} \leq n$}\\
       \log_2 n & \text{otherwise}
       \end{cases} &
       \text{as $\log_2\Bigl(\frac{n}{16}\Bigr) \geq \frac{\log_2 n}{16}$}\\
\geq & \frac{c}{3072} \cdot \frac{n}{\log_2 n} \cdot
       \begin{cases}
       \max\bigl\{\frac{n}{2^{\Delta(n)} (\log_2(n) - \Delta(n))},
		  \log_2 n\bigr\} & \text{if $2^{\Delta(n) + 1} \leq n$}\\
       \log_2 n & \text{otherwise}
       \end{cases}& \text{for some $c$, see below}\\
= & \frac{c}{3072} \cdot
    \begin{cases}
    \max\bigl\{\frac{n^2}{2^{\Delta(n)} (\log_2(n) - \Delta(n)) \log_2 n},
	       n\bigr\} & \text{if $2^{\Delta(n) + 1} \leq n$}\\
    n & \text{otherwise}
    \end{cases}\\
= & \frac{c}{3072} \cdot \Xix_{\LNBP}(n, \Delta(n)) &\text{as desired}
\displaypunct{.}
\end{align*}

In order to show the inequality above it suffices to show that $\log_2 x
\geq \frac{cx}{2^\alpha(\log_2(x) - \alpha)}$ for
$x\in I=\ccinterval{2^{\alpha+1}}{2^{\alpha+5}}$, $\alpha\geq 0$ and some
constant $c$.

It suffices to show that the function
$f(x)=2^{\alpha}\log_2 (x)\log_2(\frac{x}{2^\alpha}) - cx$ is non-decreasing on
$I$. This concludes the claim as
$f(2^{\alpha+1})=2^{\alpha}(\alpha+1)-c2^{\alpha+1}\geq 0$ when $\alpha\geq 0$
and $c \leq \frac{1}{2}$.
To see this notice that the derivative of $f$ is
$2^\alpha(\frac{\log_2 x}{x \ln 2} + \frac{\log_2(\frac{x}{2^\alpha})}{x\ln
  2}) -c$ that has the same sign as $g(x)=2^\alpha\log_2(\frac{x^2}{2^\alpha})
- xc\ln 2$ for $x \in I$.

The derivative of $g$ is $\frac{2^{\alpha+1}}{x\ln 2} -c\ln 2$ that vanishes
for a value $x_0=\frac{2^{\alpha+1}}{c(\ln 2)^2}$. Assuming
$c\leq \frac{1}{2^4 (\ln 2)^2}$ we have $x_0 \geq 2^{\alpha+5}$ and the
derivative of $g$ is always non-negative on $I$.

We have $g(2^{\alpha+1})=2^\alpha(\alpha+2 - 2c\ln 2)$ which is non-negative as
soon as $c \leq \frac{1}{\ln 2}$. Hence $g$ is non-negative on $I$.

Hence taking $c=\frac{1}{2^4 (\ln 2)^2}$ yields the desired result.
\end{proof}

Now we show that for all $\delta \in \N$, $b_{\LNBP_\delta}$ is in fact an
asymptotically largest function in $\NeciporukSet_{\LNBP_\delta}$ (as well as
for $b_{\BP}$ and $\NeciporukSet_{\BP}$) and that the previous bound is in fact
also the asymptotically largest we may obtain, using the meta-results of
Section~\ref{sec:neciporuk}.
To do this, we appeal to our upper bound from Theorem~\ref{thm:all_uppers} on
the size of a $\delta$-LNBP (or a deterministic BP) computing $\isakl$ and apply
Lemma~\ref{lem:Neciporuk's_method_function_limitation_meta-result}.

\begin{proposition}
\label{ptn:Neciporuk's_method_function_limitation_LNBP_size}
\label{ptn:Neciporuk's_method_function_limitation_BP_size}
There exists a constant $c \in \R_{>0}$ verifying that for each $\delta \in \N$,
any $b \in \NeciporukSet_{\LNBP_\delta}$ is such that
$b(m) \leq c \cdot b_{\LNBP_\delta}(m)$ for all $m \in \N, m \geq 4$.
In particular, there exists a constant $c' \in \R_{>0}$ verifying that any
$b \in \NeciporukSet_{\BP}$ is such that
$b(m) \leq c' \cdot b_{\BP}(m)$ for all
$m \in \N, m \geq 4$.
\end{proposition}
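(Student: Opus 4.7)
The plan is to apply Lemma~\ref{lem:Neciporuk's_method_function_limitation_meta-result} in direct analogy with the proof of Proposition~\ref{ptn:Neciporuk's_method_function_limitation_NBP_size}, but now using the $\LNBP_\delta$ upper bound on $\isakk$ from Theorem~\ref{thm:all_uppers}. Fixing $\delta \in \N$, I would first define a non-decreasing function $g_{\LNBP_\delta}\colon \cointerval{1}{+\infty} \to \R_{\geq 0}$ that simultaneously majorizes the two cases of the bound in Theorem~\ref{thm:all_uppers} applied with $\ell=k$, namely a piecewise expression of the form
\[
g_{\LNBP_\delta}(k) \;=\;
\begin{cases}
C_1 \cdot k \cdot 2^k & \text{if } k \leq \delta,\\[2pt]
C_2 \cdot \max\!\Bigl\{k \cdot 2^k,\; \tfrac{2^{2k-\delta}}{k-\delta}\Bigr\} & \text{if } k > \delta,
\end{cases}
\]
for suitable absolute constants $C_1,C_2$, extended to a non-decreasing step function on the reals by taking $g_{\LNBP_\delta}(x) = g_{\LNBP_\delta}(\lceil x \rceil)$.

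Next, I would verify that there is an absolute constant $\alpha$, independent of $\delta$, satisfying $g_{\LNBP_\delta}(k+1)/g_{\LNBP_\delta}(k) \leq \alpha$. Within each regime this is immediate since every factor appearing in $g_{\LNBP_\delta}$ grows by at most a multiplicative constant when $k$ is increased by one (the exponential terms at most quadruple, and the rational factor $1/(k-\delta)$ is non-increasing in $k$). The only delicate point is the transition $k = \delta \to k = \delta + 1$, where one must check that $C_2 \cdot \max\{(\delta+1) 2^{\delta+1}, 2^{\delta+2}\}$ is bounded by a constant times $C_1 \cdot \delta \cdot 2^\delta$, which holds provided $C_2/C_1$ is chosen appropriately.

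Applying Lemma~\ref{lem:Neciporuk's_method_function_limitation_meta-result} then gives, for every $b \in \NeciporukSet_{\LNBP_\delta}$ and every $m \geq 4$, the bound $b(m) \leq \alpha \cdot g_{\LNBP_\delta}(\log_2 \log_2 m)/\log_2 m$. Writing $y = \log_2 \log_2 m$ (so $\log_2 m = 2^y$), I would finish by a case analysis matching the two regimes in the definition of $h_{\LNBP_\delta}$: for $y \leq \delta + 1$ (equivalently $m < 2^{2^{\delta+1}}$), the ratio simplifies to $O(y) = O(\log_2 \log_2 m)$, which is $O(h_{\LNBP_\delta}(m))$; for $y > \delta + 1$, it simplifies to $O\!\bigl(\max\{\tfrac{\log_2 m}{2^\delta (\log_2\log_2 m - \delta)}, \log_2\log_2 m\}\bigr)$, again $O(h_{\LNBP_\delta}(m))$. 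The $\BP$ statement then follows by specializing to $\delta=0$.

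The main obstacle is ensuring that the constant $c$ produced is genuinely uniform in $\delta$. This reduces to checking that (a) the ratio bound $\alpha$ for $g_{\LNBP_\delta}(k+1)/g_{\LNBP_\delta}(k)$ does not deteriorate as $\delta$ grows, in particular at the transition $k = \delta$ where the formula changes, and (b) the constants hidden in the final case analysis (where the factor $1/(k-\delta)$ could a priori blow up near $k = \delta$) absorb cleanly into a single universal constant. Both are matters of routine bookkeeping, but they constitute the real content of the proof.
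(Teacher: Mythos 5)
Your proposal follows essentially the same route as the paper's proof: majorize the $\LNBP_\delta(\isakk)$ bound from Theorem~\ref{thm:all_uppers} by a non-decreasing $g$ with uniformly bounded ratio $g(k+1)/g(k)$, feed it into Lemma~\ref{lem:Neciporuk's_method_function_limitation_meta-result}, and match $g(\log_2\log_2 m)/\log_2 m$ against $h_{\LNBP_\delta}$ by a case split around $\log_2\log_2 m\approx\delta$; the bookkeeping you defer (uniformity of $\alpha$ at the transition $k=\delta$, and the fact that $1/(k-\delta)\leq 1$ on the relevant integers so nothing blows up) works out exactly as you anticipate, and matches the paper's choice $g(x)=13\cdot 2^x\max\bigl\{\tfrac{2^{x-\delta}}{x-\delta},x\bigr\}$ with ratio bound $4$. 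The one step worth making explicit is the final ``in particular'': specializing to $\delta=0$ gives $b(m)\leq c\cdot b_{\LNBP_0}(m)$, and you still need the easy observation that $\log_2\log_2 m = O\bigl(\tfrac{\log_2 m}{\log_2\log_2 m}\bigr)$ to pass from $b_{\LNBP_0}$ to $b_{\BP}$.
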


\begin{proof}
Let $\delta \in \N$.
Let $g\colon \cointerval{1}{+\infty} \to \R_{\geq 0}$ be the non-decreasing
function defined by
\begin{align*}
g(x) = 13 \cdot 2^x \cdot
       \begin{cases}
       \max\bigl\{\frac{2^{x - \delta}}{x - \delta}, x\bigr\} &
	\text{if $\delta + 1 \leq x$}\\
       x & \text{otherwise}
       \end{cases}
\end{align*}
     
Theorem \ref{lem:ISA_LNBP_size_upper_bound} tells us that for all $k \in \N_{>0}$,
we have
\begin{align*}
\LNBP_\delta(\isakk)
& \leq \begin{cases}
       12 \cdot 2^k \max\bigl\{\frac{2^{k - \delta}}{k - \delta}, k\bigr\} +
       \frac{2^{2 k - \delta}}{k - \delta} & \text{if $\delta + 1 \leq k$}\\
       2^k (3 k + 1) + 2 \cdot 2^k & \text{otherwise}
       \end{cases}\\
& \leq \begin{cases}
       2^k \bigl(12 \max\bigl\{\frac{2^{k - \delta}}{k - \delta}, k\bigr\} +
       \frac{2^{k - \delta}}{k - \delta}\bigr) & \text{if $\delta + 1 \leq k$}\\
       6 \cdot 2^k k & \text{otherwise}
       \end{cases} & \text{as $k \geq 1$}\\
& \leq g(k)
\end{align*}
and moreover, $\frac{g(k + 1)}{g(k)} \leq 4$ for all $k \in \N_{>0}$. Indeed,
let $k \in \N_{>0}$, there are two cases to consider:
\begin{itemize}
\item
if $g(k+1)=13 \cdot 2^{k+1} (k+1)$ then notice that we always have $g(k)\geq 13
\cdot 2^k \cdot k$. Therefore we get
$\frac{g(k + 1)}{g(k)} \leq \frac{13 \cdot 2^{k + 1} (k + 1)}{13 \cdot 2^k k} =
 2 (1 + \frac{1}{k}) \leq 4$;
\item otherwise $g(k+1)= 13\cdot 2^{k+1} \frac{2^{k + 1 - \delta}}{k + 1 -
    \delta}$ and notice that either $g(k) \geq 13 \cdot 2^k \frac{2^{k - \delta}}{k -
    \delta}$ or $k=\delta$. If $k=\delta$ it is simple to check that $\frac{g(k +
    1)}{g(k)} \leq 4$ , otherwise we have $\frac{g(k +
    1)}{g(k)} \leq  \frac{13 \cdot 2^{k + 1} \frac{2^{k + 1 - \delta}}{k + 1 - \delta}}
      {13 \cdot 2^k \frac{2^{k - \delta}}{k - \delta}} =
 4 \cdot \frac{k - \delta}{k + 1 - \delta} \leq 4$.
\end{itemize}
Therefore, by Lemma
\ref{lem:Neciporuk's_method_function_limitation_meta-result}, any
$b\in \NeciporukSet_{\LNBP_\delta}$ verifies
\begin{align*}
b(m)
& \leq 4 \cdot \frac{g(\log_2\log_2 m)}{\log_2 m}\\
& = 4 \cdot
    \frac{13 \cdot 2^{\log_2\log_2 m }
	  \begin{cases}
	  \max\bigl\{\frac{2^{\log_2\log_2(m) - \delta}}
			  {\log_2\log_2(m) - \delta},
		     \log_2\log_2 m\bigr\} &
	  \text{if $\delta + 1 \leq \log_2\log_2 m$}\\
	  \log_2\log_2 m & \text{otherwise}
	  \end{cases}}{\log_2 m}\\
& = 52 \cdot
    \begin{cases}
    \max\bigl\{\frac{\log_2 m}{2^\delta (\log_2\log_2(m) - \delta)},
	       \log_2\log_2 m\bigr\} &
    \text{if $2^{2^{\delta + 1}} \leq m$}\\
    \log_2\log_2 m & \text{otherwise}
    \end{cases}\\
& \leq c \cdot b_{\LNBP_\delta}(m)
\end{align*}
for all $m \in \N, m \geq 4$, where $c \in \R_{>0}$ is a sufficiently large
constant.

In the case where $\delta =0$ notice that for $m\geq 4$ we have

$\log_2\log_2 m \leq d \cdot \frac{\log_2 x}{\log_2\log_2 x}$ for some
suitable constant $d$.

So we can also conclude that for any
$b \in \NeciporukSet_{\BP} = \NeciporukSet_{\LNBP_0}$, we have
\[
b(m) \leq 52 \cdot \max\Bigl\{\frac{\log_2 m}{\log_2\log_2 m},
			      \log_2\log_2 m\Bigr\} \leq
c' \cdot \frac{\log_2 m}{\log_2\log_2 m}
\]
for all $m \in \N, m \geq 4$, where $c' \in \R_{>0}$ is a sufficiently large
constant.
\end{proof}

Finally, using this and
Lemma~\ref{lem:Neciporuk's_method_limitation_complexity_lower_bound_meta-result},
we get the following result, showing that the asymptotically greatest lower
bound we may expect using \Neci's method for $\LNBP_\delta$ for any
$\delta \in \N$ is (asymptotically) equivalent to the lower bound for $\ISA$
given in Proposition \ref{ptn:ISA_LNBP_size_lower_bound}.

\begin{theorem}
\label{thm:Neciporuk's_method_limitation_LNBP_size_lower_bound}
\label{thm:Neciporuk's_method_limitation_BP_size_lower_bound}
For any family of Boolean functions $F = \{f_n\}_{n \in \N}$ and any
$\Delta\colon \N \to \N$,
$\NeciporukLB^{\LNBP_{\Delta(n)}}_F(n) \in
 \Omicron\bigl(\Xix_{\LNBP}(n, \Delta(n))\bigr)$.

In particular,
$\NeciporukLB^{\BP}_F(n) \in \Omicron\bigl(\frac{n^2}{\log^2_2 n}\bigr)$.
\end{theorem}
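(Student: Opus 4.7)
The plan is to parallel the proof of Theorem~\ref{thm:Neciporuk's_method_limitation_NBP_size_lower_bound}, applying Lemma~\ref{lem:Neciporuk's_method_limitation_complexity_lower_bound_meta-result} for each fixed $n$ to the complexity measure $\M = \LNBP_{\Delta(n)}$ with the choice $h_\M = h_{\LNBP_{\Delta(n)}}$ (the function already introduced alongside $b_{\LNBP_\delta}$). Since the meta-lemma's constants $\alpha$ and $x_0$ must be absolute (independent of $n$), the task reduces to verifying its four hypotheses uniformly in $\delta = \Delta(n)$.

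Hypothesis~(iv) is exactly the content of Proposition~\ref{ptn:Neciporuk's_method_function_limitation_LNBP_size}, whose constant $c$ is absolute in $\delta$ and will serve as $\alpha$. Hypothesis~(ii) is immediate because $h_{\LNBP_\delta}(2^x) \geq \log_2\log_2(2^x) = \log_2 x$ in both branches of the piecewise definition. Hypothesis~(i), monotonicity of $h_{\LNBP_\delta}$ on $\cointerval{x_0}{+\infty}$, reduces to monotonicity of $\frac{\log_2 x}{2^\delta(\log_2\log_2 x - \delta)}$ on the upper branch, which holds as soon as $\log_2\log_2 x$ exceeds a small absolute constant, since the numerator grows as $\log_2 x$ while the denominator grows only as $\log_2\log_2 x$.

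The main obstacle is hypothesis~(iii), the subadditivity $h_{\LNBP_\delta}(2^{2^v}) + h_{\LNBP_\delta}(2^{2^{v'}}) \leq h_{\LNBP_\delta}(2^{2^{v+v'}})$ for $v, v' \in \N$ with $2^{2^v}, 2^{2^{v'}} \geq x_0$, where the inner quantity takes the form $\max\{\frac{2^v}{2^\delta(v-\delta)}, v\}$. I would split into cases by which branch of the max is active on each side: if both are the linear branch then the sum $v + v'$ sits inside the linear part of $h_{\LNBP_\delta}(2^{2^{v+v'}})$. When an exponential branch is active on either side (which forces $v > \delta$ or $v' > \delta$, hence $v - \delta \geq 1$ or $v' - \delta \geq 1$ in the relevant term), it suffices to bound $\frac{2^v}{2^\delta(v-\delta)} + \frac{2^{v'}}{2^\delta(v'-\delta)}$ by $\frac{2^{v+v'}}{2^\delta(v+v'-\delta)}$: dividing by the right-hand side, the two resulting ratios are polynomial in $v, v'$ divided respectively by $2^{v'}$ and $2^v$, and each drops below $1/2$ as soon as $v, v'$ exceed a small absolute constant --- which we secure by choosing $x_0$ large enough, independently of $\delta$.

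Once (i)--(iv) are verified, the meta-lemma yields
\[
\NeciporukLB^{\LNBP_{\Delta(n)}}_F(n) \leq \alpha \cdot \bigl(4 + h_{\LNBP_{\Delta(n)}}(\floor{x_0})\bigr) \cdot \frac{n}{\log_2 n} \cdot h_{\LNBP_{\Delta(n)}}(2^n),
\]
and plugging in the definition of $h_{\LNBP_{\Delta(n)}}$ at $x = 2^n$ recovers $\Xix_{\LNBP}(n, \Delta(n))$ up to an absolute multiplicative constant, which proves the $\Omicron$ bound. Specializing to $\Delta \equiv 0$ gives the $\BP$ case, for which $\Xix_{\LNBP}(n, 0)$ collapses to $\frac{n^2}{\log_2^2 n}$.
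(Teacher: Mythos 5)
Your overall strategy is exactly the paper's: apply Lemma~\ref{lem:Neciporuk's_method_limitation_complexity_lower_bound_meta-result} with $h_\M = h_{\LNBP_\delta}$, check hypotheses (i)--(iv) with constants uniform in $\delta$ (hypothesis (iv) coming from Proposition~\ref{ptn:Neciporuk's_method_function_limitation_LNBP_size}), and read off $\Xix_{\LNBP}(n,\Delta(n))$ from $\frac{n}{\log_2 n}\cdot h_{\LNBP_{\Delta(n)}}(2^n)$. Your treatment of the ``both branches exponential'' subcase of (iii) is correct and even a little cleaner than the paper's (the paper instead uses $x+y\leq xy$ for $x,y\geq 2$ after noting $v-\delta, v'-\delta\geq 2$).

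However, there is a genuine gap in your verification of hypothesis (iii): you collapse ``at least one exponential branch active'' into a single case and claim it suffices to prove
$\frac{2^{v}}{2^\delta(v-\delta)} + \frac{2^{v'}}{2^\delta(v'-\delta)} \leq \frac{2^{v+v'}}{2^\delta(v+v'-\delta)}$.
This reduction fails in the mixed case where, say, the exponential branch is active for $v$ but $h_{\LNBP_\delta}(2^{2^{v'}}) = v'$. If $v' \leq \delta$ the expression $\frac{2^{v'}}{2^\delta(v'-\delta)}$ is undefined or negative, so it is certainly not an upper bound for $v'$; and if $v' \geq \delta+1$ but the linear term wins the max, then $\frac{2^{v'-\delta}}{v'-\delta} \leq v'$, so substituting the exponential formula for $v'$ \emph{under}estimates the left-hand side rather than bounding it. What is actually needed in the mixed case is the inequality $\frac{2^{\eta}}{\eta} + v' \leq \frac{2^{\eta+v'}}{\eta+v'}$ with $\eta = v-\delta$, together with the observation that $\eta \geq 2$ (since $\eta = 1$ would force $v < 2$, contradicting $2^{2^v}\geq x_0$); the paper proves this by a separate monotonicity argument on $f(x)=2^{\eta+x}-(\frac{2^\eta}{\eta}+x)(\eta+x)$. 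This subcase is not optional --- it arises, e.g., whenever one block is small relative to $\delta$ --- so your proof needs this additional argument before it is complete.
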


\begin{proof}
Let $\delta \in \N$. We aim at applying
Lemma~\ref{lem:Neciporuk's_method_limitation_complexity_lower_bound_meta-result}
which requires four hypotheses, (i) to (iv).

For (i), we set $h$ as $h_{\LNBP_\delta}(x)=
\begin{array}[t]{@{}r@{\,\,}l@{\,\,}l@{}}
\begin{cases}
\max\bigl\{\frac{\log_2 x}{2^\delta (\log_2\log_2(x) - \delta)},
	   \log_2\log_2 x\bigr\} & \text{if $2^{2^{\delta + 1}} \leq x$}\\
\log_2\log_2 x & \text{otherwise}
\displaypunct{.}
\end{cases}
\end{array}$ and $x_0 = 2^8$. One can verify that
$h$ is non-decreasing on $\cointerval{2^8}{+\infty}$.

For (ii), for all $x \in \cointerval{4}{+\infty}$, we have
$h(2^x) \geq \log_2\log_2(2^x) = \log_2 x$.

For (iii), for all $v, v' \in \N$ verifying $2^{2^v} \geq 2^8$ and $2^{2^{v'}}
\geq 2^8$, we need to show that $h(2^{2^v}) + h(2^{2^{v'}}) \leq h(2^{2^{v +
    v'}})$. There are three cases to consider.
\begin{itemize}
    \item
	If $h(2^{2^v}) = \log_2\log_2(2^{2^v}) = v$ and
	$h(2^{2^{v'}}) = \log_2\log_2(2^{2^{v'}}) = v'$, then
	$h(2^{2^v}) + h(2^{2^{v'}}) = v + v' =
	 \log_2\log_2(2^{2^{v + v'}}) \leq h(2^{2^{v + v'}})$.
       \item If $h(2^{2^v}) = \frac{\log_2(2^{2^v})}{2^\delta
           (\log_2\log_2(2^{2^v}) - \delta)} = \frac{2^{v - \delta}}{v -
           \delta} > v$ and $h(2^{2^{v'}}) = \log_2\log_2(2^{2^{v'}}) = v'$, then
         we necessarily have $v = \delta + \eta$ for some $\eta >0$.

	Notice that $\eta \geq 2$ because if $\eta=1$ then $v<2$, a contradiction.

	We conclude by showing that $h(2^{2^v}) + h(2^{2^{v'}}) =  \frac{2^{v - \delta}}{v -
           \delta} + v' \leq \frac{2^{\eta +v'}}{\eta+v'}\leq h(2^{2^{v+v'}})$.

        Only the first inequality is non immediate. To see it, consider the
        function $f(x)=2^{\eta+x} - (\frac{2^\eta}{\eta}+x)(\eta+x)$. A simple
        calculation shows that it is non-decreasing for $x\geq 2$ and
	$\eta\geq 2$.
	The inequality follows as $f(2)$ is non-negative when $\eta\geq 2$.
      \item In the remaining case $h(2^{2^v}) = \frac{2^{v - \delta}}{v -
          \delta}>v$ and $h(2^{2^{v'}}) = \frac{2^{v' - \delta}}{v' -
          \delta}>v'$. It implies that $v \geq \delta + 1$ and $v' \geq \delta +
        1$. Arguing as above we actually have $v \geq \delta + 2$
	and $v' \geq \delta + 2$, otherwise $v$ or $v'$ would be smaller than
        $2$. We then have:

	$h(2^{2^v}) + h(2^{2^{v'}}) =
	 \frac{2^{v - \delta}}{v - \delta} +
	 \frac{2^{v' - \delta}}{v' - \delta} \leq
	 \frac{2^{v + v' - 2 \delta}}{(v - \delta) (v' - \delta)} \leq
	 \frac{2^{v + v' - 2 \delta}}{v + v' - 2 \delta} \leq
	 h(2^{2^{v + v' - \delta}}) \leq h(2^{2^{v + v'}})$.

        The first and second inequality are because $x+y\leq xy$ when both $x$ and $y$ are
        greater than 2 (in the second case we use that $v-\delta \geq 2$ and
        $v'- \delta \geq 2$). The third one is by definition of $h$ and the last
        one by monotonicity of $h$.
\end{itemize}

For (iv), by Proposition~\ref{ptn:Neciporuk's_method_function_limitation_LNBP_size}, we
know that any $b \in \NeciporukSet_{\LNBP_\delta}$ is such that
$b(m) \leq \alpha\cdot h(m)$ for all $m \geq 4$.

We can therefore apply
Lemma~\ref{lem:Neciporuk's_method_limitation_complexity_lower_bound_meta-result}
with $x_0 = 2^8$ and get that for any family of Boolean
functions $F = \{f_n\}_{n \in \N}$ and all $n \in \N, n \geq 8$,
\begin{align*}
\NeciporukLB^{\LNBP_\delta}_F(n)
& \leq \alpha \cdot \bigl(4 + h(2^8)) \cdot
       \frac{n}{\log_2 n} \cdot h(2^n)\\
& = c \cdot \frac{n}{\log_2 n} \cdot
    \begin{cases}
    \max\bigl\{\frac{n}{2^\delta (\log_2(n) - \delta)}, \log_2 n\bigr\} &
    \text{if $2^{2^{\delta + 1}} \leq 2^n$}\\
    \log_2 n & \text{otherwise}
    \end{cases}\\
& = c \cdot
    \begin{cases}
    \max\bigl\{\frac{n^2}{2^\delta (\log_2(n) - \delta) \log_2 n}, n\bigr\} &
    \text{if $2^{\delta + 1} \leq n$}\\
    n & \text{otherwise}
    \end{cases}\\
& = c \cdot \Xix_{\LNBP}(n, \delta)
\displaypunct{.}
\end{align*}

Thus, since this holds for all $\delta \in \N$, we get the desired result.
\end{proof}


\section{Deterministic and Limited Nondeterministic Formulas}
\label{sec:Limited_nondeterministic_formulas}
In this section, we focus on the model of Boolean binary formulas and its
limited nondeterministic variant. $\BF$ is one of the two measures that were
considered in \Neci's original article~\cite{ne66} who gave a
$O(\frac{n^2}{\log_2 n})$ lower bound for this complexity measure. If the
model is restricted to the case of binary formulas where only $2$-ary AND and
OR gates can be used, stronger lower bounds can be proven, the best known for
instance being almost cubic and due to H\r{a}stad (see \cite[Theorem
6.15]{ju12}).  Just as in
Section~\ref{sec:Limited_nondeterministic_branching_programs}, results for the
Ne\v{c}iporuk method for binary formulas are known (see for instance
\cite[Chapter 8, Section 7]{we87}), but we do not know about any
attempt to consider the method in its full generality: an approach that
would explicitly try to find the best Ne\v{c}iporuk function rather than just
giving one, as done in \cite{alzw89} for the case of BPs.

Concerning limited nondeterministic binary formulas, \Neci's lower bound method
never seems to have been applied to the associated complexity measure, at least
in a direct combinatorial sense that excludes Klauck's communication complexity
formulation of the method~\cite{kl07}.

For all $\delta \in \N$, let us define the function
$b_{\LNBF_\delta}\colon \N_{>0} \to \N$ given by 
\[
b_{\LNBF_\delta}(m)  =
\begin{cases}
\ceiling{\frac{1}{4} \max\bigl\{\frac{\log_2 m}{2^\delta},
				      \log_2\log_2 m\bigr\}} &
    \text{if $m \geq 4$}\\
0 & \text{otherwise}
\end{cases}
\]
for all $m \in \N_{>0}$.
We denote by $b_{\BF}$ the case of $b_{\LNBF_0}$.

We first prove that $b_{\BF} \in \NeciporukSet_{\BF}$ and
$b_{\LNBF_\delta} \in \NeciporukSet_{\LNBF_\delta}$. This is similar to the
limited nondeterministic branching program case.

\begin{proposition}
\label{ptn:Neciporuk_LNBF_size_lower_bound}
\label{ptn:Neciporuk_BF_size_lower_bound}
$b_{\LNBF_\delta} \in \NeciporukSet_{\LNBF_\delta}$ for all $\delta \in \N$.
In particular, $b_{\BF} \in \NeciporukSet_{\BF}$.
\end{proposition}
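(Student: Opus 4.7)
The approach parallels the proof of Proposition~\ref{ptn:Neciporuk_LNBP_size_lower_bound}, with branching-program counting replaced by formula counting. Monotonicity of $b_{\LNBF_\delta}$ is immediate from its definition. Given any $\delta$-LNBF $\varphi$ computing an $n$-ary Boolean function $f$, by definition $\varphi$ is a deterministic BF $\varphi'$ on $\{0,1\}^{n+\delta}$ computing a proof-checker $g$ for $f$ (so that $f(\tuple{a}) = \bigvee_{\tuple{b} \in \{0,1\}^\delta} g(\tuple{a}, \tuple{b})$), and $|\varphi| = |\varphi'|$. For any partition $V_1, \ldots, V_p$ of $[n]$, let $s_i$ denote the number of non-constant leaves of $\varphi'$ labeled by variables indexed in $V_i$, and let $q$ be the count over the nondeterministic variables. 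Then $|\varphi| = \sum_{i = 1}^p s_i + q \geq \sum_{i = 1}^p s_i$, so the target inequality reduces to proving $s_i \geq b_{\LNBF_\delta}(r_{V_i}(f))$ for each $i$.

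Fix $i$. Applying Proposition~\ref{lem:Number_of_subfunctions_with_variable_independence}, we restrict $V_i$ to the subset $V_i'$ of variables on which $f$ actually depends, without changing $r_{V_i}(f)$; each such variable is also depended upon by $g$ and hence labels at least one leaf of $\varphi'$, giving $|V_i'| \leq s_i$. In particular $r_{V_i}(f) \leq 2^{2^{s_i}}$, so $s_i \geq \log_2 \log_2 r_{V_i}(f)$, which handles the $\log_2\log_2$ term of the $\max$. For the other term, the key step is a formula-counting bound: for each partial assignment $\rho$ on the complement of $V_i$, substituting $\rho$ into $\varphi'$ and absorbing the resulting constants upward through the binary gates yields a reduced formula computing $g|_\rho$ with at most $s_i$ non-constant leaves, each labeled by a variable in $V_i'$. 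Carefully bounding the number of distinct such reduced formulas, counted up to the function they compute, via the classical Neciporuk argument for binary formulas over an arbitrary basis (as in~\cite[Theorem~6.16]{ju12}), gives $r_{V_i}(g) \leq 2^{4 s_i}$. Combined with Lemma~\ref{lem:Number_of_subfunctions_proof_checker_upper_bound}, which provides $r_{V_i}(f) \leq r_{V_i}(g)^{2^\delta}$, this yields $\log_2 r_{V_i}(f) \leq 4 \cdot 2^\delta \cdot s_i$, hence $s_i \geq \log_2 r_{V_i}(f) / (4 \cdot 2^\delta)$. Taking the maximum of both lower bounds together with the ceiling gives $s_i \geq b_{\LNBF_\delta}(r_{V_i}(f))$, as required; the $\BF$ case is the specialization $\delta = 0$.

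The main obstacle lies in the formula-counting step: obtaining the sharp constant $1/4$ requires the classical Neciporuk trick for formulas, counting distinct \emph{functions} computed by reduced formulas of a given size rather than distinct syntactic formulas (the latter gives only the weaker $\log_2 m / \log_2\log_2 m$-type Neciporuk function noted in the Limitations subsection of Section~3, insufficient to recover the classical $\Omega(n^2/\log_2 n)$ bound). Once that formula-counting bound is established, adapting to limited nondeterminism is essentially routine: the $\delta$ nondeterministic variables can be treated on the same footing as ordinary inputs of $g$, and Lemma~\ref{lem:Number_of_subfunctions_proof_checker_upper_bound} absorbs the $2^\delta$ factor cleanly.
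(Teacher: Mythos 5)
Your proposal is correct and takes essentially the same route as the paper: the same leaf-counting decomposition $|\varphi|\geq\sum_i s_i$, the same use of Proposition~\ref{lem:Number_of_subfunctions_with_variable_independence} and $r_{V_i}(f)\leq 2^{2^{s_i}}$ for the $\log_2\log_2$ term, the same classical path-decomposition count giving $r_{V_i}(g)\leq 2^{4s_i}$ (which the paper likewise imports from \cite[Theorem 6.16]{ju12} and spells out via the subtree of fan-in-2 nodes and its at most $2s_i$ paths, each contributing one of four unary behaviours), and the same use of Lemma~\ref{lem:Number_of_subfunctions_proof_checker_upper_bound} to absorb the $2^\delta$ factor. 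The only divergence is presentational: the paper adds an explicit case split for $4\leq r_{V_i}(f)\leq 2^{2^{\delta+1}}$ (where the $\log_2\log_2$ bound alone already dominates the ceiling), whereas your version runs the counting argument uniformly, which is also valid since $r_{V_i}(f)\geq 4$ forces $s_i\geq 1$.
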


\begin{proof}
Let $\delta \in \N$. It is fairly obvious that $b_{\LNBF_\delta}$ is
non-decreasing.

Let $f$ be a $n$-ary boolean function on $V$ and let $V_1, \ldots, V_p$ a
partition of $V$.  Let $\phi$ be a Boolean $\delta$-LNBF computing $f$ and let
$g$ be the $(n+\delta)$-ary Boolean function computed by $\phi$ when
considering the $\delta$ nondeterministic bits as regular input variables.

For all $i \in [p]$ we will denote by $s_i \in \N$ the number of leaves in
$\phi$ labelled by literals whose variable indices are in $V_i$, as well as
$q \in \N$ the number of leaves in $\phi$ labelled by literals whose variable
indices are not in $V$. It is clear that
$|\phi| = \sum_{i = 1}^p s_i + q \geq \sum_{i = 1}^p s_i$.
To conclude it remains to show that $s_i \geq b_{\LNBF_\delta}(r_{V_i}(f))$ for all
$i \in [p]$.

Fix $i \in [p]$. The claim is obvious if $r_{V_i}(f) \leq 3$ hence we assume
$r_{V_i}(f) \geq 4$. Let $V_i'$ be the subset of $V_i$ containing all indices
of variables on which $f$ depends. Then, by Lemma
\ref{lem:Number_of_subfunctions_with_variable_independence}, $r_{V_i}(f) =
r_{V_i'}(f)$. Moreover for each $l \in V_i'$, $\phi$ contains at least one leaf
labelled by $l$. By Lemma~\ref{lem:Number_of_subfunctions_upper_bound}, it
follows that $r_{V_i}(f) = r_{V_i'}(f) \leq 2^{2^{\card{V_i'}}} \leq
2^{2^{s_i}}$. So we can conclude that $s_i \geq \log_2\log_2(r_{V_i}(f))$.

If $r_{V_i}(f) \leq 2^{2^{\delta + 1}}$, we have
\[
\ceiling{\frac{1}{4} \cdot \frac{\log_2(r_{V_i}(f))}{2^\delta}} \leq
\ceiling{\frac{1}{4} \cdot
	 \frac{\log_2\bigl(2^{2^{\delta + 1}}\bigr)}{2^\delta}} =
1 \leq
\ceiling{\frac{1}{4} \cdot \log_2\log_2(r_{V_i}(f))}
\displaypunct{.}
\]
and therefore $s_i \geq  b_{\LNBF_\delta}(r_{V_i}(f))$.

It remains to consider the case where $r_{V_i}(f) > 2^{2^{\delta + 1}}$.
Notice that this implies $s_i > 0$, as $2^{2^{s_i}} \geq r_{V_i}(f)$.

This part of the proof is taken from classical references, e.g.  \cite[Proof of
Theorem 7.1]{we87} or \cite[Proof of Theorem 6.16]{ju12}. We denote by $T_i$
the sub-tree of $\phi$ consisting of all paths from a leaf with a label in
$V_i$ to the root of $\phi$. This tree has nodes of fan-in $0$, $1$ or $2$ and
is non-empty since $s_i > 0$. Let $W_i$ be the set of nodes of $T_i$ that have
fan-in $2$ and notice that $\card{W_i} \leq s_i - 1$. Let $P_i$ be the set of
paths in $T_i$ starting from a leaf or a node in $W_i$ and ending in a node in
$W_i$ or in the root of $T_i$ and containing no node in $W_i$ as inner node.
Notice that $\card{P_i} \leq 2 \card{W_i} + 1 \leq 2 s_i$.

For any partial assignment $\rho \in \{0, 1\}^{V \setminus V_i \cup [\delta]}$,
we obtain a formula $\phi|_\rho$ of size $s_i$ computing $g|_\rho$ by replacing
each variable in $V \setminus V_i \cup [\delta]$ by the appropriate constant
given by $\rho$. This assignment induces that any part of $\phi|_\rho$
corresponding to a path $p$ in $P_i$, either computes a constant function, or is
the identity or negates its input. Reciprocally any of these four choices on $p$
induces a subfunction of $g$. Hence we have
$r_{V_i}(g) \leq 4^{\card{P_i}}\leq 2^{4s_i}$.

As $g$ is a proof-checker function for $f$, from
Lemma~\ref{lem:Number_of_subfunctions_proof_checker_upper_bound} it follows
that $r_{V_i}(g) \geq r_{V_i}(f)^{\frac{1}{2^\delta}}$, therefore
\[
s_i \geq \frac{1}{4} \log_2(r_{V_i}(g)) \geq
\frac{1}{4} \log_2\Bigl(r_{V_i}(f)^{\frac{1}{2^\delta}}\Bigr) =
\frac{1}{4} \cdot \frac{\log_2(r_{V_i}(f))}{2^\delta}
\displaypunct{.}
\]
Altogether, we have
\[
s_i \geq \frac{1}{4} \max\Bigl\{\frac{\log_2(r_{V_i}(f))}{2^\delta},
				\log_2\log_2(r_{V_i}(f))\Bigr\}
\displaypunct{,}
\]
which implies that $s_i \geq b_{\LNBF_\delta}(r_{V_i}(f))$ as $s_i$ is integral.

In conclusion for any $n$-ary Boolean function $f$ on $V$ and any partition $V_1, \ldots,
V_p$ of $V$, it holds that $\LNBF_\delta(f) \geq \sum_{i = 1}^p
b_{\LNBF_\delta}(r_{V_i}(f))$, hence $b_{\LNBF_\delta} \in
\NeciporukSet_{\LNBF_\delta}$.
\end{proof}

Using this and Lemma \ref{lem:ISA_partitioning}, we can immediately derive the
following asymptotic lower bound on $\NeciporukLB^{\LNBF_{\Delta(n)}}_{\ISA}$.

\begin{proposition}
\label{ptn:ISA_LNBF_size_lower_bound}
\label{ptn:ISA_BF_size_lower_bound}
$\NeciporukLB^{\LNBF_{\Delta(n)}}_{\ISA}(n) \in
 \Omega\bigl(\max\bigl\{\frac{n^2}{2^{\Delta(n)} \log_2 n}, n\bigr\}\bigr)$
for any $\Delta\colon \N \to \N$.
In particular,
$\NeciporukLB^{\BF}_{\ISA}(n) \in \Omega\bigl(\frac{n^2}{\log_2 n}\bigr)$.
\end{proposition}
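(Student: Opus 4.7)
The plan is to follow the same template as the proof of Proposition~\ref{ptn:ISA_LNBP_size_lower_bound}, substituting the \Neci\ function for $\LNBF_\delta$ for the one for $\LNBP_\delta$. Concretely, fix $\Delta\colon \N \to \N$ and work with $n$ sufficiently large (say $n \geq 32$ so that Lemma~\ref{lem:ISA_partitioning} applies). By Proposition~\ref{ptn:Neciporuk_LNBF_size_lower_bound} we have $b_{\LNBF_{\Delta(n)}} \in \NeciporukSet_{\LNBF_{\Delta(n)}}$, so it suffices to exhibit a single partition that realizes the advertised lower bound.

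For the partition, I would invoke Lemma~\ref{lem:ISA_partitioning} to obtain blocks $V_1, \ldots, V_p, U$ of $[n]$ with $p \geq \frac{1}{32}\cdot\frac{n}{\log_2 n}$ and $r_{V_i}(\ISA_n) = 2^q$ for all $i \in [p]$, where $q \geq \frac{n}{16}$. Then by the very definition of $\NeciporukLB^{\LNBF_{\Delta(n)}}_{\ISA}$ and the fact that $b_{\LNBF_{\Delta(n)}}$ is non-negative, we obtain
\[
\NeciporukLB^{\LNBF_{\Delta(n)}}_{\ISA}(n) \geq \sum_{i=1}^{p} b_{\LNBF_{\Delta(n)}}\bigl(2^q\bigr) \geq p \cdot \frac{1}{4} \max\Bigl\{\frac{q}{2^{\Delta(n)}},\, \log_2 q\Bigr\}.
\]

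Plugging in $p \geq \frac{1}{32}\cdot \frac{n}{\log_2 n}$ and $q \geq \frac{n}{16}$ (and absorbing the resulting constants, together with the observation $\log_2(n/16) \geq \frac{1}{16}\log_2 n$ for $n$ large enough), this bound becomes
\[
\Omega\Bigl(\frac{n}{\log_2 n}\cdot \max\Bigl\{\frac{n}{2^{\Delta(n)}},\, \log_2 n\Bigr\}\Bigr) = \Omega\Bigl(\max\Bigl\{\frac{n^2}{2^{\Delta(n)}\log_2 n},\, n\Bigr\}\Bigr),
\]
which is exactly the claimed bound. The ``in particular'' case follows by specializing $\Delta \equiv 0$, which yields $\NeciporukLB^{\BF}_{\ISA}(n) \in \Omega\bigl(\frac{n^2}{\log_2 n}\bigr)$ since the $n^2/\log_2 n$ branch dominates $n$.

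There is essentially no obstacle: all the combinatorial heavy lifting (the partition count for \ISA\ and the verification that $b_{\LNBF_\delta}$ is a valid \Neci\ function) has already been done in Lemma~\ref{lem:ISA_partitioning} and Proposition~\ref{ptn:Neciporuk_LNBF_size_lower_bound}. The only mild care needed is in handling the $\max$ inside $b_{\LNBF_{\Delta(n)}}$ correctly and checking that the threshold condition $2^q \geq 2^{2^{\Delta(n)+1}}$ in the definition of $b_{\LNBF_{\Delta(n)}}$ is irrelevant to the asymptotic conclusion (when it fails, the second term $\log_2\log_2(2^q) = \log_2 q$ in the $\max$ already gives the $n$ term in the claimed $\Omega$-bound after multiplication by $p$).
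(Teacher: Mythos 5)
Your proposal is correct and follows essentially the same route as the paper: invoke Lemma~\ref{lem:ISA_partitioning} for the partition, apply $b_{\LNBF_{\Delta(n)}}$ blockwise, and use $\log_2(n/16) \geq \frac{1}{16}\log_2 n$ to absorb constants. The only (harmless) slip is your closing remark about a threshold condition $2^q \geq 2^{2^{\Delta(n)+1}}$ in the definition of $b_{\LNBF_\delta}$ --- unlike $b_{\LNBP_\delta}$, that function takes the $\max$ unconditionally for all $m \geq 4$, so no case analysis is needed.
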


\begin{proof}
Let $\Delta\colon \N \to \N$.
Let $n \in \N, n \geq 32$.
Let $V_1, \ldots, V_p, U$ be a partition of $V$ such that
$r_{V_i}(\isan) = 2^q$ for all $i \in [p]$ where $p, q \in \N_{>0}$ verify
$p \geq \frac{1}{32} \cdot \frac{n}{\log_2 n}$ and $q \geq \frac{n}{16}$ as
given by Lemma~\ref{lem:ISA_partitioning}.
We have
\begin{align*}
\NeciporukLB^{\LNBF_{\Delta(n)}}_{\ISA}(n)
&\geq \sum_{i = 1}^p b_{\LNBF_{\Delta(n)}}(r_{V_i}(\isan)) +
       b_{\LNBF_{\Delta(n)}}(r_U(\isan))\\
& \geq \sum_{i = 1}^p \frac{1}{4} \cdot 
		      \max\Bigl\{\frac{\log_2(2^q)}{2^{\Delta(n)}},
				 \log_2\log_2(2^q)\Bigr\}\\
& = p \cdot \frac{1}{4} \cdot
    \max\Bigl\{\frac{q}{2^{\Delta(n)}}, \log_2 q\Bigr\}\\
& \geq c_1 \cdot \frac{n}{\log_2 n} \cdot
       \max\Bigl\{\frac{n}{16 \cdot 2^{\Delta(n)}},
		  \log_2\Bigl(\frac{n}{16}\Bigr)\Bigr\}\\
& \geq c_2 \cdot
    \max\Bigl\{\frac{n^2}{2^{\Delta(n)} \log_2 n}, n\Bigr\} \text{~~~because }
    n\geq 32 \text{ implies }\log_2(\frac{n}{16}) \geq \frac{\log_2 n}{16}
\displaypunct{.}
\end{align*}
\end{proof}

We now show that for all $\delta \in \N$,
$b_{\LNBF_\delta}$ is in fact an asymptotically largest function in
$\NeciporukSet_{\LNBF_\delta}$. To this end, we appeal to the
upper bound on $\LNBF_\delta(\isakl)$ from Theorem~\ref{thm:all_uppers} and
apply Lemma~\ref{lem:Neciporuk's_method_function_limitation_meta-result}.

\begin{proposition}
\label{ptn:Neciporuk's_method_function_limitation_LNBF_size}
\label{ptn:Neciporuk's_method_function_limitation_BF_size}
There exists a constant $c \in \R_{>0}$ verifying that for each $\delta \in \N$,
any $b \in \NeciporukSet_{\LNBF_\delta}$ is such that
$b(m) \leq c \cdot b_{\LNBF_\delta}(m)$ for all $m \in \N, m \geq 4$.
\end{proposition}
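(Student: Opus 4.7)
The plan mirrors the strategy used for $\LNBP_\delta$ in Proposition~\ref{ptn:Neciporuk's_method_function_limitation_LNBP_size}: I will invoke Lemma~\ref{lem:Neciporuk's_method_function_limitation_meta-result} together with the upper bound on $\LNBF_\delta(\isakk)$ supplied by Theorem~\ref{thm:all_uppers} to cap every $b \in \NeciporukSet_{\LNBF_\delta}$, and then massage the resulting bound into the shape of $b_{\LNBF_\delta}$, all while keeping the constants uniform in $\delta$.

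First, I would specialise the bound $\LNBF_\delta(\isakl) \leq 12 \cdot 2^k \max\{2^{\ell - \delta}, \ell\} + 3 \cdot 2^\ell$ from Theorem~\ref{thm:all_uppers} to $\ell = k$. Since $\max\{2^{k - \delta}, k\} \geq 1$ for every $k, \delta \in \N_{>0}$, the additive $3 \cdot 2^k$ term is absorbed into the leading term, yielding $\LNBF_\delta(\isakk) \leq 15 \cdot 2^k \max\{2^{k - \delta}, k\}$. I then define $g_\M\colon \cointerval{1}{+\infty} \to \R_{\geq 0}$ by $g_\M(x) = 15 \cdot 2^x \max\{2^{x - \delta}, x\}$, which is non-decreasing as a product of non-decreasing functions of $x$.

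The key technical check is that $g_\M(k+1) / g_\M(k)$ is bounded by an absolute constant $\alpha$, independent of both $k$ and $\delta$. A short case analysis on which argument realises the $\max$ at $k$ and $k+1$ does it: when both maxima equal the exponential term the ratio is $2 \cdot 2 = 4$; when both equal the linear term it is $2 \cdot (k+1)/k \leq 4$ for $k \geq 1$; and at the transition point where $2^{k - \delta} \leq k$ but $2^{k+1-\delta} > k+1$, the inequality $2^{k+1-\delta} \leq 2k$ (from $2^{k - \delta} \leq k$) again gives a ratio of at most $4$. So one may take $\alpha = 4$.

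Applying Lemma~\ref{lem:Neciporuk's_method_function_limitation_meta-result}, I then get that every $b \in \NeciporukSet_{\LNBF_\delta}$ satisfies
\[
b(m) \leq 4 \cdot \frac{g_\M(\log_2 \log_2 m)}{\log_2 m}
     = 60 \cdot \max\Bigl\{\frac{\log_2 m}{2^\delta}, \log_2 \log_2 m\Bigr\}
\]
for all $m \in \N, m \geq 4$, after cancelling the factor $2^{\log_2 \log_2 m} = \log_2 m$ in the numerator. Since $b_{\LNBF_\delta}(m) = \ceiling{\tfrac{1}{4} \max\{\log_2(m) / 2^\delta, \log_2 \log_2 m\}}$, the right-hand side is bounded by $c \cdot b_{\LNBF_\delta}(m)$ for some sufficiently large absolute constant $c$, which yields the proposition. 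The only genuinely non-mechanical step is the uniform-in-$\delta$ bound on the ratio $g_\M(k+1)/g_\M(k)$ at the transition between the two regimes of the $\max$; the rest is a transcription of the $\LNBP_\delta$ argument.
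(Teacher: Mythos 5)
Your proposal is correct and follows essentially the same route as the paper: specialise the Theorem~\ref{thm:all_uppers} bound to $\ell=k$, absorb the additive term into $g(x)=15\cdot 2^x\max\{2^{x-\delta},x\}$, verify $g(k+1)/g(k)\leq 4$ uniformly in $\delta$, and apply Lemma~\ref{lem:Neciporuk's_method_function_limitation_meta-result} to obtain $b(m)\leq 60\max\{\log_2(m)/2^\delta,\log_2\log_2 m\}\leq c\cdot b_{\LNBF_\delta}(m)$. Your case analysis at the transition point of the $\max$ is a slightly more explicit version of the check the paper leaves implicit, but the argument is the same.
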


\begin{proof}
  Fix $\delta \in \N$.  Let $g\colon \cointerval{1}{+\infty} \to \R_{\geq 0}$
  be the non-decreasing function defined by $g(x) = 15 \cdot 2^x \cdot
  \max\{2^{x - \delta}, x\}$.

  Notice that $\frac{g(k + 1)}{g(k)} \leq 4$ for
  all $k \in \N_{>0}$.

 Moreover, from Theorem~\ref{lem:ISA_LNBF_size_upper_bound} we have:
\[
\LNBF_\delta(\isakk)
\leq 12 \cdot 2^k \cdot \max\{2^{k - \delta}, k\} + 3 \cdot 2^k
\leq g(k).
\]
Therefore, by Lemma
\ref{lem:Neciporuk's_method_function_limitation_meta-result}, any
$b\in \NeciporukSet_{\LNBF_\delta}$ verifies
\begin{align*}
b(m)
& \leq 4 \cdot \frac{g(\log_2\log_2 m)}{\log_2 m}\\
& = 4 \cdot
    \frac{15 \cdot 2^{\log_2\log_2 m} \cdot
	  \max\{2^{\log_2\log_2(m) - \delta}, \log_2\log_2 m\}}
	 {\log_2 m}\\
& = 60 \cdot \max\Bigl\{\frac{\log_2 m}{2^\delta}, \log_2\log_2 m\Bigr\}
\end{align*}
for all $m \geq 4$.
\end{proof}

Finally, using this and
Lemma~\ref{lem:Neciporuk's_method_limitation_complexity_lower_bound_meta-result},
we show that asymptotically the greatest lower
bound we may expect using \Neci's method is the one obtained for $\ISA$
in Proposition~\ref{ptn:ISA_LNBF_size_lower_bound}.

\begin{theorem}
\label{thm:Neciporuk's_method_limitation_LNBF_size_lower_bound}
\label{thm:Neciporuk's_method_limitation_BF_size_lower_bound}
For any family of Boolean functions $F = \{f_n\}_{n \in \N}$ and any
$\Delta\colon \N \to \N$,
$\NeciporukLB^{\LNBF_{\Delta(n)}}_F(n) \in
 \Omicron\bigl(\max\bigl\{\frac{n^2}{2^{\Delta(n)} \log_2 n},
 n\bigr\}\bigr)$.

In particular,
$\NeciporukLB^{\BF}_F(n) \in \Omicron\bigl(\frac{n^2}{\log_2 n}\bigr)$.
\end{theorem}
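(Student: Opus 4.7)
The plan is to invoke Lemma~\ref{lem:Neciporuk's_method_limitation_complexity_lower_bound_meta-result} for each fixed $\delta \in \N$ and then specialize $\delta = \Delta(n)$ to obtain the stated bound. Exactly as in the proofs of Theorems~\ref{thm:Neciporuk's_method_limitation_NBP_size_lower_bound} and~\ref{thm:Neciporuk's_method_limitation_LNBP_size_lower_bound}, the crux is to choose an auxiliary function $h_{\LNBF_\delta}$ together with a threshold $x_0$ that (a) upper bounds every $b \in \NeciporukSet_{\LNBF_\delta}$ up to a uniform constant, and (b) satisfies the super-additivity condition~(iii) cleanly.

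Instead of using the $\max$-form that appears inside $b_{\LNBF_\delta}$ (which would require the delicate case analysis carried out in the LNBP proof), I would work with the additive expression
\[ h_{\LNBF_\delta}(x) = \frac{\log_2 x}{2^\delta} + \log_2\log_2 x \]
on $\cointerval{4}{+\infty}$, and take $x_0 = 2^8$. Since $\max\{a, b\} \leq a + b$ for non-negative $a, b$, Proposition~\ref{ptn:Neciporuk's_method_function_limitation_LNBF_size} directly gives a uniform $\alpha$ (independent of $\delta$) such that every $b \in \NeciporukSet_{\LNBF_\delta}$ satisfies $b(m) \leq \alpha \cdot h_{\LNBF_\delta}(m)$, which is hypothesis~(iv). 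Hypothesis~(i) is immediate since both summands are non-decreasing, and hypothesis~(ii) follows from $h_{\LNBF_\delta}(2^x) = x / 2^\delta + \log_2 x \geq \log_2 x$.

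The payoff of the sum formulation appears in hypothesis~(iii): evaluating $h_{\LNBF_\delta}(2^{2^v}) = 2^v / 2^\delta + v$, I get
\[ h_{\LNBF_\delta}(2^{2^v}) + h_{\LNBF_\delta}(2^{2^{v'}}) = \frac{2^v + 2^{v'}}{2^\delta} + (v + v') \leq \frac{2^{v + v'}}{2^\delta} + (v + v') = h_{\LNBF_\delta}(2^{2^{v + v'}}) \]
using $2^v + 2^{v'} \leq 2^{v + v'}$ whenever $v, v' \geq 1$, which is ensured by $2^{2^v}, 2^{2^{v'}} \geq 2^8$. Crucially, $h_{\LNBF_\delta}(2^8) = 8/2^\delta + 3 \leq 11$ is bounded uniformly in $\delta$, so the multiplicative constant $\alpha \cdot (4 + h_{\LNBF_\delta}(\floor{x_0}))$ produced by the meta-result is itself uniform in $\delta$.

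Applying Lemma~\ref{lem:Neciporuk's_method_limitation_complexity_lower_bound_meta-result} then yields, with $h_{\LNBF_\delta}(2^n) = n/2^\delta + \log_2 n$,
\[ \NeciporukLB^{\LNBF_\delta}_F(n) = \Omicron\!\left(\frac{n}{\log_2 n} \cdot \left(\frac{n}{2^\delta} + \log_2 n\right)\right) = \Omicron\!\left(\frac{n^2}{2^\delta \log_2 n} + n\right), \]
which is exactly $\Omicron(\max\{n^2/(2^\delta \log_2 n), n\})$. Specializing $\delta = \Delta(n)$ for each $n$ is permitted precisely because the constant hidden in the big-$\Omicron$ does not depend on $\delta$, and the final sentence of the theorem follows by taking $\Delta \equiv 0$. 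I do not anticipate any serious obstacle; the only real technical choice is using a sum rather than a max for $h_{\LNBF_\delta}$, which trades a constant factor in hypothesis~(iv) for a one-line verification of hypothesis~(iii).
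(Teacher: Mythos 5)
Your proposal is correct and follows the paper's proof almost exactly: both invoke Lemma~\ref{lem:Neciporuk's_method_limitation_complexity_lower_bound_meta-result} with $x_0 = 2^8$, both draw hypothesis~(iv) from Proposition~\ref{ptn:Neciporuk's_method_function_limitation_LNBF_size}, and both note that the resulting constant is uniform in $\delta$ so that $\delta = \Delta(n)$ may be substituted pointwise. The one genuine difference is your choice of $h_{\LNBF_\delta}$ as the \emph{sum} $\frac{\log_2 x}{2^\delta} + \log_2\log_2 x$ rather than the paper's \emph{max} of the same two terms. The paper pays for the max with a two-case verification of the superadditivity condition~(iii) (the mixed case requiring the observation that $2^{v-\delta} > v \geq 2$ forces $x + y \leq xy$); your sum form reduces (iii) to the single inequality $2^v + 2^{v'} \leq 2^{v+v'}$, valid since $2^{2^v}, 2^{2^{v'}} \geq 2^8$ forces $v, v' \geq 3$. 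The cost is only a factor of $2$ in the final constant (via $\max\{a,b\} \leq a+b \leq 2\max\{a,b\}$), which is irrelevant for the $\Omicron$ statement, and $h_{\LNBF_\delta}(2^8) = 8/2^\delta + 3 \leq 11$ remains uniformly bounded in $\delta$ as needed. All four hypotheses check out as you state them, so your variant is a clean, slightly more economical rendering of the same argument.
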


\begin{proof}
Fix $\delta \in \N$. We aim at applying
Lemma~\ref{lem:Neciporuk's_method_limitation_complexity_lower_bound_meta-result}
which requires four hypotheses.

For (i), let $h\colon \cointerval{4}{+\infty} \to \R_{\geq 0}$ be the function
defined by
$$h(x) = \max\bigl\{\frac{\log_2 x}{2^\delta}, \allowbreak
		   \log_2\log_2 x\bigr\}$$
and $x_0 = 2^8$; as required, $h$ is
non-decreasing on $\cointerval{2^8}{+\infty}$.

For (ii), for all $x \in \cointerval{4}{+\infty}$, we have
$h(2^x) \geq \log_2\log_2(2^x) = \log_2 x$.

For (iii), for all $v, v' \in \N$ verifying  $2^{2^v} \geq 2^8$ and
$2^{2^{v'}} \geq 2^8$, we have
$h(2^{2^v}) + h(2^{2^{v'}}) \leq h(2^{2^{v + v'}})$. Indeed, let $v, v' \in \N$
such that  $2^{2^v} \geq 2^8$ and $2^{2^{v'}} \geq 2^8$, there are two cases to
consider.
\begin{itemize}
\item
If $h(2^{2^v}) = \log_2\log_2(2^{2^v}) = v$ and
$h(2^{2^{v'}}) = \log_2\log_2(2^{2^{v'}}) = v'$, then
$h(2^{2^v}) + h(2^{2^{v'}}) = v + v' = \log_2\log_2(2^{2^{v + v'}}) \leq
 h(2^{2^{v + v'}})$.
\item
Otherwise, there is at least one $w \in \{v, v'\}$ such that
$h(2^{2^w}) = \frac{\log_2(2^{2^w})}{2^\delta} = 2^{w - \delta}$: assume without
loss of generality that it is $v$.
Then, since $2^{2^v} \geq 16$, we have $v \geq 2$, so by hypothesis, it follows
that $2^{v - \delta} > v \geq 2$. Moreover,
$h(2^{2^{v'}}) = \max\{2^{v' - \delta}, v'\} \leq 2^{v'}$, so using our usual
observation about the relationship between the sum and the product of two real
numbers greater than or equal to $2$, we get $h(2^{2^v}) + h(2^{2^{v'}}) \leq 2^{v - \delta} + 2^{v'} \leq
 2^{v + v' - \delta} = \frac{\log_2(2^{2^{v + v'}})}{2^\delta} \leq
 h(2^{2^{v + v'}})$.
\end{itemize}

For (iv), by Proposition~\ref{ptn:Neciporuk's_method_function_limitation_LNBF_size}, we
know that any $b \in \NeciporukSet_{\LNBF_\delta}$ is such that
$b(m) \leq \alpha\cdot h(m)$ for all $m \geq 4$.

Therefore, by
Lemma~\ref{lem:Neciporuk's_method_limitation_complexity_lower_bound_meta-result},
for any family of Boolean functions $F = \{f_n\}_{n \in \N}$ and all $n\geq 8$, we
have:
\begin{align*}
\NeciporukLB^{\LNBF_\delta}_F(n)
& \leq \alpha \cdot \bigl(4 + h(\floor{2^8})\bigr) \cdot
       \frac{n}{\log_2 n} \cdot h(2^n)\\
& \leq c \cdot \frac{n}{\log_2 n} \cdot
    \max\Bigl\{\frac{n}{2^\delta}, \log_2 n\Bigr\}\\
& = c \cdot \max\Bigl\{\frac{n^2}{2^\delta \log_2 n}, n\Bigr\}
\displaypunct{.}
\end{align*}
\end{proof}


\section{Conclusion}
We have proposed a general interpretation of what it means to say ``the method
of \Neci''. We have applied the method to several complexity measures, as
reported in Table~\ref{tbl:Summary_lower_bounds}, and shown in particular that
the limitations of the method are very much determined by the complexity of the
Indirect Storage Access function under each measure, at least for those we
studied in this paper.
Note that our focus was not on optimizing the constant factors in
the bounds obtained, most of which can certainly be improved.

Our abstract definition of a \Neci function is inspired by Alon and
Zwick~\cite{alzw89}. It has the benefit of not specifying the way in which such
a function is obtained, be it some ``semantic count'' of the number of different
Boolean functions computable with a given cost, some ``syntactic count'' of the
number of different devices of that cost as done usually, or any other
technique.
While in the literature, ``\Neci-style theorems'' refer to giving an explicit
\Neci function as defined in step 1 in
Definition~\ref{defi-neci}~\cite{we87,ju12,alzw89}, it is natural to ask
whether we could even further twist the definition of a \Neci function to get
more out of the method. 

Looking at our meta-results and how we draw the limitation results for \Neci's
method used for a specific measure $\M$, namely using an upper bound on the
$\isakk$ function for all $k \in \N_{>0}$, we observe that the main weakness of
the method is that a \Neci function for $\M$ should verify the conditions
presented in step 1 of Definition \ref{defi-neci} for \emph{any} Boolean
function. The natural question is therefore whether restricting the class of
Boolean functions for which these conditions should be verified by a \Neci
function for $\M$ would allow to get stronger \Neci functions (and thus, lower
bounds) for $\M$ for this specific class of Boolean functions. This seems to be
an interesting question to us, but is not treated in this paper.

\bibliographystyle{splncs_srt}
\bibliography{all}

\begin{thebibliography}{10}

\bibitem{alzw89}
Alon, N., Zwick, U.:
\newblock On {N}e{\u{c}}iporuk's theorem for branching programs.
\newblock Theor. Comput. Sci. \textbf{64}(3) (1989)  331--342

\bibitem{bosi90}
Boppana, R.B., Sipser, M.:
\newblock The complexity of finite functions.
\newblock In {van Leeuwen}, J., ed.: Handbook of Theoretical Computer Science.
  Volume~A.
\newblock Elsevier (1990)  757--804

\bibitem{co66}
Cobham, A.:
\newblock The recognition problem for the set of perfect squares.
\newblock In: 7th Annual Symposium on Switching and Automata Theory, Berkeley,
  California, USA, October 23-25, 1966. (1966)  78--87

\bibitem{golemu96}
Goldsmith, J., Levy, M.A., Mundhenk, M.:
\newblock Limited nondeterminism.
\newblock {SIGACT} News \textbf{27}(2) (1996)  20--29

\bibitem{hs03:limited-advice}
Hromkovic, J., Schnitger, G.:
\newblock Nondeterministic communication with a limited number of advice bits.
\newblock {SIAM} J. Comput. \textbf{33}(1) (2003)  43--68

\bibitem{juk01}
Jukna, S.:
\newblock Extremal combinatorics - with applications in computer science.
\newblock Springer (2001)

\bibitem{ju12}
Jukna, S.:
\newblock Boolean function complexity: advances and frontiers. Volume~27.
\newblock springer (2012)

\bibitem{kawi93}
Karchmer, M., Wigderson, A.:
\newblock On span programs.
\newblock In: Proceedings 8th Structure in Complexity Theory, IEEE Computer
  Society Press (1993)  102--111

\bibitem{kl07}
Klauck, H.:
\newblock One-way communication complexity and the {N}e{\u{c}}iporuk lower
  bound on formula size.
\newblock SIAM J. Comput. \textbf{37}(2) (2007)  552--583

\bibitem{lup58}
Lupanov, O.B.:
\newblock A method of circuit synthesis.
\newblock Izvestia V.U.Z. Radiofizika \textbf{1} (1958)  120--140

\bibitem{ma76}
Masek, W.:
\newblock A fast algorithm for string editing problem and decision graph
  complexity.
\newblock Technical report, Massachussetts Institute of Technology (1976)

\bibitem{ne62}
Ne{\u{c}}iporuk, {\`{E}}.:
\newblock On the complexity of schemes in some bases containing nontrivial
  elements with zero weights.
\newblock Problemy Kibernetiki \textbf{8} (1962)  123--160 (in Russian).

\bibitem{ne66}
Ne{\u{c}}iporuk, {\`{E}}.:
\newblock On a boolean function.
\newblock Doklady of the Academy of the USSR \textbf{169}(4) (1966)  765--766
  Translation: \emph{Soviet Math. Doklady} 7:4, pp.\ 999-1000.

\bibitem{pau78}
Paul, W.:
\newblock Komplexit{{{{{{{{{{{\"a}}}}}}}}}}}tstheorie.
\newblock Leitf{{{{{{{{{{{\"a}}}}}}}}}}}den der angewandten Mathematik und
  Mechanik LAMM. Teubner Studienb{{{{{{{{{{{\"u}}}}}}}}}}}cher (1978)

\bibitem{pu87}
Pudl{\'a}k, P.:
\newblock The hierarchy of boolean circuits.
\newblock Computers and artificial intelligence \textbf{6}(5) (1987)  449--468

\bibitem{ra91}
Razborov, A.:
\newblock Lower bounds for deterministic and nondeterministic branching
  programs.
\newblock In: 8th Internat. Symp. on Fundamentals of Computation Theory. (1991)
   47--60

\bibitem{sav76}
Savage, J.E.:
\newblock The Complexity of Computing.
\newblock John Wily, New York (1976)

\bibitem{we87}
Wegener, I.:
\newblock The Complexity of Boolean Functions.
\newblock Wiley-Teubner series in computer science. B. G. Teubner \& John
  Wiley, Stuttgart (1987)

\bibitem{we00}
Wegener, I.:
\newblock Branching Programs and Binary Decision Diagrams: Theory and
  Applications.
\newblock SIAM Monographs on Discrete Mathematics and Applications. SIAM,
  Philadelphia (2000)

\end{thebibliography}
\end{document}